\pgfplotsset{compat=1.17}
  \protected@write\@auxout{}{%
    \string\@restatetheorem{#1}{\detokenize\expandafter{\BODY}}%
  }%
\BODY\end{theorem}%
\newcommand{\@restatetheorem}[2]{%
  \expandafter\gdef\csname restatethis@#1\endcsname{#2}%
}
\newcommand{\restateTheorem}[2]{%
  \begingroup
  \renewcommand{\thetheorem}{\ref{#1}}%
  \begin{theorem}[#2] \csname restatethis@#1\endcsname\end{theorem}%
  \endgroup
}
  \protected@write\@auxout{}{%
    \string\@restatecorollary{#1}{\detokenize\expandafter{\BODY}}%
  }%
\BODY\end{corollary}%
\newcommand{\@restatecorollary}[2]{%
  \expandafter\gdef\csname restatethis@#1\endcsname{#2}%
}
\newcommand{\restateCorollary}[2]{%
  \begingroup
  \renewcommand{\thecorollary}{\ref{#1}}%
  \begin{corollary}[#2] \csname restatethis@#1\endcsname\end{corollary}%
  \endgroup
}
  \theoremstyle{acmplain}
  \newtheorem{sublemma}{Sublemma}
  \newtheorem{observation}[theorem]{Observation}
  \theoremstyle{acmdefinition}
  \newtheorem{definitions}[theorem]{Definitions}
\newenvironment{labeledProof}[1][]{\begin{proof}[#1]}{\end{proof}}
\newcommand{\junk}[1]{}
\newcommand{\E}{\operatorname{E}}
\newcommand{\giv}{\,|\,}
\newcommand{\Rp}{\mathbb{R}_{+}}
\newcommand{\eps}{\varepsilon}
\newcommand{\calN}{\mathcal N}
\newcommand{\mysf}[1]{\ensuremath{\operatorname{\sf #1}}}
\newcommand{\mymathname}[1]{\mysf{#1}}
\newcommand{\dsop}[1]{\textsf{#1}}
\newcommand{\wt}{\mymathname{wt}}
\newcommand{\credit}{\mymathname{credit}}
\newcommand{\cost}{\mymathname{cost}}
\newcommand{\OPT}{\mymathname{OPT}\xspace}
\newcommand{\nonred}[1]{\mymathname{nonred}(#1)}
\newcommand{\interval}[2]{[#1, #2]}
\newcommand{\inp}{I}
\newcommand{\inputs}[2]{\mathcal I_{#1}^{#2}}
\newcommand{\out}{\mathcal C}
\newcommand{\univ}[1]{{U_{#1}}}
\newcommand{\ListLengths}{\setlength{\itemsep}{0ex}\setlength{\topsep}{0ex}\setlength{\partopsep}{0ex}}
\newcounter{listCounter}
\newlist{steps}{enumerate}{7}
\setlist[steps]{wide,
  topsep=1pt,
  parsep=2pt,
  partopsep=1pt,
  itemsep=0pt,
  align=left,
  labelindent=0pt,
  itemindent=0pt,
  leftmargin=0pt, 
  rightmargin=0pt,
  labelsep=4pt,
  labelwidth=0pt,
}
\setlist[steps,1]{
  leftmargin=6pt, 
  label=\small\arabic*.,
  ref=\arabic*
}
\setlist[steps,2]{
  label=\small\arabic{stepsi}.\arabic*.,
  ref=\arabic{stepsi}.\arabic*
}
\setlist[steps,3]{
  label=\small\arabic{stepsi}.\arabic{stepsii}.\arabic*.,
  ref=\arabic{stepsi}.\arabic{stepsii}.\arabic*
}
\setlist[steps,4]{
  label=\small\arabic{stepsi}.\arabic{stepsii}.\arabic{stepsiii}.\arabic*.,
  ref=\arabic{stepsi}.\arabic{stepsii}.\arabic{stepsiii}.\arabic*
}
\setlist[steps,5]{
  label=\small\arabic{stepsi}.\arabic{stepsii}.\arabic{stepsiii}.\arabic{stepsiv}.\arabic*.,
  ref=\arabic{stepsi}.\arabic{stepsii}.\arabic{stepsiii}.\arabic{stepsiv}.\arabic*
}
\setlist[steps,6]{
  label=\small\arabic{stepsi}.\arabic{stepsii}.\arabic{stepsiii}.\arabic{stepsiv}.\arabic{stepsv}.\arabic*.,
  ref=\arabic{stepsi}.\arabic{stepsii}.\arabic{stepsiii}.\arabic{stepsiv}.\arabic{stepsv}.\arabic*
}
\setlist[steps,7]{
  label=\small\arabic{stepsi}.\arabic{stepsii}.\arabic{stepsiii}.\arabic{stepsiv}. \arabic{stepsv}. \arabic{stepsvi}.\arabic*.,
  ref=\arabic{stepsi}.\arabic{stepsii}.\arabic{stepsiii}.\arabic{stepsiv}.\arabic{stepsv}. \arabic{stepsvi}.\arabic*
}
\newcommand{\step}[1][]{\item\maybeLabel{#1}}
\newcommand{\skipitem}{\refstepcounter{\@enumctr}}
\newcommand{\maybeLabel}[1]{\ifthenelse{\equal{#1}{}}{}{\label{#1}}}
\newcommand{\mycomment}[1]{\hfill{\footnotesize\emph{#1}}}
\newlength{\boxUnit}
\newlength{\boxPadding}
\newcommand{\iTemplate}[3]{
  \trimbox{-\boxPadding 0pt -\boxPadding 0pt}{%
    \begin{tikzpicture}[anchor=base, baseline]
      \node[#1,
        thin,
        text=black,
        text width={#2\boxUnit-\boxPadding-\boxPadding},
        inner sep=0mm,
        outer sep=0mm,
        minimum height={3ex},
        align=center,
        ] at (0,0) {\strut \hfill $ #3 $ \hfill \strut};
    \end{tikzpicture}%
  }}
\newcommand{\iBox}[2]{\iTemplate{draw,color=gray,fill=gray!10}{#1}{#2}}
\newcommand{\iGhost}[2]{\iTemplate{draw,color=gray!40}{#1}{#2}}
\newcommand{\iFix}[2]{\iTemplate{}{#1}{#2}}
\newcommand{\closedopen}[1]{[#1)}
\begin{document} 

\title{Competitive Data-Structure Dynamization}%
\titlenote{A conference version of this paper appeared in SODA 2021~\cite{mathieu2021competitive}.
The journal version was published in ACM Transactions on Algorithms,
\url{https://doi.org/10.1145/3672614} .}



\author{Claire Mathieu}
\email{Claire.Mathieu@irif.fr}
\affiliation{%
  \institution{CNRS}
  \city{Paris}
  \country{France}
}
\orcid{0000-0002-0517-112X}

\author{Rajmohan Rajaraman}
\affiliation{%
  \institution{Northeastern University}
  \city{Boston}
  \state{Massachusetts}
  \country{USA}
}
\email{rraj@ccs.neu.edu}
\authornote{Supported by NSF grants 1535929 and 1909363.}
\orcid{0009-0005-3610-9918}

\author{Neal E.~Young}
\affiliation{%
  \institution{University of California Riverside}
  \city{Riverside}
  \state{California}
  \country{USA}
}
\orcid{0000-0001-8144-3345}
\email{neal.young@ucr.edu}
\authornote{Supported by Google Research Award \& NSF grant 1619463.}

\author{Arman Yousefi}
\affiliation{%
 \institution{Google}
 \city{Los Angeles}
 \state{California}
 \country{USA}
}
\orcid{0000-0002-8760-539X}
\email{armany@google.com}



\begin{abstract}
  \emph{Data-structure dynamization}
  is a general approach for making static data structures dynamic. 
  It is used extensively in geometric settings
  and in the guise of so-called \emph{merge (or compaction)} policies
  in big-data databases
  such as LevelDB and Google Bigtable.
  Previous theoretical work is based on worst-case analyses
  for uniform inputs --- insertions of one item at a time and non-varying read rate.
  In practice, merge policies must not only handle batch insertions and varying read/write ratios,
  they can take advantage of such non-uniformity to reduce cost on a per-input basis.

  To model this,
  we initiate the study of data-structure dynamization through the lens of competitive analysis,
  via two new online set-cover problems.
  For each, the input is a sequence of disjoint sets of weighted items.
  The sets are revealed one at a time.
  The algorithm must respond to each with a set cover
  that covers all items revealed so far.
  It obtains the cover incrementally from the previous cover
  by adding one or more sets
  and optionally removing existing sets.
  For each new set the algorithm incurs  \emph{build cost}
  equal to the weight of the items in the set.
  In the first problem
  the objective is to minimize total build cost
  plus total \emph{query cost}, 
  where the algorithm incurs a query cost at each time $t$
  equal to the current cover size.
  In the second problem,
  the objective is to minimize the build cost
  while keeping the query cost from exceeding $k$ (a given parameter) at any time.
  We give deterministic online algorithms for both variants,
  with competitive ratios of $\Theta(\log^* n)$ and $k$, respectively.
  The latter ratio is optimal for the second variant.
\end{abstract}

\begin{CCSXML}
<ccs2012>
   <concept>
       <concept_id>10003752.10003809.10010047</concept_id>
       <concept_desc>Theory of computation~Online algorithms</concept_desc>
       <concept_significance>500</concept_significance>
       </concept>
   <concept>
       <concept_id>10002951.10002952</concept_id>
       <concept_desc>Information systems~Data management systems</concept_desc>
       <concept_significance>300</concept_significance>
       </concept>
   <concept>
       <concept_id>10010405.10010406.10010431</concept_id>
       <concept_desc>Applied computing~Enterprise computing infrastructures</concept_desc>
       <concept_significance>100</concept_significance>
       </concept>
 </ccs2012>
\end{CCSXML}

\ccsdesc[500]{Theory of computation~Online algorithms}
\ccsdesc[300]{Applied computing~Enterprise computing infrastructures}
\ccsdesc[300]{Information systems~Data management systems}
\keywords{online algorithms, competitive analysis, data-structure dynamization, log-structured merge-tree, compaction}

\received{1 January 1960}
\received[revised]{1 January 1960}
\received[accepted]{1 January 1960}

\setcopyright{rightsretained}
\acmJournal{TALG}
\acmYear{2024} \acmVolume{1} \acmNumber{1} \acmArticle{1} \acmMonth{1}\acmDOI{10.1145/3672614}

\maketitle

\section{Introduction}\label{sec: introduction}

\subsection{Background}\label{sec: background}
A \emph{static} data structure is built once to hold a fixed set of items, queried any number of times, and then destroyed,
without changing throughout its lifespan.
\emph{Dynamization} is a generic technique for transforming any static container data structure
into a \emph{dynamic} one
that supports
insertions and queries intermixed arbitrarily.
%
  (Deletions and updates can be supported
  as described in Section~\ref{sec: intro k component}.)
%
The dynamic structure holds all items inserted so far in a collection of static containers.
Insertions are supported by adding new static containers and deleting old ones.
Queries are supported by querying all (current) static containers.
Static containers are called \emph{components}.
Dynamization has been applied in computational geometry~\cite
{lee_computational_1984,
  chiang_dynamic_1992,
  agarwal_framework_2001,
  agarwal_approximating_2004,
  bronnimann_inplace_2004},
in geometric streaming algorithms~\cite
{har-peled_coresets_2004,
  bagchi_deterministic_2007, 
  feldman_turning_2013},
and to design external-memory dictionaries~\cite
{arge_oefficient_2004,
  vitter_algorithms_2008,
  aggarwal_hierarchical_1987,
  bender_cacheoblivious_2007}.

Bentley's \emph{binary transform}~\cite
{bentley_decomposable_1979,
  bentley_decomposable_1980},
later called the \emph{logarithmic method}~\cite
{vanleeuwen_art_1981,
  overmars_design_1987},
is a widely used example.
It maintains the invariant that the number of items in each component
is a distinct power of two.
Each \dsop{insert} operation mimics a binary increment:
it destroys the components of size $2^0, 2^1, 2^2,\ldots, 2^{j-1}$,
where $j\ge 0$ is the minimum such that there is no component of size $2^j$,
and builds one new component of size $2^{j}$,
holding the contents of the destroyed components and the inserted item.
%
(See Figure~\ref{fig: binary transform}.)
%
Meanwhile, each \dsop{query} operation queries all current components,
combining the results appropriately for the data type.
During $n$ insertions,
whenever an item is incorporated into a new component,
the item's new component is at least twice as large as its previous component,
so the item is in at most $\log_2 n$ component builds.
That is, the worst-case \emph{write amplification} is at most $\log_2 n$.
Meanwhile, the number of components never exceeds $\log_2 n$,
so each query examines at most $\log_2 n$ components.
That is, the worst-case \emph{read amplification} is at most $\log_2 n$.

\newcommand{\STACK}[2]{
  \begin{array}[b]{c}
    \begin{array}{|@{\,~}c@{~\,}|} \hline
      \mathbf #2 \\
      \hline
    \end{array}
    \\ \raisebox{-8pt}{\small $t=#1$}
  \end{array}
}
\newcommand{\STACKa}[2]{\STACK{#1}{#2}}
\newcommand{\STACKb}[3]{\STACKa{#1}{#2 \\\hline #3}}
\newcommand{\STACKc}[4]{\STACKb{#1}{#2}{#3 \\\hline #4}}
\newcommand{\SEP}{\raisebox{12pt}{${\!\rightarrow\!}$}}
\begin{figure}[t]
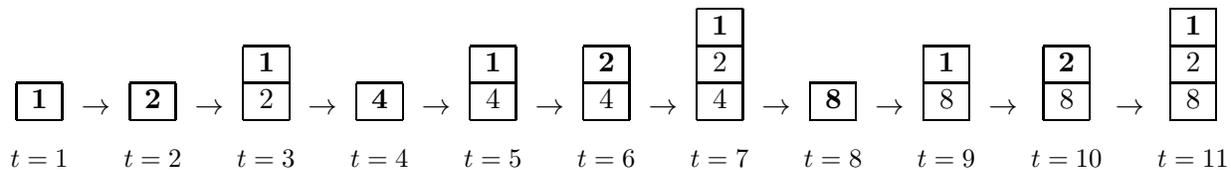

  \begin{displaymath}
    \STACKa 1 1
    \SEP
    \STACKa 2 2
    \SEP
    \STACKb 3 1 2
    \SEP
    \STACKa 4 4
    \SEP
    \STACKb 5 1 4
    \SEP
    \STACKb 6 2 4
    \SEP
    \STACKc 7 1 2 4
    \SEP
    \STACKa 8  8
    \SEP
    \STACKb 9 1 8 
    \SEP
    \STACKb {10} 2  8
    \SEP
    \STACKc {11} 1 2  8
  \end{displaymath}    
  \caption{Steps
    1--11
    of the binary transform~\cite{bentley_decomposable_1979,bentley_decomposable_1980}.
    Each cell \boxed{\,i\,}\ is a component holding $i$ items,
    where $i$ is a distinct power of two.
    In each step one item is inserted
    and held in the new (top, bolded) component.
  }\label{fig: binary transform}
\end{figure}

Bentley and Saxe's \emph{$k$-binomial} transform is a variant of the binary transform~\cite
{bentley_decomposable_1980}.
It maintains $k$ components at all times,
of respective sizes $\binom{i_1} 1$,  $\binom{i_2} 2$, \ldots, $\binom{i_k} k$
such that $0\le i_1 < i_2 < \cdots < i_k$.
(This decomposition is guaranteed to exist and to be unique.
Figure~\ref{fig: binomial transform} gives an example.) 
It thus ensures read amplification at most $k$, independent of $n$,
but its write amplification is at most $(k!\, n)^{1/k}$, about $\frac{k}{e} n^{1/k}$ for large $k$.
This tradeoff between \emph{worst-case} read amplification and \emph{worst-case} write amplification
%
is optimal up to lower-order terms, as is the tradeoff achieved by the binary transform
  (see Section~\ref{subsec: offline}).

These worst-case guarantees on read- and write-amplification
hold both for \emph{uniform} inputs
(where the inserted items have roughly the same sizes,
and insertions and queries occur at uniform and balanced rates)
and for non-uniform inputs.
But a non-uniform input can be substantially easier,
in that it admits a solution with average write amplification (over all inserted items)
and average read amplification (over all queries)
well below worst case,
achieving lower total cost.
(Roughly, this is achieved 
by trading build cost for query cost
as the read/write ratio varies.
For intuition consider a long sequence of insertions followed by a long sequence of queries.)
Worst-case dynamization analyses do not capture this.
Indeed, transforms such as those above
do not adapt to non-uniformity.
Their build and query costs are close to worst case even on non-uniform inputs.

We propose two new dynamization problems---\emph{Min-Sum Dynamization} and \emph{$k$-Component Dynamization}---that model non-uniform insertions and queries.
We consider these as online problems
and use competitive analysis to measure how well algorithms for them take advantage of non-uniformity.
We introduce new algorithms that have substantially better competitive ratios
than existing algorithms.

\newcommand{\STACKA}[2]
{\STACK{#1}{#2}}
%
\begin{figure}[t]
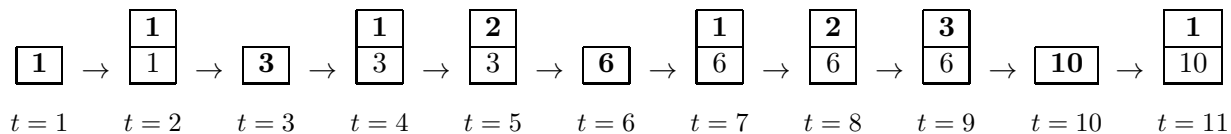

  \begin{displaymath}
    \STACKA 1 1
    \SEP
    \STACKb 2 1 1
    \SEP
    \STACKA 3 3
    \SEP
    \STACKb 4 1 3
    \SEP
    \STACKb 5 2 3
    \SEP
    \STACKA 6 6
    \SEP
    \STACKb 7 1 6
    \SEP
    \STACKb 8 2 6
    \SEP
    \STACKb 9 3 6
    \SEP
    \STACKA {10} {{10}}
    \SEP
    \STACKb {11} 1 {10}
  \end{displaymath}    
  \caption{
    Steps
    1--11
    of the $2$-binomial transform~\cite{bentley_decomposable_1980}.
    At time $t$
    the top and bottom components hold 
    $\binom{i_1} 1$ and $\binom{i_2} 2$ items where $0\le i_1<i_2$
    and $\binom{i_1} 1 + \binom{i_2} 2 = t$. For example at time t = 8, $i_1 = 2$ and $i_2 = 4$.
    If $i_1=0$ there is only one component, the bottom component.
  }\label{fig: binomial transform}
\end{figure}%

\paragraph{Relevance to industrial LSM systems.}
Dynamization algorithms underlie standard implementations of external-memory (i.e., disk-based) ordered dictionaries,
where they are called~\emph{merge} (or \emph{compaction}) policies~\cite{luo_lsmbased_2018}.
Recently inserted key/value pairs are cached in RAM to the extent possible,
while older pairs are stored in immutable (static) on-disk files (the components).
Each \dsop{query} (if not resolved in cache)
searches the current components for the queried key,
using one disk access\footnote
{Database servers are typically configured so that RAM size is 1--3\% of
  disk size, even as RAM and disk sizes grow according to Moore's law~\cite[p.~227]{graefe_modern_2010}.
  A disk block typically holds at least thousands of items.
  Hence, an index for every disk component,
  storing the minimum item in each disk block in the component,
  fits easily in RAM.  Then querying any component (a file storing its items in sorted order)
  for a given item  requires accessing just one disk block,
  determined by checking the index~\cite[p.~232]{graefe_modern_2010}. }
per component.
The components are managed using the merge policy:
periodically, the cached pairs are flushed to disk in one batch,
which is treated as an inserted item
and incorporated by building and deleting components\footnote
{Crucially, builds use sequential (as opposed to random) disk access.
  This is why these systems outperform B$^+$ trees on write-heavy workloads.
  See~\cite[\S\,2.2.1--2.2.2]{luo_lsmbased_2018} for details.}
according to the policy.
The build cost captures the time building on-disk components,
while the query cost captures the time responding to queries.

O'Neil et al's seminal \emph{log-structured merge} (LSM) architecture~\cite{oneil_logstructured_1996}
(building on~\cite{severance_differential_1976,rosenblum_design_1991})
was one of the first to adapt dynamization to external-memory dictionaries
as described above.
Its dynamization scheme can be viewed as a parameterized generalization of Bentley's binary transform.
As the parameter varies,
the tradeoff it achieves between read amplification and write amplification
is optimal (in some parameter regimes)
among all external-memory structures~\cite
{arge_external_2002, brodal_lower_2003,yi_dynamic_2012}.

Many subsequent and current industrial systems---including so-called
NoSQL and NewSQL databases---have LSM architectures.
These include 
Google's Bigtable~\cite{chang_bigtable_2008} (and Spanner~\cite{corbett_spanner_2013}),
Amazon's Dynamo~\cite{decandia_dynamo_2007},
Accumulo (by the NSA)~\cite{kepner_achieving_2014}, 
AsterixDB~\cite{alsubaiee_asterixdb_2014},
Facebook's Cassandra~\cite{lakshman_cassandra_2010}, 
HBase and Accordion (used by Yahoo! and others)~\cite{george_hbase_2011,bortnikov_accordion_2018}, 
LevelDB~\cite{dent_getting_2013}, 
and RocksDB~\cite{dong_optimizing_2017}.

Non-uniform inputs can be particularly important in production LSM systems,
where the sizes of inserted batches
can vary by orders of magnitude~\cite[\S 2]{bortnikov_accordion_2018}
(see also~\cite{brodal_cacheaware_2005,bender_cacheadaptive_2014,bender_closing_2020})
and the query and insertion rates can vary substantially with time. 
As discussed previously, such non-uniform workloads can have optimal cost well below the worst-case cost.
Industrial compaction policies do adapt to non-uniformity, but only heuristically.
Bigtable's default compaction policy (which, like the $k$-binomial transform, is configured by a single parameter $k$ and maintains at most $k$ components) is as follows:
in response to each \dsop{insert} (cache flush), create a new component holding the inserted items;
then, if there are more than $k$ components,
merge the $i$ most-recently created components into one,
where $i \ge 2$ is chosen minimally so
that, for each remaining component $S$, the size of $S$ in bytes
exceeds the total size of all components newer than $S$~\cite{staelin}.
Both the worst-case build cost and the competitive ratio of this algorithm are suboptimal.

\subsection{Problem definitions}\label{sec: definitions} 
The definitions of the two dynamization problems below model insertions and queries.
The end of the section gives generalizations that allow updates, deletions, and item expiration
as implemented (lazily) in typical LSM systems.

Recall that a \emph{set cover} of a given set $S$ of items
is a collection of subsets whose union is $S$.

%

%

\begin{definitions}
  \begin{description}
  \item[]
    An \emph{input}  is a sequence $\inp=(\inp_1, \inp_2, \ldots, \inp_n)$
    of pairwise-disjoint sets of weighted items.
    Each item $x\in\inp_t$ is said to be ``\emph{inserted at time $t$}''.
    The weight of each item $x$,
    denoted $\wt(x)$,
    must be non-negative.

  \item[]
    A \emph{solution} is a sequence $\out=(\out_1, \out_2, \ldots, \out_n)$,
    where each $\out_t$ is a set cover for the items inserted by time $t$.
    That is, $\bigcup_{S\in \out_t} S = \univ t$,
    where $\univ t = \bigcup_{i=1}^t \inp_i$.
    The sets in each $\out_t$ are called \emph{components}.

  \item[]
    The \emph{build cost} at time $t$
    is the total weight in \emph{new} sets, that is
    $\sum_{S\in \out_t\setminus \out_{t-1}} \wt(S)$,
    where $\wt(S)$ denotes $\sum_{x\in S} \wt(x)$.
    (For time $t=1$ we define $\out_0$ to be the empty set.)

  \item[]
    The \emph{query cost} at time $t$ is $|\out_t|$, that is, the number of components in the current cover, $\out_t$.

  \item[]
    Given an input, the objective of the \emph{Min-Sum Dynamization} problem
    is to find a solution of minimum \emph{total cost} (the sum of all build costs and query costs
    over time).

  \item[]
    The objective of the \emph{$k$-Component Dynamization} problem
    is to find a solution having minimum total build cost,
    among solutions with maximum query cost $k$ (that is, $\max_t |\out_t| \le k$).

  \item[]
    An 
    algorithm is
    \emph{online} if for every input $\inp$ it outputs a solution $\out$ such that
    at each time $t$ its cover $\out_t$ is independent of $\inp_{t+1}, \inp_{t+2}, \ldots, \inp_n$,
    all build costs $\wt_{t'}(S)$ at times $t'> t$, and $n$.

  \item[]
    An algorithm's \emph{competitive ratio}, $c^*(m)$, 
    is the supremum,
    over all inputs with $m$ non-empty insertions,
    of the cost of the algorithm's solution divided by the optimum cost for the input.

  \item[]
    An algorithm is \emph{$c(m)$-competitive} if its competitive ratio $c^*(m)$ is at most $c(m)$ for all $m$.
  \end{description}
\end{definitions}


\paragraph{Remarks for Min-Sum Dynamization.}
The definition of total read cost
(as $\sum_{t=1}^n |\out_t|$) models, a-priori, exactly one query per insert.
This keeps the problem statement relatively simple.
However, applications can have any number of queries per insert.
This can be modeled by reduction.
To model consecutive queries with no intervening insertions,
separate the consecutive queries by artificial insertions with $\inp_t=\emptyset$
(inserting an empty set).
To model consecutive insertions with no intervening query,
aggregate the consecutive insertions into a single insertion.

Note that uniformly scaling item weights
changes build cost relative to query cost.
In LSM applications, each unit of query cost represents the time for one \emph{random-access} disk read,
whereas each unit of build cost represents the (much smaller, amortized) time per byte 
during \emph{sequential} disk reads and writes.
To model these costs,
take the weight of each item $x$ to be its size in bytes,
times the time per byte for a sequential disk read and write,
divided by the (much larger) time for one random-access disk read.

\paragraph{Remark for $k$-Component Dynamization.}
Among well-studied problems,
\emph{Dynamic TCP Acknowledgment}~\cite
{karlin_dynamic_2003, buchbinder_design_2009},
a generalization of the classic ski-rental problem,
is perhaps technically closest to $k$-Component Dynamization.
TCP Acknowledgment
can be viewed as a continuous-time variant of $2$-Component Dynamization
in which
building a new component that contains all items inserted so far
(corresponding to a ``TCP-ack'')
has cost 1 (regardless of the component weight).

\paragraph{Deletions, updates, and expiration.}
The problems as defined above model queries and insertions.
Next we extend the definitions to allow modelling updates, deletions, and item expiration
as they typically happen (lazily) in LSM dictionaries.

In this context we assume each item is a weighted key/value pair, timestamped by insertion time,
and possibly having an expiration time.
(The item weight is typically proportional to the size in bytes of the key/value pair.)
Updates and deletions are lazy (``out of place''~\cite[\S 2]{luo_lsmbased_2018},~\cite{lim_accurate_2016}):
\dsop{update} just inserts an item with the given key/value pair (as usual),
while \dsop{delete} inserts an item for the given key with a so-called \emph{tombstone} (a.k.a.~\emph{antimatter}) value.
Multiple items with the same key may be stored (possibly in multiple components), but only the newest matters:
a query, given a key, returns the newest item inserted for that key,
or ``none'' if that item is a tombstone or has expired.
When a component $S$ is built, it is ``garbage collected'':
for each key,
among the items in $S$ with that key,
only the newest is written to disk---all others are discarded.

To model this, we define three generalizations of the problems.
To keep the definitions clean,
in each variant the input sets must still be disjoint
and the current cover must still contain all items inserted so far.
To model updates, deletions, and expirations
we only redefine the build cost.

\begin{definitions}
  \begin{description}
  \item[\emph{Decreasing Weights.}]
    Each item $x\in\inp_t$ has weights $\wt_{t}(x) \ge \wt_{t+1}(x) \ge \cdots \ge \wt_n(x)$.
    The cost of building a component $S\subseteq \univ t$ at time $t$ is redefined as $\wt_{t}(S) = \sum_{x\in S} \wt_t(x)$.
    We use this variant as a stepping stone to the LSM variant, next.
    
  \item[\emph{LSM.}]
    Each item is a timestamped key/value pair with an expiration time.
    Given a subset $S$ of items,
    the set of \emph{non-redundant} items in $S$,
    denoted $\nonred S$,
    consists of those that have no newer item in $S$ with the same key.
    The cost of building a component $S$ at time $t$, denoted $\wt_t(S)$,
    is redefined as the sum, over all non-redundant items $x$ in $S$,
    of the item weight $\wt(x)$, or the weight of the tombstone item for $x$
    if $x$ has expired.
    The latter weight must be at most $\wt(x)$.
    Items with the same key may have different weights, must have distinct timestamps,
    and can occur in different components.
    For any two items $x\in \inp_t$ and $x'\in\inp_{t'}$ with $t<t'$,
    the timestamp of $x$ must be less than the timestamp of $x'$.
    This variant applies to LSM systems.\footnote
    {LSM systems delete tombstone items during full merges
      (i.e., when building a component $S=\univ t$ at time $t$).
      This is not captured by the LSM model here,
      but is captured by the following \emph{general} model.
      See Section~\ref{sec: variations}.}

  \item[\emph{General.}]
    Instead of weighting the items,
    build costs are specified directly for sets.
    At each time $t$ a \emph{build-cost function} $\wt_t{:~}2^{\univ t}\rightarrow \Rp$ is revealed (along with $\inp_t$),
    directly specifying the build cost $\wt_t(S)$ for every possible component $S\subseteq\univ t$.
    The build-cost function must obey
    the following restrictions,
    for all times $i\le t$ and sets $S,\,S' \subseteq \univ t$:
    \begin{description}
    \item[~(R1) sub-additivity:] $\wt_t(S\cup S') \le \wt_t(S) + \wt_t(S')$.
      (The cost of building a component holding the union of two sets
      is at most the combined cost of building two components that hold the respective sets.)

    \item[~(R2) suffix monotonicity:]~~if
      $S\ne\univ t$, then $\wt_t(S \setminus \univ i) \le \wt_t(S)$, 
      (The cost of building a component holding a set $S$ of items
      is at least the cost of building a component holding just those items in $S$
      that were inserted after time $i$.
      The exception for $S=\univ t$ allows modeling full removal of tombstone items during full merges.)

    \item[~(R3) temporal monotonicity:]~~$\wt_i(S) \ge \wt_t(S)$
      (The cost of building a component to hold $S$ does not increase over time.
      Note, for example, that in the LSM model item expirations can cause the cost to decrease over time.)
    \end{description}
  \end{description}
\end{definitions}

We chose Restrictions (R1)--(R3)
so that the resulting problem has several competing properties:
it should be relatively simple,
sufficiently general to model practical LSM systems,
and sufficiently restricted to allow competitive online algorithms.
The build costs implicit in the LSM and Decreasing Weights variants
do obey (R1)--(R3).\footnote
{The LSM build cost obeys (R1) because $\nonred {S\cup S'} \subseteq \nonred S \cup \nonred S'$.
  It obeys (R2) because $\nonred {S\setminus  \univ i} \subseteq \nonred S$.
  It obeys (R3) because the tombstone weight for each item $x$ is at most $\wt(x)$.}
The restrictions would also hold, for example, if each item had a fixed weight
and $\wt_t(S) = \max_{x\in S} \wt(x)$.

\subsection{Statement of results}\label{sec: results} 

\subsubsection*{Min-Sum Dynamization}

Recall that the \emph{iterated logarithm} (base $e$)
is the slowly growing function defined inductively by $\log_e^* m = 1 + \log_e^* \log_e m$,
with the base case $\log_e^* m = 0$ for $m \le 1$.
(Our analysis will use base $\sqrt 2$ instead of $e$.
Note that $\log^*_{\sqrt 2} m = \Theta(\log^*_e m)$,
so inside O-notation the base is omitted.)

\restateTheorem{thm: min-sum}{Section~\ref{sec: min-sum}}

%

Roughly speaking, every $2^j$ time steps ($j\in\{0,1,2,\ldots\}$),
the algorithm merges all components of weight $2^j$ or less into one.
Figure~\ref{fig: min-sum example} illustrates one execution of the algorithm.
The bound in the theorem is tight for the algorithm.  

In contrast, consider
the naive adaptation of Bentley's binary transform
(i.e., treat each insertion $\inp_t$ as a size-1 item,
then apply the transform).
On inputs with $\wt(\inp_t) = 1$ for all $t$
the algorithms produce the same (optimal) solution.
But, as we show in Lemma~\ref{lemma: min-sum binary transform} in the appendix, the competitive ratio of the naive adaptation is $\Omega(\log n)$.
%
%

Min-Sum Dynamization is a special case of \emph{Set Cover with Service Costs},
for which Buchbinder et al.{} give a randomized online algorithm~\cite
{buchbinder2014competitive}.
For Min-Sum Dynamization, their bound on the algorithm's competitive ratio
simplifies to $O(\log^2 n)$.

\subsubsection*{$K$-Component Dynamization and its generalizations}\label{sec: intro k component}\label{sec: k-component results}
\restateTheorem{thm: k-component lower bound}{Section~\ref{sec: k-component lower bound}}


\restateTheorem{thm: k-component decreasing}{Section~\ref{sec: k-component decreasing}}

For comparison, consider the naive adaptation of Bentley and Saxe's $k$-binomial transform
to $k$-Component Dynamization
(treat each insertion $\inp_t$ as one size-1 item, then apply the transform).
On inputs with $\wt(\inp_t) = 1$ for all $t$,
the two algorithms produce essentially the same optimal solution.
%
  But, as we show in Lemma~\ref{lemma: naive bs non-competitive} in the appendix,
  the competitive ratio of the naive adaptation is $\Omega(k n^{1/k})$ for any $k\ge 2$.

Bigtable's default algorithm (Section~\ref{sec: background}) solves
$k$-Component Dynamization, but its competitive ratio is $\Omega(n)$.
For example, with $k=2$, given an instance with $\wt(\inp_1) = 3$, $\wt(\inp_2) = 1$,
and $\wt(\inp_t) = 0$ for $t\ge 3$, it pays $n+2$,
while the optimum is $4$.
(In fact, the algorithm is memoryless ---
each $\out_t$ is determined by $\out_{t-1}$ and $\inp_t$.
No deterministic memoryless algorithm has competitive ratio independent of $n$.)
Even for uniform instances ($\wt(\inp_t) = 1$ for all $t$),
Bigtable's default incurs cost quadratic in $n$,
whereas the optimum is $\Theta(k n^{1+1/k})$.


Bentley and Saxe showed that their solutions
were optimal (for uniform inputs)
among a restricted class of solutions
that they called \emph{arboreal transforms}~\cite{bentley_decomposable_1980}.
Here we call such solutions \emph{newest-first}:

\begin{definition}\label{def: newest-first}\label{def: lightest-first}
  A solution $\out$ is \emph{newest-first}
  if at each time $t$,
  if $\inp_t=\emptyset$ it creates no new components,
  and otherwise it creates one new component,
  by merging $\inp_t$ with some $i\ge 0$ newest components into a single component
  (destroying the merged components).
  Likewise, $\out$ is \emph{lightest-first} if, at each time $t$ with $\inp_t\ne \emptyset$,
  it merges $\inp_t$ with some $i\ge 0$ lightest components.
  An \emph{algorithm} is newest-first (lightest-first)
  if it produces only newest-first (lightest-first) solutions.
\end{definition}

The Min-Sum Dynamization algorithm Adaptive-Binary (Figure~\ref{fig: min-sum alg}) is lightest-first.  
The $k$-Component Dynamization algorithm Greedy-Dual (Figure~\ref{fig: k-component decreasing}) is newest-first.  
In a newest-first solution, every cover $\out_t$ partitions the set $\univ t$ of current items
into components of the form $\bigcup_{h=i}^j \inp_h$ for some $i, j$.

Any newest-first algorithm for the decreasing-weights variant of either problem
can be ``bootstrapped'' into
an equally good algorithm for the LSM variant:

\restateTheorem{thm: meta}{Section~\ref{sec: k-component LSM}}

With Theorem~\ref{thm: k-component decreasing} this gives the following corollary:

\restateCorollary{cor: k-component LSM}{Section~\ref{sec: k-component LSM}}


Finally we give an algorithm for the general variant:

\restateTheorem{thm: k-component general}{Section~\ref{sec: k-component general}}

The algorithm $\mysf B_k$
partitions the input sequence into phases.
Before the start of each phase, it has just one component in its cover,
called the current ``root'',
containing all items inserted before the start of the phase.
During the phase, $\mysf B_k$ recursively simulates $\mysf B_{k-1}$
to handle the insertions occurring during the phase,
and uses the cover that consists of the root component together
with the (at most $k-1$) components currently used by $\mysf B_{k-1}$.
At the end of the phase, $\mysf B_k$ does a \emph{full merge}
--- it merges all components into one new component, which becomes the  new root.
It extends the phase maximally subject to the constraint that
the cost incurred by $\mysf B_{k-1}$ during the phase does not exceed
$k-1$ times the cost of the full merge that ends the phase.

\subsection{Properties of optimal offline solutions}\label{subsec: offline}

Bentley and Saxe showed that, among newest-first solutions (which they called \emph{arboreal}),
their various transforms were near-optimal for uniform inputs~\cite
{bentley_decomposable_1979,bentley_decomposable_1980}.
Mehlhorn showed (also for uniform inputs) that the best
newest-first solutions have cost at most a constant times optimum~\cite{mehlhorn_lower_1981}.
We generalize and strengthen Mehlhorn's result:

\restateTheorem{thm: offline}{Section~\ref{sec: offline}}

One consequence is that Bentley and Saxe's transforms
give optimal solutions (up to lower-order terms)  for uniform inputs.
Another is that, for Min-Sum and $k$-Component Dynamization,
optimal solutions can be computed
in time $O(n^3)$ and $O(k n^3)$, respectively,
because optimal newest-first solutions can be computed in these time bounds
via natural dynamic programs.

The body of the paper gives the proofs of Theorems~\ref{thm: min-sum}--\ref{thm: offline}.


\setcounter{theorem}{0} 
\section{Min-Sum Dynamization (Theorem~\ref{thm: min-sum})}\label{sec: min-sum}
\begin{restateableTheorem}{thm: min-sum}
  For Min-Sum Dynamization, the online algorithm Adaptive-Binary (Figure~\ref{fig: min-sum alg}) has competitive ratio
  $\Theta(\log^* m)$, where $m\le n$ is the number of non-empty insertions.
\end{restateableTheorem}
%
We prove the theorem in two parts:
%
\begin{enumerate}
\item[\em(i)]  The competitive ratio is $O(\log^*m)$ (proof in Section~\ref{sec: min-sum upper}).
\item[\em(ii)] The competitive ratio is $\Omega(\log^*m)$ (proof in Section~\ref{sec: min-sum lower}).
\end{enumerate}


\begin{figure*}[t]
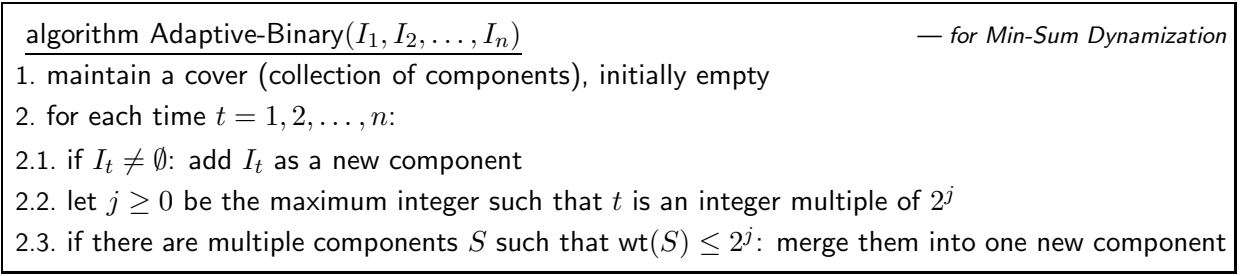

  \framebox{\parbox{0.98\textwidth}{\sf\smaller[0]\setlength{\parskip}{1.5pt}

      \begin{steps}
      \item[] \underline{algorithm Adaptive-Binary$(\inp_1, \inp_2, \ldots, \inp_n)$}
        \mycomment{---for Min-Sum Dynamization}
        
        \step maintain a cover (collection of components), initially empty
        
        \step for each time $t=1,2,\ldots, n$:
        \begin{steps}
          \step\label{line: inp set}
          if $\inp_t\ne\emptyset$: add $\inp_t$ as a new component 

          \step\label{line: capacity}
          let $j\ge 0$ be the maximum integer such that $t$ is an integer multiple of $2^j$
          \mycomment{--well defined as $2^0=1$}

          \step\label{line: min-sum merge}
          if there are multiple components $S$ such that $\wt(S) \le 2^j$:
          \begin{steps}
            \step: merge them into one new component
          \end{steps}
        \end{steps}
      \end{steps}
    }}
  \caption{
    A  $\Theta(\log^* m)$-competitive algorithm for Min-Sum Dynamization
    (Theorem~\ref{thm: min-sum}).}\label{fig: min-sum alg} 
\end{figure*}

%
%

\subsection{Part (i): the competitive ratio is $O(\log^* m)$}\label{sec: min-sum upper}

Fix an input $\inp=(\inp_1, \inp_2, \ldots, \inp_n)$ with $m\le n$ non-empty sets.
Let $\out$ be the algorithm's solution.
Let $\out^*$ be an optimal solution, of cost \OPT.
For any time $t$,
call the $2^j$ chosen in Line~\ref{line: capacity} the \emph{capacity} $\mu(t)$ of time $t$,
and let $S_t$ be the newly created component (if any) in Line~\ref{line: min-sum merge}.

It is convenient to over-count the algorithm's build cost as follows.
In Line~\ref{line: min-sum merge},
if there is exactly one component $S$ with $\wt(S) \le 2^j$,
the algorithm as stated does not change the current cover,
but we pretend for the analysis that it does ---
specifically, that it destroys and rebuilds $S$, paying its build cost $\wt(S)$ again at time $t$.
This allows a clean statement of the next lemma.
In the remainder of the proof, the ``build cost'' of the algorithm refers to this over-counted build cost. 

We first bound the total query cost, $\sum_t |\out_t|$, of $\out$.
\begin{lemma}\label{lemma: min-sum query}
  The total query cost of $\out$ is at most twice the (over-counted) build cost of $\out$, plus $\OPT$.
\end{lemma}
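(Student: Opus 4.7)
The plan is to decompose $|\out_t|$ at each time $t$ and charge the two pieces separately. After time $t$ is processed, Line~\ref{line: min-sum merge} guarantees that $\out_t$ contains at most one component of weight at most $\mu(t)$; call every other component \emph{heavy} and let $H_t$ count them, so $|\out_t| \le H_t + 1$ when $\univ t \ne \emptyset$ (and $|\out_t| = 0$ otherwise). Summing gives $\sum_t |\out_t| \le \sum_t H_t + |\{t : \univ t \ne \emptyset\}|$, and since any feasible cover has $|\out^*_t| \ge 1$ whenever $\univ t \ne \emptyset$, the second term is at most $\OPT$. It therefore suffices to show $\sum_t H_t \le 2B$.

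I will bound $\sum_t H_t$ by a charging argument applied component-by-component. For each component $S$ ever built by the algorithm, set $j_S = \lceil \log_2 \wt(S) \rceil$, so $2^{j_S-1} < \wt(S) \le 2^{j_S}$. During $S$'s life, $S$ is heavy at time $t$ exactly when $\mu(t) < \wt(S)$, i.e., when $t$ is not a multiple of $2^{j_S}$; I call the complementary times (multiples of $2^{j_S}$ during $S$'s life) the \emph{touchings} of $S$. At each touching, $S$ belongs to the collection of components with weight at most $\mu(t)$ considered in Line~\ref{line: min-sum merge}, and the over-counted build cost at time $t$ therefore includes $\wt(S)$: either as a rebuild cost, when $S$ is the unique such component, or as $S$'s contribution to a merge cost. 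Consecutive touchings of $S$ are exactly $2^{j_S}$ time steps apart, so between any two of them $S$ is heavy for $2^{j_S}-1$ steps; the initial gap from $S$'s creation to its first touching contains at most $2^{j_S}$ heavy times, and so does the trailing gap when $S$ survives to the end of the input without further touchings.

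I charge each heavy time of $S$ to the next touching of $S$, or, when no such touching exists, to the build event that created $S$ (which also pays $\wt(S)$). Each build event that receives charges thus receives at most $2^{j_S} \le 2\wt(S)$ units while contributing at least $\wt(S)$ to $B$, giving a ratio of at most $2$. Summing over all components and build events yields $\sum_t H_t \le 2B$, and combining with the first step gives $\sum_t |\out_t| \le \OPT + 2B$.

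The main subtlety is the role of the over-counting introduced just before the lemma: without it, a single component could sit in $\out_t$ across many time steps without contributing anything to $B$, so the bound on $\sum_t H_t$ would fail. The over-counting forces $S$ to pay $\wt(S)$ at every multiple of $2^{j_S}$ during its life --- precisely the amortization required to absorb $S$'s heavy-time contributions at cost factor~$2$.
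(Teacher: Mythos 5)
Your decomposition $|\out_t|\le H_t+1$ (with the non-heavy leftover charged to $\OPT$) is sound and slightly sharper than the paper's split, which separates components with $\wt(S)\ge1$ from those with $\wt(S)<1$ and uses $n\le\OPT$; both give $\OPT$ for the residual term. The second half, however, departs from the paper and has a gap. The paper charges each \emph{occurrence} of a component $S$ to the build that created that occurrence: with the over-counting, the occurrence is destroyed within $2^{j}\le 2\wt(S)$ steps of its creation, so its entire contribution to query cost is at most twice its own build cost, giving a clean one-to-one budget correspondence. You instead charge each heavy time of $S$ \emph{forward} to the next touching of $S$, falling back to the original build event when no touching follows. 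The trouble is that your two kinds of accounting draw on the same portion of $B$. Consider a time $\tau$ at which Line~\ref{line: min-sum merge} merges $S_1,\dots,S_k$ into $S'$, and suppose $S'$ survives to the end of the input with no multiple of $2^{j_{S'}}$ occurring afterward. The build cost at $\tau$ is $\wt(S')=\sum_i\wt(S_i)$. Under your scheme the touchings of $S_1,\dots,S_k$ at $\tau$ collect up to $2\wt(S')$ units, \emph{and} the build event of $S'$ at $\tau$ collects up to another $2\wt(S')$ units from $S'$'s trailing heavy times. Both claims are charged against the single $\wt(S')$ paid at $\tau$; the per-event ratio there can approach $3$ or more, not $2$. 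So the step ``each build event receives at most $2\wt(S)$ while contributing at least $\wt(S)$ to $B$, hence summing gives $\sum_t H_t\le 2B$'' does not follow as stated, because the contributions you are summing are not disjoint portions of $B$.

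The global bound $\sum_t H_t\le2B$ is in fact true, and your scheme can be repaired, but doing so requires an additional argument you don't make: one must observe that whenever the build event of a merge-created $S'$ is charged, $S'\in\out_n$, and that $\sum_{S\in\out_n}\wt(S)=\wt(\univ n)=\sum_t\wt(\inp_t)$ is exactly the (otherwise unclaimed) Line-\ref{line: inp set} build cost, which absorbs the deficit. Alternatively — and more simply — switch the direction of your charging: attribute each occurrence's heavy (indeed, all) times to the build that created \emph{that occurrence}, exactly as the paper does; then each occurrence's budget $\wt(S)$ is used by no one else, and the per-occurrence bound $<2\wt(S)$ sums directly to $2B$ with no accounting subtlety.
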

\begin{proof}
  %
  Consider the components with weight less than 1.
  %
  By inspection of the algorithm each cover $\out_t$ has at most one such component --- the component $S_t$ created at time $t$.
  Therefore, the query cost from the components with weight less than 1 is at most $n$.

  It remains to consider the components with weight at least 1.
  Let $S$ be any component in $\out$ of weight $\wt(S) \ge 1$.
  Each new occurrence of $S$ in $\out$
  contributes at most $2\wt(S)$ to $\out$'s query cost.
  Indeed, let $2^j\ge \wt(S)$ be the next larger power of 2.
  Times with capacity $2^j$ or more occur every $2^j$ time steps.
  So, after $\out$ creates $S$, $\out$ destroys $S$ within $2^j \le 2\wt(S)$ time steps; note that we are using here the over-counted build cost.
  So $\out$'s query cost from such components is at most twice the build cost of $\out$.

  Thus, the total query cost from all components is at most twice the build cost of $\out$ plus $n$.
  The lemma follows since the query cost of $\out^*$ is at least $n$, so $n \le \OPT$.
\end{proof}

Define $\Delta$ to be the maximum number of components merged by the algorithm in response to any query.
Note that $\Delta \le m$ simply because there are at most $m$
components at any given time in $\out$.
(Only Line~\ref{line: inp set} increases the number of components, and it does so only if $\inp_t$ is non-empty.)
The remainder of the section bounds the build cost of $\out$ by $O(\log^*(\Delta) \OPT)$.
By Lemma~\ref{lemma: min-sum query}, this will imply prove Part (i) of the theorem.

The total weight of all components $\inp_t$ that the algorithm creates in Line~\ref{line: inp set}
is $\sum_t \wt(\inp_t)$, which is at most \OPT because every $x\in\inp_t$
is in at least one new component in $\out^*$ (at time $t$).
To finish, we bound the (over-counted) build cost of the components
that the algorithm builds in Line~\ref{line: min-sum merge}, i.e., $\sum_t \wt(S_t)$.

\begin{observation}\label{obs: separated}
  The difference between any two distinct times $t$ and $t'$  is at least $\min\{\mu(t), \mu(t')\}$.
\end{observation}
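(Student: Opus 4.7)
The plan is to prove this observation by a short divisibility argument on the capacities. Recall that $\mu(t) = 2^j$ where $j$ is, by Line~\ref{line: capacity}, the maximum integer such that $t$ is an integer multiple of $2^j$. In particular, $\mu(t)$ itself is always a power of $2$, and $t$ is divisible by $\mu(t)$.

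First I would assume without loss of generality that $\mu(t) \le \mu(t')$, so that $\min\{\mu(t), \mu(t')\} = \mu(t)$. Write $\mu(t) = 2^j$ and $\mu(t') = 2^{j'}$ with $j \le j'$. By the definition of $\mu$, the value $2^j$ divides $t$, and $2^{j'}$ divides $t'$; since $j \le j'$, the value $2^j$ also divides $t'$. Hence $t$ and $t'$ are both integer multiples of $2^j = \mu(t)$.

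Because $t \ne t'$, the difference $t - t'$ is a nonzero integer multiple of $\mu(t)$, so $|t - t'| \ge \mu(t) = \min\{\mu(t), \mu(t')\}$, as claimed. I do not anticipate any real obstacle here; the observation is essentially a restatement of the fact that the set of multiples of $2^j$ is separated by gaps of at least $2^j$, applied to the smaller of the two capacities.
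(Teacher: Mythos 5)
Your proof is correct and takes the same approach as the paper, which simply notes that $t$ and $t'$ are both integer multiples of $\min\{\mu(t),\mu(t')\}$. You have merely filled in the (routine) divisibility details.
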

(This holds because $t$ and $t'$ are distinct integer multiples of $\min\{\mu(t), \mu(t')\}$.
See Figure~\ref{fig: capacities}.)

\begin{figure}
  \centering
  \hspace*{-0.1\columnwidth}\begin{tikzpicture}\footnotesize
\pgfplotstableread[row sep=\\,col sep=&]{
      t & mu \\
1 & 1 \\
2 & 2 \\
3 & 1 \\
4 & 4 \\
5 & 1 \\
6 & 2 \\
7 & 1 \\
8 & 8 \\
9 & 1 \\
10 & 2 \\
11 & 1 \\
12 & 4 \\
13 & 1 \\
14 & 2 \\
15 & 1 \\
16 & 16 \\
17 & 1 \\
    }\mydata
    \begin{axis}[
        ybar,
        xlabel={$t$},
        ylabel={$\mu(t)$},
        symbolic y coords={0,1,2,4,8,16,32}, 
        xtick=data,
        ymax=16,
        height=2.1in,
        width=3.7in,
        ]
      \addplot[black,fill=white!90!black] table[x=t,y=mu]{\mydata};
    \end{axis}
\end{tikzpicture}

  \caption{The capacities $\mu(t)$ as a function of $t$.}\label{fig: capacities}
\end{figure}


\paragraph{Charging scheme.}
For each time $t$ at which Line~\ref{line: min-sum merge} creates a new component $S_t$,
have $S_t$ \emph{charge} to each item $x\in S_t$ the weight $\wt(x)$ of $x$.
Have $x$ in turn charge $\wt(x)$ to each optimal component $S^*\in \out^*_t$
that contains $x$ at time $t$.
The entire build cost $\sum_t \wt(S_t)$ is charged to components in $\out^*$.
To finish, we show that each component $S^*$ in $\out^*$
is charged $O(\log^* \Delta)$ times $S^*$'s contribution
(via its build and query costs) to \OPT.


  Throughout, 
  given integer times $t$ and $t'$,
  let $\interval t {t'}$ denote
  the \emph{(time) interval} $\{t, t+1, \ldots, t'\}$.
  (This is non-standard notation.)

Fix any such $S^*$.
Define $\interval {t_1} {t_2}$ to be the interval of
$S^*$ in $\out^*$.
That is, $\out^*$ adds $S^*$ to its cover at time $t_1$,
where it remains through time $t_2$,
so its contribution to \OPT is $t_2-t_1+1 + \wt(S^*)$.
At each (integer) time $t\in\interval {t_1} {t_2}$, component $S^*$ is charged $\wt(S^*\cap S_t)$.
To finish, we show 
$\sum_{t=t_1}^{t_2} \wt(S^*\cap S_t) = O(t_2-t_1 +  \log^*(\Delta)\wt(S^*))$.


By Observation~\ref{obs: separated},
there can be at most one time $t'\in\interval {t_1} {t_2}$ with capacity $\mu(t') > t_2-t_1+1$.
If there is such a time $t'$, the charge received then, i.e.~$\wt(S^*\cap S_{t'})$, is at most $\wt(S^*)$.
To finish, we bound the charges at the times $t\in\interval {t_1} {t_2} \setminus \{t'\}$,
with $\mu(t) \le t_2 - t_1 + 1$.

\begin{definition}[dominant]
  Classify each such time $t$ and $\out$'s component $S_t$ as \emph{dominant}
  if the capacity $\mu(t)$ strictly exceeds
  the capacity $\mu(i)$ of every earlier time $i\in \interval {t_1} {t-1}$
  ($\mu(t) > \max_{i=t_1}^{t-1} \mu(i)$) in $S^*$'s interval $\interval {t_1} {t_2}$.
  Otherwise $t$ and $S_t$ are \emph{non-dominant}.
\end{definition}

\begin{lemma}[non-dominant times]\label{lemma: min-sum non-dominant}
  The net charge to $S^*$ at non-dominant times is at most $t_2 - t_1$.
\end{lemma}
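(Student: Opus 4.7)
The plan is to group non-dominant $t$ by a \emph{chain root} obtained by iterating $i^*(\cdot)$, and sum per-group contributions that telescope to $t_2-t_1$.

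First I would prove the key structural claim: for any non-dominant $t$ with $\wt(S^*\cap S_t)>0$, $S^*\cap S_t\subseteq S^*\cap S_{i^*(t)}$ and $\wt(S_{i^*(t)})\le\mu(t)$. For $x\in S^*\cap S_t$, $x$'s component at time $t$ (pre-merge) has weight at most $\mu(t)\le\mu(i^*(t))$; since $x$'s component only grows through merges, $x$'s component at the earlier time $i^*(t)$ is a subset and therefore also light enough to be merged at $i^*(t)$, placing $x$ into $S_{i^*(t)}$. Moreover, $S_{i^*(t)}$ is exactly $x$'s post-merge component at $i^*(t)$, which is contained in $x$'s pre-merge component at $t$, giving $\wt(S_{i^*(t)})\le\mu(t)$. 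Iterating, the chain $t, i^*(t), i^*(i^*(t)),\ldots$ strictly decreases and terminates at some dominant $d=\mathrm{root}(t)\in[t_1,t-1]$; the same argument applied directly between $t$ and $d$ (using that $\mu(d)\ge\mu(t)$ because chain capacities are non-decreasing) gives $S^*\cap S_t\subseteq S^*\cap S_d$ and $\wt(S_d)\le\mu(t)$. Hence every non-dominant $t$ chaining to $d$ with positive charge satisfies $\mu(t)\ge W:=\wt(S_d)$ and contributes at most $W$.

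Next I would exploit the ruler-function structure of $\mu$. Consecutive dominant times in $[t_1,t_2]$ satisfy $d_{k+1}-d_k=\mu(d_k)$, so all non-dominant $t$ chaining to $d$ lie in an interval of length $L\le\mu(d)$. Within any interval of length $L$, the times with $\mu\ge W$ are multiples of a power of two $\ge W$, of which there are at most $L/W$; hence the chain group of $d$ contributes at most $(L/W)\cdot W=L$. Summing over all dominant $d\in[t_1,t_2]$, and including a possibly truncated final subtree of length $t_2-d_K$, the subtree lengths telescope to $(d_K-t_1)+(t_2-d_K)=t_2-t_1$.

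The main obstacle will be the structural claim of the first step: both the inclusion and the weight bound depend on identifying $S_{i^*(t)}$ with $x$'s post-merge component at $i^*(t)$ and on carefully tracking how that component grows between $i^*(t)$ and $t$. After that, the ruler-function counting and the telescoping sum are routine.
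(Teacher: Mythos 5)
Your proposal is correct and takes essentially the same approach as the paper: your chain root $d$ is exactly the preceding dominant time $\tau_1$ (which the paper identifies directly rather than by iterating an $i^*$ map), your structural claim is the paper's observation that all charges in $(\tau_1,\tau_2)$ come from items of $S_{\tau_1}$ with $\wt(S_{\tau_1})\le\mu(t)$, and your counting of times with $\mu\ge W$ by multiples of a power of two is the spacing argument the paper phrases via Observation~\ref{obs: separated}.
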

\begin{proof}
  Let $\tau_1$ be any dominant time.
  Let $\tau_2>\tau_1$ be the next larger dominant time step, if any, else $t_2+1$.
  Consider the charge to $S^*$ during the open interval $(\tau_1, \tau_{2})$.
  We show that this charge is at most $\tau_{2} - \tau_1 - 1$.

  Component $S^*$ is built at time $t_1\le \tau_1$,  so $S^* \subseteq \univ {\tau_1}$.
  %
  At time $\tau_1$, every item $x$ that can charge $S^*$
    (that is, $x\in S^*$) is in some component $S$ in $\out_{\tau_1}$.
    By the definition of dominant,
    each time in $t\in (\tau_1, \tau_2)$ has capacity $\mu(t) \le \mu(\tau_1)$, since otherwise $\tau_2$ would not be the \emph{next} dominant time.
    So, the components $S$ in $\out_{\tau_1}$ that have weight $\wt(S) > \mu(\tau_1)$
    remain unchanged in $\out$ throughout $(\tau_1, \tau_2)$,
    and the items in them do not charge $S^*$ during $(\tau_1, \tau_2)$.
%
  So we need only consider items in components $S$ in $\out_{\tau_1}$ with $\wt(S) \le\mu(\tau_1)$.
  Assume there are such components.
  By inspection of the algorithm, there can only be one: the component $S_{\tau_1}$ built at time $\tau_1$.
  All charges in $(\tau_1, \tau_2)$ come from items $x\in S_{\tau_1}\cap S^*$.
  
  Let $\tau_1 = t'_1 < t'_2 < \cdots < t'_\ell$ be the times in $\closedopen{\tau_1, \tau_2}$ when these items
  are put in a new component.
  These are the times in $(\tau_1, \tau_2)$ when $S^*$ is charged,
  and, at each, the charge is $\wt(S^*\cap S_{\tau_1}) \le \wt(S_{\tau_1})$,
  so the total charge to $S^*$ during $(\tau_1, \tau_2)$ is at most $(\ell-1) \wt(S_{\tau_1})$.

  At each time $t'_i$ with $i\ge 2$ the previous component $S_{t'_{i-1}}$, 
  of weight at least $\wt(S_{\tau_1})$, is merged.
  So each time $t'_i$ has capacity $\mu(t'_i) \ge \wt(S_{\tau_1})$.
  By Observation~\ref{obs: separated},
  the difference between each time $t'_i$ and the next $t'_{i+1}$ is at least $\wt(S_{\tau_1})$.
  So $(\ell-1) \wt(S_{\tau_1}) \le t'_{\ell} - t'_1 \le \tau_2 - \tau_1 - 1$.

  By the two previous paragraphs the charge to $S^*$
  during $(\tau_1, \tau_2)$ is at most $\tau_2 - \tau_1 - 1$.
  Summing over the dominant times $\tau_1$ in $[t_1, t_2]$ proves the lemma.
\end{proof}

Let $D$ be the set of dominant times.
For the rest of the proof all times that we consider are dominant.
Note that all times that are congested or uncongested (as defined next) are dominant. 

\newcommand{\congest}{\kappa}
\begin{definition}[congestion]
  For any time $t\in D$ and component $S_t$,
  define the \emph{congestion} of $t$ and $S_t$ to be $\wt(S_t\cap S^*)/\mu(t)$, 
  the amount $S_t$ charges $S^*$, divided by the capacity $\mu(t)$.
  Call $t$ and $S_t$ \emph{congested} if this congestion exceeds
  $\congest$,
  and \emph{uncongested} otherwise ($\kappa > 0$ is a constant that is specified later).
\end{definition}

\begin{lemma}[uncongested times]\label{lemma: min-sum dominant uncongested}
  The total charge to $S^*$ at uncongested times is $O(t_2-t_1)$.
\end{lemma}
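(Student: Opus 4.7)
The plan is to bound the total charge simply by summing the capacities $\mu(t)$ over all dominant times $t$, using the fact that uncongested times by definition satisfy $\wt(S_t\cap S^*) \le 64\,\mu(t)$.

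First I would observe that, by the definition of dominant, the capacities $\mu(t)$ at dominant times $t\in D$ are strictly increasing along the interval $\interval{t_1}{t_2}$. Since each $\mu(t)$ is a power of two, the multiset $\{\mu(t) : t \in D\}$ consists of \emph{distinct} powers of two. Moreover, we have already removed from consideration the (at most one) time in $\interval{t_1}{t_2}$ with capacity exceeding $t_2-t_1+1$, so every $t\in D$ satisfies $\mu(t) \le t_2 - t_1 + 1$.

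Next I would use the standard geometric-sum bound: any collection of distinct powers of two, each at most $M$, sums to at most $2M-1$. Applied here, this gives
\[
\sum_{t \in D} \mu(t) \;\le\; 2(t_2 - t_1 + 1) - 1.
\]
Since every uncongested dominant time $t$ contributes charge $\wt(S_t \cap S^*) \le 64\,\mu(t)$ to $S^*$, the total charge at uncongested dominant times is at most $64 \sum_{t\in D} \mu(t) \le 128(t_2 - t_1 + 1) = O(t_2 - t_1)$, as claimed.

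There is no real obstacle here; the only thing to be careful about is making sure the definition of dominant really forces the capacities of dominant times in $[t_1,t_2]$ to be strictly increasing powers of two (so that we can invoke the geometric bound), and that the restriction $\mu(t) \le t_2-t_1+1$ inherited from the preceding reduction lets us cap the sum by $2(t_2-t_1+1)$. The factor $64$ in the definition of ``uncongested'' plays no role in this particular lemma; it is chosen to make the complementary (congested) case work in the subsequent argument that will give the $O(\log^*(\Delta)\,\wt(S^*))$ contribution.
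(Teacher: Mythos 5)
Your proof is correct and takes essentially the same approach as the paper: bound the charge at each uncongested dominant time by $64\,\mu(t)$, observe that the capacities at dominant times in $[t_1,t_2]$ are distinct powers of two bounded by $t_2-t_1+1$, and sum the geometric series. The only superficial difference is that you make the ``strictly increasing'' observation and the $2M-1$ bound explicit, which the paper leaves implicit.
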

\begin{proof}
  The charge to $S^*$ at any uncongested time $t$ is at most
  %
  $\congest \mu(t)$,
  %
  so the total charge to $\out^*$ during such times is at most $\congest\sum_{t\in D} \mu(t)$.
  By definition of \emph{dominant},
  the capacity $\mu(t)$ for each $t\in D$ is a distinct power of 2 no larger than $t_2-t_1+1$.
  So $\sum_{t\in D} \mu(t)$ is at most $2(t_2-t_1+1)$,
  and the total charge to $\out^*$
  during uncongested times is $O(t_2-t_1)$.
\end{proof}

\newcommand{\cng}{\alpha}

\begin{lemma}[congested times]\label{lemma: min-sum dominant congested}
  The total charge to $S^*$ at congested times
  is $O(\wt(S^*) \log^* \Delta)$.
\end{lemma}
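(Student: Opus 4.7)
The plan is to exploit the congestion condition $w_i>64\mu(\tau_i)$ to obtain two consequences: (i) at least $64$ distinct components containing items from $S^*$ are merged at $\tau_i^-$ (since each such merged component has weight at most $\mu(\tau_i)$ but their aggregate $S^*$-weight exceeds $64\mu(\tau_i)$); and (ii) the newly built $S_{\tau_i}$ has total weight $W_i\ge w_i>64\mu(\tau_i)$, so for any descendant of $S_{\tau_i}$ in the algorithm's merge tree to itself be merged, the capacity at that later time must be at least $W_i$.

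Next, I would track, for each item $x\in S^*$, the sequence of congested dominant times $\tau_{i_1(x)}<\tau_{i_2(x)}<\cdots$ at which $x$ is absorbed into a new component built in Line~\ref{line: min-sum merge}. Because the weight of $x$'s current component only grows over time and equals $W_{i_l(x)}$ immediately after the $l$-th absorption of $x$, consequence (ii) gives $\mu(\tau_{i_{l+1}(x)})\ge W_{i_l(x)}>64\mu(\tau_{i_l(x)})$. So the sequence of capacities at which any single item is absorbed grows by a factor of at least $64$ per absorption.

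The plan is then to convert this geometric per-item growth into the claimed $O(\log^*\Delta)$ aggregate factor via a tower-of-$2$'s decomposition on the capacity axis. Introduce thresholds $\lambda_0<\lambda_1<\cdots$ with $\lambda_{l+1}=2^{\lambda_l}$ and classify each congested dominant time $\tau_i$ by the band $[\lambda_{l-1},\lambda_l)$ containing $\mu(\tau_i)$. Using the bound $W_i\le n_i\mu(\tau_i)\le\Delta\mu(\tau_i)$ from consequence (i) to iterate consequence (ii), one should argue that only $O(\log^*\Delta)$ bands are relevant, and that the total charge from congested dominant times falling in a single band is $O(\wt(S^*))$; summing over bands then yields the lemma.

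The main obstacle is establishing the per-band bound. A naive per-item accounting gives only an $O(\log\Delta)$ bound on the number of absorptions per item, a full $\log/\log^*$ factor too weak. The proof will need to combine (a) the $n_i\ge 64$ simultaneous-consolidation constraint (many items' absorption events are triggered by a single merge), (b) the fact that the number of components in $\out_t$ containing any item of $S^*$ is non-increasing across $t\in[t_1,t_2]$ (since no items of $S^*$ are inserted after $t_1$, and merges only coalesce), and (c) a potential function on the multiset of $S^*$-carrying component weights that drops by $\Omega(w_i)$ at each congested time $\tau_i$ while remaining bounded by $O(\wt(S^*))$ within a band. Designing this potential so that it is robust to the intermediate non-dominant and non-congested merges that redistribute $S^*$-weight without inflating the accounting is the delicate technical point.
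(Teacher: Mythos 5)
Your proposal identifies the right obstacle but does not overcome it, and the gap is precisely at the heart of the lemma. You correctly observe (consequences (i)--(ii)) that congested times force the merged component's weight past $64\mu(\tau_i)$, so a single item's sequence of absorption capacities grows by at least a factor $64$ per congested absorption. As you say yourself, this gives only an $O(\log\Delta)$ per-item bound --- a full $\log/\log^*$ factor short. You then propose a band decomposition on the capacity axis plus a potential function on the multiset of $S^*$-carrying component weights that would drop by $\Omega(w_i)$ per congested time while staying $O(\wt(S^*))$ within a band. But you never construct this potential or verify the per-band bound; you explicitly flag it as ``the delicate technical point.'' That step is the lemma, and without it the argument is not a proof. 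Moreover, it is not clear such a potential exists: within a single band of capacities, the number of congested dominant times is bounded only by the band width (roughly $\log$ of the top threshold over the bottom threshold), and each one can charge up to $\wt(S^*)$; there is no obvious deterministic invariant that caps the total per-band charge at $O(\wt(S^*))$ independently of how items redistribute among components at intermediate merges.

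The paper closes exactly this gap by abandoning per-item worst-case accounting in favor of an averaging argument. It writes the total congested charge as $\sum_{x\in S^*}|W(x)|\wt(x) = \wt(S^*)\,\E_X[|W(X)|]$ for a random item $X$ chosen with probability proportional to $\wt(x)$, builds a ``merge forest'' whose internal nodes are the congested components, and analyzes the random walk from the root down to the leaf $\{X\}$. The engine of the proof is an expectation bound: at each step, the \emph{iterated} logarithm of the current node's congestion increases by at least $1/5$ in expectation. This rests on two probability estimates --- $\Pr[X\in S\mid X\in S_t]\le 1/\cng_t$ (heavy items are rare at high-congestion nodes, which is the averaging that your deterministic per-item account cannot exploit) and $\Pr[X\in S_{t'}\mid X\in S_t]\le \beta/2^{j(t')}$ --- whose combination shows the walk lands, with probability roughly $1/2$, on a child of congestion at least $2^{\cng_t/2}$. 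That super-exponential drift in expectation is what produces $\log^*$ via Wald's equation, and it is a qualitatively different mechanism than the geometric per-item growth your sketch relies on.
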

\begin{proof}
  Let $Z$ denote the set of congested times. 
  For each item $x\in S^*$,
  let $W(x)$ be the collection of congested components that contain $x$ and charge $S^*$.
  The total charge to $S^*$ at congested times is $\sum_{x\in S^*} |W(x)| \wt(x)$.

  To bound this, we use a random experiment that starts by
  choosing a random item $X$ in $S^*$,
  where each item $x$ has probability proportional to $\wt(x)$ of being chosen:
  $\Pr[X = x] = \wt(x)/\wt(S^*)$.

  We will show that $\E_X[|W(X)|]$ is $O(\log^* \Delta)$.
  Since $\E_X[|W(X)|] = \sum_{x\in S^*} |W(x)| \wt(x)/\wt(S^*)$,
  this will imply that the total charge is $O(\log^* \Delta) \wt(S^*)$, proving the lemma.
  
  \paragraph{The merge forest for $S^*$.}
  Define the following \emph{merge forest}.
  There is a leaf $\{x\}$ for each item $x\in S^*$.
  There is a non-leaf node $S_t$ for each congested component $S_t$.
  The parent of each leaf $\{x\}$ is the first congested component $S_t$ that contains $x$
  (that is, $t=\min\{i \in Z : x\in S_i$), if any.
  The parent of each node $S_t$ is the next congested component $S_{t'}$ that contains
  all items in $S_t$ (that is, $t' = \min\{i \in Z: i > t,\, S_t \subseteq S_{i}\}$), if any.
  Parentless nodes are roots.

  The random walk starts at the root of the tree that holds leaf $\{X\}$,
  then steps along the path to that leaf in the tree.
  In this way it traces (in reverse) the sequence $W(X)=\{S_i : X\in S_i\}$ of congested components
  that $X$ entered during $\interval {t_1} {t_2}$.
  The number of steps is $|W(X)|$.
  To finish, we show that the expected number of steps is $O(\log^*\Delta)$.
  
  Each non-leaf node $S_t$ in the tree has congestion $\wt(S_t\cap S^*)/\mu(t)$,
  which is at least $\congest$ and at most $\Delta$.
  For the proof, define the congestion of each leaf $x$ to be $2^\Delta$.
  To finish, we argue that \emph{with each step of the random walk,
    the iterated logarithm of the current node's congestion
    increases in expectation by at least $1/5$.}

  \paragraph{A step in the random walk.}
  Fix any non-leaf node $S_t$.
  Let $\cng_t = \wt(S_t\cap S^*)/\mu(t)$ be its congestion.
  The walk visits $S_t$
  with probability $\wt(S^*\cap S_t)/\wt(S^*)$.
  Condition on this event (that is, $X \in S_t$).
  Let random variable $\cng'$ be the congestion of the child of $S_t$ next visited.
  \begin{sublemma}\label{sublemma: bound}
    For any $\beta \in \closedopen{\cng_t, 2^{\Delta}}$,
    $\Pr[\cng' > \beta \giv X\in S_t]$
    is at least $1-\cng_t^{-1}(2 + \log_2 \beta)$.
  \end{sublemma}
  \begin{proof}
    Consider any child $S_{t'}$ of $S_t$ with $\cng_{t'} \le \beta$.
    We will bound the probability that $S_{t'}$ is visited next (i.e., $X\in S_{t'}$).
    Node $S_{t'}$ is not a leaf, as $\cng_{t'} < 2^{\Delta}$.
    Define $j(t')$ so that
    its capacity $\mu(t')$ equals $\mu(t)/2^{j(t')}$.
    (That is, $j(t') = \log_2(\mu(t)/\mu(t'))$.)
    The definitions and $\cng_{t'} \le \beta$ imply
    \begin{align}
          \Pr[X\in S_{t'} \giv X\in S_t]
          ~=~
          \frac{\wt(S_{t'}\cap S^*)}{\wt(S_t\cap S^*)}
              ~=~
              \frac{\alpha_{t'}\, \mu(t')}{\cng_t\, \mu(t)}
              ~\le~
              \frac{\beta\, \mu(t)/ 2^{j(t')}}{\cng_t\, \mu(t)}
              ~=~
              \frac{\beta}{\cng_t\, 2^{j(t')}}. \label{obs: two}
    \end{align}

    Also, the algorithm merged a component containing $S_{t'}$ at time $t$,
    so $\wt(S_{t'})\le \mu(t)$,
    so
    \begin{align}
          \Pr[X\in S_{t'} \giv X\in S_t]
          ~=~
          \frac{\wt(S_{t'}\cap S^*)}{\wt(S_t\cap S^*)}
              ~=~
              \frac{\wt(S_{t'}\cap S^*)}{\cng_t \, \mu(t)}
              ~\le~
              \frac{\wt(S_{t'})}{\cng_t \, \mu(t)}
              ~\le~
              \frac{1}{\cng_t}. \label{obs: one}
    \end{align}

    Combining Bounds~\eqref{obs: two} and~\eqref{obs: one},
    $\Pr[X\in S_{t'} | X\in S_t]$ is at most $\cng_t^{-1}\min(1, \beta\, 2^{-j(t')})$.
    Summing this bound over all children $S_{t'}$ of $S_t$ with congestion $\cng_{t'} \le \beta$,
    and using that each $j(t')$ is a distinct positive integer,
    the probability that $\cng' \le \beta$ is at most
    \[
      \cng_t^{-1} \sum_{j = 1}^\infty \min(1, \beta\, 2^{-j})
      \le
      \cng_t^{-1} \int_0^\infty \min(1, \beta\, 2^{-j}) \, dj
      = \cng_t^{-1}(\log_2(\beta) + 1/ \ln 2)
    \]
    (splitting the integral at $j=\log_2 \beta$).
    The sublemma follows from $1/\ln 2 \le 2$.
  \end{proof}
  
  Next we lower-bound the expected increase in the $\log^*$ of the congestion in this step.
  We use $\sqrt 2$ as the base of the iterated log.\footnote
  {Defined by $\log_{\sqrt 2}^*\cng_t = 0$ if $\cng_t \le 8$, else $1+\log_{\sqrt 2}^*(\log_{\sqrt 2} \cng_t)$.
    Note that $\log^*_{\sqrt 2} \cng_t = \Theta(\log^*_e \cng_t)$.}
  Then $\log^*(2^{\cng_t/2}) = 1+ \log^* \cng_t$,
  so, conditioned on $X\in S_t$,
  \[
    \E[\log^* \cng'] 
    \,\ge\, \Pr[\cng' \ge \cng_t ] \log^* \cng_t
    \,+\,\Pr[\cng' \ge 2^{\cng_t/2}].
    \]
  Bounding the two probabilities above
  via Sublemma~\ref{sublemma: bound}
  with $\beta=\cng_t$ and $\beta=2^{\cng_t/2}$,
  the right-hand side above is 
  \begin{align*}
    & {}\ge
      [1 - \cng_t^{-1}(2 + \log_2\cng_t)]\log^* \cng_t
      \,+\,
      [1 - \cng_t^{-1}(2  + \cng_t/2)] \\
    & {}=
      \log^*(\cng_t) + 1/2 - [2 + (2 + \log_2 \cng_t)\log^*\cng_t]/\cng_t \\
    & {}\ge \log^*(\cng_t) + 1/2 - 3/10 ~=~ \log^*(\cng_t) + 1/5,
  \end{align*}
  where the last inequality follows from $\cng_t \ge \congest$ ($t$ is congested) and by setting $\congest \ge 64$.
  It follows that $E[\log^*\alpha' - \log^*\alpha_t \giv X \in S_t] \ge 1/5$.
  That is, in each step, the expected increase in the iterated logarithm of the congestion is at least 1/5.

  Let random variable $L=|W(X)|$ be the length of the random walk.
  Let random variable $\alpha'_i$ be the congestion of the $i$th node on the walk.
  By the previous section, for each $i$, given that $i < L$,
  $\E[\log^* \alpha'_{i+1} - \log^* \alpha'_i \giv \alpha'_i] \ge 1/5$.
  It follows by
  %
  Wald's equation
  (see~\cite[p.~370]{borodin1998online} and~\cite[Lemma~4.1]{young2000kmedians})
  that
  $\E[\log^* \alpha'_L - \log^* \alpha'_1] \ge \E[L]/5$.
  Since $\alpha'_L = 2^\Delta$ and $\log^* \alpha'_1 \ge 0$,
  we have $\E[\log^* \alpha'_L - \log^*\alpha'_1] \le \log^* 2^\Delta$.
  %
  It follows that $\E[L] \le 5 \log^* 2^\Delta$.
  Recall that the base of the iterated logarithm is $\sqrt{2}$; so, $\log^* 2^{\Delta} = 2 + \log^*\Delta$, yielding $\E[L] \le 10 + 5\log^* \Delta$.
  %
  That is, the expected length of the random walk is $O(\log^*\Delta)$.
  By the discussion at the start of the proof, this implies the lemma.
\end{proof}

To recap,
for each component $S_t$ built by the algorithm,
the (over-counted) build cost
is charged item by item to those components in the optimal solution $\out^*$ that currently contain the item.
In this way, the algorithm's total over-counted build cost $\sum_t \wt(S_t)$
is charged to components in $\out^*$.
By Lemmas~\ref{lemma: min-sum non-dominant}--\ref{lemma: min-sum dominant congested},
each component $S^*$ in the optimal solution $\out^*$
is charged $O(1)$ times its contribution $t_2-t_1$ to the query cost of $\out^*$
plus (in expectation) $O(\log^* m)$ times its contribution $\wt(S^*)$ to the build cost of $\out^*$.
It follows that the expected build cost incurred by the algorithm
is $O(\log^* m)$ times the cost of $\out^*$.

By Lemma~\ref{lemma: min-sum query}, the total \emph{query} cost incurred by the algorithm
is at most twice the algorithm's over-counted build cost plus the cost of $\out^*$.
It follows that the total (build and query) cost incurred by the algorithm
is $O(\log^* (m))$ times the cost of $\out^*$.
That is, the competitive ratio is $O(\log^* m)$, proving Part (i) of Theorem~\ref{thm: min-sum}.\footnote
{Curiously, the algorithm's cost
  is in fact $O(1)$ times the query cost of $\out^*$
  plus $O(\log^* m)$ times its build cost.}

\subsection{Part (ii): the competitive ratio is $\Omega(\log^* m)$}\label{sec: min-sum lower}

  \begin{figure*}
  \centering
  {
  \newcommand{\TT}[1]{{\scriptstyle 2}^{#1}}
  \begin{forest}
    s sep=0.45em, 
    for tree = {
      draw=black!40, 
      inner sep=1pt,
      text centered,
      draw=none, fill=none, 
      math content,
      l sep=3pt, l=0              
    }, 
    [{\TT {18}}
      [{\TT {17}}
        [{\TT {15}}
          [{\TT {9}}, name=e1]
          [\cdots, draw=none, fill=none, no edge]
          [{\TT {9}}, name=e2]
          ]
        [{\TT {15}}
          [{\TT {10}}, name=d1]
          [\cdots, draw=none, fill=none, no edge]
          [{\TT {10}}, name=d2]
          ]
        [{\TT {15}}
          [{\TT {11}}, name=c1]
          [\cdots, draw=none, fill=none, no edge]
          [{\TT {11}}, name=c2]
          ]
        [{\TT {15}}
          [{\TT {12}}, name=b1]
          [\cdots, draw=none, fill=none, no edge]
          [{\TT {12}}, name=b2]
          ]
        ]
      [{\TT {17}}
        [{\TT {16}}
          [{\TT {13}}, name=a1]
          [\cdots, draw=none, fill=none, no edge]
          [{\TT {13}}, name=a2]
          ]
        [{\TT {16}}
          [{\TT {14}}]
          [{\TT {14}}]
          [{\TT {14}}]
          [{\TT {14}}]
          ]
        ]
      ]
    \draw [decorate,decoration={brace,amplitude=7pt,mirror}] ([yshift=-3.5mm] a1.center) -- node[below=3mm]{ $\TT 3$} ([yshift=-3.5mm] a2.center); 
    \draw [decorate,decoration={brace,amplitude=7pt,mirror}] ([yshift=-3.5mm] b1.center) -- node[below=3mm]{ $\TT 3$} ([yshift=-3.5mm] b2.center); 
    \draw [decorate,decoration={brace,amplitude=7pt,mirror}] ([yshift=-3.5mm] c1.center) -- node[below=3mm]{ $\TT 4$} ([yshift=-3.5mm] c2.center); 
    \draw [decorate,decoration={brace,amplitude=7pt,mirror}] ([yshift=-3.5mm] d1.center) -- node[below=3mm]{ $\TT 5$} ([yshift=-3.5mm] d2.center); 
    \draw [decorate,decoration={brace,amplitude=7pt,mirror}] ([yshift=-3.5mm] e1.center) -- node[below=3mm]{ $\TT 6$ leaves} ([yshift=-3.5mm] e2.center); 
  \end{forest}
}
  \caption{\small
    The ``merge tree'' for an execution of the Adaptive-Binary algorithm
    (Figure~\ref{fig: min-sum alg}).
    The input sequence starts with $m=132$ \dsop{insert}s $\inp_1, \inp_2, \ldots, \inp_{132}$
    --- one for each leaf, of weight equal to leaf's label.
    It continues with $2^{16} - 132$ empty \dsop{insert}s ($\inp_t=\emptyset$).
    At each time $t=2^9, 2^{10}, 2^{11}, \ldots, 2^{17}$
    (during the empty \dsop{insert}s)
    the algorithm merges all components of weight $t$ to form a single new component, their parent.
    In this way, the algorithm builds a component for each node, with weight equal to the node's label.
    At time $t=2^{17}$ the final component is built---the root, of weight $2^{18}$, containing all items.
    The algorithm merges each item four times, so pays build cost  $4\times 2^{18}$.
  }\label{fig: min-sum example}
\end{figure*}

  \begin{figure*}
    \centering
    {
  \newcommand{\TT}[1]{{\scriptstyle 2}^{#1}}
  \newcommand{\XX}[1]{{#1}}
  \begin{forest}
    s sep=0.4em, 
    for tree = {
      draw=black!40, 
      inner sep=1pt,
      text centered,
      draw=none, fill=none, 
      math content,
      l sep=3pt, l=0              
    }, 
    [{\XX {1}}
      [{\XX {3}}
        [{\XX {9}}
          [{\XX {141}}, name=e1]
          [\cdots, draw=none, fill=none, no edge]
          [{\XX {78}}, name=e2]
          ]
        [{\XX {8}}
          [{\XX {77}}, name=d1]
          [\cdots, draw=none, fill=none, no edge]
          [{\XX {46}}, name=d2]
          ]
        [{\XX {7}}
          [{\XX {45}}, name=c1]
          [\cdots, draw=none, fill=none, no edge]
          [{\XX {30}}, name=c2]
          ]
        [{\XX {6}}
          [{\XX {29}}, name=b1]
          [\cdots, draw=none, fill=none, no edge]
          [{\XX {22}}, name=b2]
          ]
        ]
      [{\XX {2}}
        [{\XX {5}}
          [{\XX {21}}, name=a1]
          [\cdots, draw=none, fill=none, no edge]
          [{\XX {14}}, name=a2]
          ]
        [{\XX {4}}
          [{\XX {13}}]
          [{\XX {12}}]
          [{\XX {11}}]
          [{\XX {10}}]
          ]
        ]
      ]
    \draw [decorate,decoration={brace,amplitude=7pt,mirror}] ([yshift=-3.5mm] a1.center) -- node[below=3mm]{ $\TT 3$} ([yshift=-3.5mm] a2.center); 
    \draw [decorate,decoration={brace,amplitude=7pt,mirror}] ([yshift=-3.5mm] b1.center) -- node[below=3mm]{ $\TT 3$} ([yshift=-3.5mm] b2.center); 
    \draw [decorate,decoration={brace,amplitude=7pt,mirror}] ([yshift=-3.5mm] c1.center) -- node[below=3mm]{ $\TT 4$} ([yshift=-3.5mm] c2.center); 
    \draw [decorate,decoration={brace,amplitude=7pt,mirror}] ([yshift=-3.5mm] d1.center) -- node[below=3mm]{ $\TT 5$} ([yshift=-3.5mm] d2.center); 
    \draw [decorate,decoration={brace,amplitude=7pt,mirror}] ([yshift=-3.5mm] e1.center) -- node[below=3mm]{ $\TT 6$ nodes} ([yshift=-3.5mm] e2.center); 
  \end{forest}
}
    \caption{
      The top three levels of $T_\infty$.  Each node $i$ has $2^{i-p(i)}$ children,
      where $p(i)$ is the parent of $i$ (exc.~$p(1)=0$).
      The merge tree $T^N_2$ (Figure~\ref{fig: min-sum example}) consists of these three levels,
      with each node $i$ given weight $2^{N-p(i)}$, so the nodes with weight $2^{N-i}$
      are the $2^{i-p(i)}$ children of node $i$, and their total weight equals the weight of node $i$.  Note that the label of a node in the merge tree is its weight, and the merge tree of Figure~\ref{fig: min-sum example} is $T^{18}_2$.
    }\label{fig: min-sum T}
  \end{figure*}

\begin{lemma}\label{lemma: min-sum lower}
  The competitive ratio of the Adaptive-Binary algorithm (Figure~\ref{fig: min-sum alg})
  is $\Omega(\log^* m)$.
\end{lemma}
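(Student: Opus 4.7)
The plan is to exhibit, for each depth $d \ge 1$, a bad input on which the algorithm's cost is $\Omega(d)$ times the optimum, with the number $m$ of non-empty insertions satisfying $d = \Theta(\log^* m)$. The construction uses the merge tree $T_2^N$ of Figure~\ref{fig: min-sum T}, truncated to $d$ levels: I would insert each leaf of $T_2^N$ as its own singleton (with weight $2^{N - p(i)}$ equal to its label in the tree) at times $t = 1, 2, \ldots, m$, and then submit empty insertions $\inp_t = \emptyset$ through time $n = 2^N$. I would choose $N$ sufficiently large to guarantee that (i) all leaf weights are positive integers, and are strictly greater than any capacity $\mu(t)$ arising during the $m$ insertion steps, so nothing is merged while leaves are being inserted, and (ii) $m^2 \le 2^N$, which will be needed for the offline upper bound.

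The key step, and the main technical obstacle, is to show that after the insertion phase the algorithm performs exactly one merge in Line~\ref{line: min-sum merge} per internal node of $T_2^N$, building that node's component. I would prove this by induction on internal nodes from the largest label down to the root: at time $t_v = 2^{N-v}$ the capacity $\mu(t_v) = 2^{N-v}$ equals the weight of each child of $v$ (whose weight is $2^{N - p(\text{child})} = 2^{N-v}$), while every other component then in the cover has weight strictly greater than $2^{N-v}$. The latter holds because at time $t_v$ the only components are (a) children of $v$ of weight $2^{N-v}$, (b) not-yet-merged leaves whose parent label is strictly less than $v$ and hence of weight $> 2^{N-v}$, or (c) previously built internal-node components whose own parent label is strictly less than $v$ and hence of weight $> 2^{N-v}$. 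Therefore the merge at $t_v$ combines exactly the children of $v$ into the new component for $v$. Between consecutive merge times the capacity $\mu(t)$ is strictly smaller than the minimum current component weight, so no spurious merges fire. The engineered form $\wt(i) = 2^{N-p(i)}$ and the branching $2^{i-p(i)}$ in $T_\infty$ are what make this induction close cleanly.

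Given the verification, the algorithm builds one component per internal node of $T_2^N$. Since $T_2^N$ has $d$ internal levels, each of total weight $2^N$, the algorithm's total build cost is at least $d \cdot 2^N$. For the offline upper bound, I would use the simple strategy that keeps each insertion as a singleton through time $m$ and then at time $m+1$ merges everything into a single component that persists to time $n = 2^N$: the build cost is $2 \cdot 2^N$, the query cost over the first $m$ steps is $O(m^2) \le O(2^N)$, and the query cost afterwards is $O(2^N)$, for a total of $O(2^N)$. Hence the competitive ratio on this input is $\Omega(d)$.

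Finally, I would verify that $d = \Theta(\log^* m)$ for this family. From the recursive definition of $T_\infty$ (node $i$ has $2^{i-p(i)}$ children), the largest label at depth $d+1$ satisfies $M(d+1) \ge 2^{M(d) - M(d-1)}$, so $M(d)$, and hence the number $m$ of leaves at depth $d$, grow as a tower of twos of height $\Theta(d)$. Thus $\log^* m = \Theta(d)$, yielding the claimed $\Omega(\log^* m)$ lower bound.
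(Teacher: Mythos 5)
Your construction and analysis track the paper's proof quite closely: the same infinite tree $T_\infty$, the same power-of-two weight assignment $2^{N-p(i)}$, the same adversarial input (insert the leaves one per step, then pad with empty insertions until the final merge), the same level-by-level induction showing the algorithm's merge tree realizes $T_D^N$, and the same offline comparison that merges everything once after the insertion phase. So in substance this is the paper's argument.

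There is, however, a genuine gap in the final step. You justify $\log^* m = \Theta(d)$ by appealing to the inequality $M(d+1) \ge 2^{M(d)-M(d-1)}$, which is a \emph{lower} bound on the tree's growth. But what the competitive-ratio claim actually requires is the opposite direction: you have shown the ratio is $\Omega(d)$, and to conclude that this is $\Omega(\log^* m)$ you need $d = \Omega(\log^* m)$, equivalently $\log^* m = O(d)$, which is an \emph{upper} bound on $m$ in terms of $d$. Your lower bound on $M$ only yields $d = O(\log^* m)$, which says nothing about whether $\Omega(d)$ dominates $\Omega(\log^* m)$. (If the tree grew even faster than a tower, your explicit inequality would still hold, yet the claimed conclusion would fail.) The fix is the upper bound the paper uses: with $n_d$ the number of nodes of depth at most $d$, each such node $i$ has $i \le n_d$ and hence at most $2^{n_d}$ children, so $n_{d+1} \le 1 + n_d \cdot 2^{n_d} \le 2^{2^{n_d}}$, which gives $\log^* n_{d+1} \le 2 + \log^* n_d$ and therefore $\log^* m \le \log^* n_{D+1} \le 2(D+1)$. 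You should replace the lower-bound appeal with this upper bound, after which the argument is complete.
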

\begin{proof}
  We will show a ratio of $\Omega(\log^* m)$ on a particular class of inputs,
  one for each integer $D\ge 0$.
  %
  (Figure~\ref{fig: min-sum example} describes the input $\inp$ for $D=2$ and the resulting merge tree, of depth $D+1$.)
  %
  
  \paragraph{The desired merge tree.}
  For reference,
  define an infinite rooted tree $T_\infty$ with node set $\{1, 2, 3, \ldots\}$ by the iterative process shown
  in Figure~\ref{fig: tree definition}.
  \begin{figure}
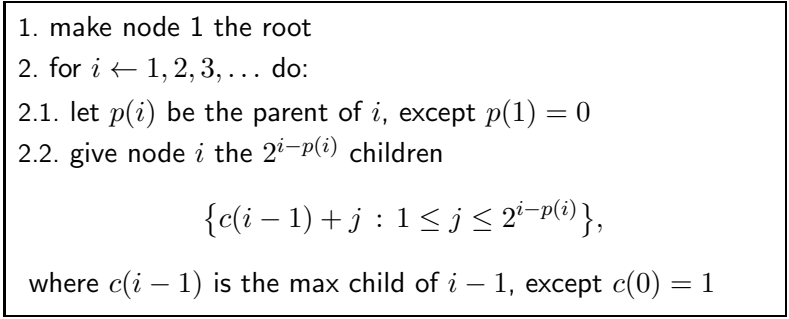
\centering
    \framebox{\parbox{0.8\textwidth}{
        \begin{steps}\sf
          \step make node 1 the root
          \step for $i \gets 1, 2, 3, \ldots$ do:
          \begin{steps}\label{block: split}
            \step let $p(i)$ be the parent of $i$, except $p(1) = 0$
            \step\label{line: allocate}
            give node $i$ the $2^{i - p(i)}$ children
            \(\big\{c({i-1})+j \,:\, 1 \le j \le 2^{i-p(i)}\big\}\),
            \\\hspace*{1em} where $c(i-1)$ is the max child of $i-1$, except~$c(0) = 1$
          \end{steps}
        \end{steps}
      }}
  \caption{An algorithm defining the tree $T_\infty$, with nodes $\{1, 2, 3, \ldots\}$.}\label{fig: tree definition}
  \end{figure}
  Each iteration $i$ defines the children of node $i$.
  %
  Node $i$ has $2^{i-p(i)}$ children, where $p(i)$ is the parent of $i$ (exc.\ $p(1) = 0$); these children are
  %
  allocated greedily from the ``next available'' nodes,
  so that each node $i\ge 2$ is given exactly one parent.
  The depth of $i$ is non-decreasing with $i$.\footnote
  {This follows by induction: Line~\ref{line: allocate}
    ensures that $p(i') \le p(i)$ for $i' < i$, so inductively
    $\text{depth}(i') = 1 + \text{depth}(p(i')) \le 1+ \text{depth}(p(i)) = \text{depth}(i)$.}

  %
  Figure~\ref{fig: min-sum T} shows the top three levels of $T_\infty$.
  %
  Let $n_d$ be the number of nodes of depth $d$ or less in $T_\infty$.
  Each such node $i$ satisfies $i\le n_d$
  (as depth is non-decreasing with $i$),
  so, inspecting Line~\ref{line: allocate}, node $i$ has at most $2^i \le 2^{n_d}$ children.
  Each node of depth $d+1$ or less is either the root or a child of a node of depth $d$ or less,
  so $n_{d+1} \le 1 + n_d 2^{n_d} \le 2^{2^{n_d}}$.
  Taking the $\log^*$ of both sides gives
  $\log^* n_{d+1} \le 2 + \log^* n_d$.
  Inductively, $\log^* n_d \le 2 d$ for each $d$.

  Now fix an integer $D\ge 0$.
  Define the desired merge tree, $T^N_D$, to be the subtree of $T_\infty$ induced by the nodes of depth at most $D+1$.
  Let $m$ be the number of leaves in $T^N_D$.
  By the previous paragraph (and $m\le n_{D+1}$), \emph{every leaf in $T^N_D$ has depth $\Omega(\log^* m)$}.

  Assign weights to the nodes in $T^N_D$ as
  follows.
  Fix $N=2\,n_{D}$.
  Give each node $i$ weight $2^{N-p(i)}$,
  where $p(i)$ is the parent of $i$ (except $p(1)=0$).
  Each weight is a power of two,
  and the nodes of any given weight $2^{N-i}$ are exactly the $2^{i-p(i)}$ children of node $i$.
  The weight of each parent $i$ equals the total weight of its children.
  
  \paragraph{The input.}
  Define the input $\inp$ as follows.
  For each time $t\in\{1,2,\ldots, m\}$, insert a set $\inp_t$
  containing just one item whose weight equals
  the weight of the $t$th leaf of $T^N_D$.
  Then, at each time $t\in\{m+1, m+2, \ldots, 2^{N-1}\}$,
  insert an empty set $\inp_t = \emptyset$.

  In the following, we place a lower bound on the cost of the algorithm on input $\inp$.
  For this, we establish a matching between the algorithm's cover and the leaves of $T^N_D$, which guides our bound on the build cost of the algorithm.

  \paragraph{No merges until last non-empty insertion.}
  %
  The algorithm does no merges before time $\min_{i=1}^m \wt(\inp_i)$, which is the minimum leaf weight in $T^N_D$.
  This is because if a merge occurs at a time $t$ then there must be more than one component of weight at most $t$ at that time
  (by step~\ref{line: min-sum merge} of the algorithm, see Figure~\ref{fig: min-sum alg}).
  The lightest leaves are the children of node $n_D$,
  of weight $2^{N-n_D}$.
  Since the total leaf weight is the weight of the root, $2^N$,
  it follows that $m 2^{N-n_D} \le 2^N$, that is, $m \le 2^{n_D} = 2^{N-n_D}$
  (using $N=2\,n_D$).
  So, \emph{the algorithm does no merges until time $t(n_D)=2^{N-n_D}$ (after all non-empty insertions).}

  \paragraph{The algorithm's merge tree matches $T^N_D$.}
  %
  By the previous two paragraphs,
  %
  just before time $t(n_D)=2^{N-n_D}$ 
  the algorithm's cover \emph{matches} the leaves of $T^N_D$,
  meaning that the cover's components correspond to the leaves,
  with each component weighing the same as its corresponding node.
  The leaves are $\{j : p(j) \le n_d < j\}$.
  So the following invariant holds initially, for $i=n_D$:
  
  \emph{For each $i\in\{n_D, n_D-1, \ldots, 2, 1\}$, just before time $t(i) = 2^{N-i}$,
    the algorithm's cover $\out_{t(i)}$  matches the nodes in $Q_i$, defined as}
  \[Q_i \doteq \{j : 2^{N-j} < t(i) \le 2^{N-p(j)}\} = \{j : p(j) \le i < j\}. \]
  Informally, these are the nodes $j$ that have not yet been merged by time $t(i)$,
  because their weight $2^{N-p(j)}$ is at least $t(i)$,
  but whose children (the nodes of weight $2^{N-j}$) if any, have already been merged.

  %
  We establish the invariant for all $i$ using a backward induction.  Assume inductively that the invariant holds for a given $i$.  We show it holds for $i-1$.
  %
  At time $t(i)$, the algorithm merges the components
  of weight at most $\mu(t(i)) = t(i) = 2^{N-i}$ in its cover.
  By the invariant, these are the components of weight $t(i) = 2^{N-i}$,
  corresponding to the children of node $i$ (which are all in $Q_i$).
  They leave the cover and are replaced by their union,
  whose weight equals $2^{N-p(i)}$.
  Likewise, by the definition (and $p(j) < j$)
  \[Q_{i-1} = \{i\} \cup Q_i \setminus \{ j : p(j) = i \}, \]
  so the resulting cover matches $Q_{i-1}$,
  with the new component corresponding to node $i$.
  The minimum-weight nodes in $Q_{i-1}$ are then $\{j : p(j) = i-1\}$,
  the children of node $i-1$.
  These have weight $2^{N-(i-1)} = t(i-1)$,
  so the algorithm keeps this cover until just before time $t(i-1)$,
  so that the invariant is maintained for $i-1$.

  Inductively, we arrive at the validity of the  invariant for $i=1$:
  just before time $t(1)=2^{N-1}=n$, 
  the algorithm's cover contains the components corresponding to
  $\{j : p(j) = 1 < j\}$, with weight $2^{N-p(j)} = 2^{N-1}=n$.
  At time $n$ they are merged form the final component of weight $2^N$,
  corresponding to the root node 1.
  So the algorithm's merge tree matches $T^N_D$.

  \paragraph{Competitive ratio.}
  Each leaf in the merge tree has depth $\Omega(\log^* m)$,
  so every item is merged $\Omega(\log^* m)$ times,
  and the algorithm's build cost is  $\Omega(\wt(1) \log^* m) = \Omega(n \log^* m)$
  (using $\wt(1)=2n$).

  But the optimal cost is $\Theta(n)$.
  (Consider the solution that merges all input sets into one component at time $m$,
  just after all non-empty insertions.
  Its query cost is $\sum_{t=1}^{m-1} t + \sum_{t=m}^n 1 =O(m^2 + n)$.
  Its merge cost is $2\wt(1) = O(n)$.
  Recalling that $m \le 2^{n_D} = 2^{N/2} = O(\sqrt n)$, the optimal cost is $O(n)$.)

  So the competitive ratio is $\Omega(\log^* m)$.
\end{proof}


Note that in Lemma~\ref{lemma: min-sum lower}, $n\approx m^2$, so $\log^* m =\Omega(\log^* n)$.
The upper bound in Section~\ref{sec: min-sum upper} and the lower bound in Lemma~\ref{lemma: min-sum lower}
prove Theorem~\ref{thm: min-sum}.


\section{$K$-Component Dynamization and variants (Theorems~\ref{thm: k-component lower bound}--\ref{thm: k-component general})}\label{sec: k-component}

\subsection{Lower bound on optimal competitive ratio}\label{sec: k-component lower bound}


\begin{restateableTheorem}{thm: k-component lower bound}
  For $k$-Component Dynamization
  (and consequently for its generalizations)
  no deterministic online algorithm has ratio competitive ratio less than $k$.
\end{restateableTheorem}
  To develop intuition before we give the detailed proof for the general case, 
  here is a sketch of how the proof goes for $k=2$.
%
The adversary begins by inserting one item of weight 1 and one item of infinitesimal weight $\eps>0$, followed by a sequence of $n-2$ weight-zero items just until the algorithm's cover has just one component.
(This must happen, or the competitive ratio is unbounded --- \OPT pays only at time 1,
while the algorithm continues to pay at least $\eps$ each time step.)
By calculation the algorithm pays at least $2+(n-1)\eps$,
while \OPT pays $\min(2+\eps, 1+(n-1)\eps)$,
giving a ratio of $1.5-O(\eps)$.

This lower bound does not reach 2
(in contrast to the standard ``rent-or-buy'' lower bound)
because the algorithm and \OPT both pay a ``setup cost'' of 1 at time 1.
However, at the end of sequence, the algorithm and \OPT are left with a component of weight $\sim 1$ in place.
The adversary can now continue, doing a second phase without the setup cost,
by inserting an item of weight $\sqrt \eps$, then zeros
just until the algorithm's cover has just one component (again this must happen or the ratio is unbounded).
Let $m$ be the length of this second phase.
By calculation, for \emph{this} phase,
the algorithm pays at least $(m-1)\sqrt\eps+1$
while \OPT pays at most $\min(1+\sqrt{\eps}+\eps, (m-1)(\sqrt\eps+\eps))$,
giving a ratio of $2-O(\sqrt{\eps})$ for just the phase.

The ratio of the whole sequence (both phases together) is now $1.75 - O(\sqrt{\eps})$.
By doing additional phases (using infinitesimal $\eps^{1/i}$ in the $i$th phase),
the adversary can drive the ratio arbitrarily close to 2.
This is the idea for $k=2$.
Next we give the detailed proof for the general case (arbitrary $k\ge 2$).


\begin{labeledProof}[Proof of Theorem~\ref{thm: k-component lower bound}.]
  Fix an arbitrarily small $\eps>0$.
  Define $k+1$ sequences of items (weights) as follows.
  Sequence $\sigma{(k+1)}$ has just one item, $\sigma_{1}{(k+1)}=\eps$.
  For $j\in\{k,k-1,\ldots,1\}$, in decreasing order,
  define sequence $\sigma{(j)}$ to have
  $n_j=\lceil k/\sigma_1{(j+1)}\rceil$ items,
  with the $i$th item being $\sigma_i{(j)} = \eps^{n_{k}+n_{k-1}+\cdots+n_{j}-i+2}$.
  Each sequence $\sigma(j)$ is strictly increasing,
  and all items in $\sigma{(j)}$
  are smaller than all items in $\sigma{(j+1)}$.
  Every two items differ by a factor of at least $1/\eps$,
  so the cost to build any component will be at most $1/(1-\eps)$ times
  the largest item in the component.


  \paragraph{Adversarial input sequence $\inp$.} 
  Fix any deterministic online algorithm $\mysf A$.
  Define the input sequence $\inp$ to interleave the $k+1$ sequences in $\{\sigma{(j)} : 1\le j \le k+1\}$ as follows.
  Start by inserting the only item from sequence $\sigma(k+1)$: take $\inp_1 = \{\sigma_{1}{(k+1)}\}=\{\eps\}$.
  For each time $t\ge 1$,
  after $\mysf A$ responds to the insertion at time $t$, determine the next insertion $\inp_{t+1}=\{x\}$ as follows.
  For each sequence $\sigma{(j)}$, call the most recent (and largest) item inserted so far from $\sigma{(j)}$,
  if any, the \emph{representative} of the sequence.
  Define index $\ell(t)$ so that the largest representative in any new component at time $t$
  is the representative of $\sigma{(\ell(t))}$.
  (The item inserted at time $t$ is necessarily a representative
  and in at least one new component, so $\ell(t)$ is well-defined.)
  At time $t+1$ choose the inserted item $x$ to be the next unused item from sequence $\sigma{(\ell(t)-1)}$.
  %
  Define the \emph{parent} of $x$,
  %
  denoted $p(x)$, to be the representative of $\sigma{(\ell(t))}$ at time $t$.
  (Note: \mysf A's build cost at time $t$ was at least $p(x) \gg x$.)
  Stop when the cumulative cost paid by $\mysf A$ reaches $k$.
  This defines the input sequence $\inp$.

  \paragraph{The input $\inp$ is well-defined.} 
  Next we verify that $\inp$ is well-defined,
  that is, that \emph{(a)} $\ell(t)\ne 1$ for all $t$
  (so $x$'s specified sequence $\sigma(\ell(t)-1)$ exists)
  and \emph{(b)} each sequence $\sigma{(j)}$ is chosen at most $n_j$ times.
  First we verify \emph{(a)}.
  Choosing $x$ as described above forces the algorithm to maintain the following invariants
  at each time $t$:
  \begin{enumerate} \em
    \item[(i)] Each of the sequences
    in $\{\sigma(j) : \ell(t) \le j \le k+1\}$
    has a representative,
    and
  \item[(ii)] no two of these $k-\ell(t)+2$ representatives are in any one component.
  \end{enumerate}
  
  Indeed, the invariants hold at time $t=1$ when $\ell(t)=k+1$.
  Assume they hold at some time $t$.
  At time $t+1$ the newly inserted element $x$
  is the new representative of $\sigma(\ell(t)-1)$
  and is in some new component, so $\ell(t+1) \ge \ell(t)-1$.
  These facts imply that Invariant (i) is maintained.
  By the definition of $\ell(t+1)$,
  the components built at time $t+1$ contain the representative from $\sigma(\ell(t+1))$
  but no representative from any $\sigma(j)$ with $j > \ell(t+1)$.
  This and $\ell(t+1) \ge \ell(t)-1$ imply that Invariant (ii) is also maintained.

  By inspection, Invariants (i) and (ii) imply that
  \mysf A has at least $k-\ell(t)+2$ components at time $t$.
  But \mysf A has at most $k$ components, so $\ell(t)\ge 2$.

  Next we verify \emph{(b)}, that $\inp$ takes at most $n_j$ items from each sequence $\sigma{(j)}$.
  This holds for $\sigma(k+1)$
  just because, by definition, after time 1, $\inp$ cannot insert an item from $\sigma(k+1)$.
  Consider any $\sigma{(j)}$ with $j\le k$.
  For each item $\sigma_i(j)$ in $\sigma(j)$,
  when $\inp$ inserted $\sigma_i(j)$,
  algorithm \mysf A paid at least $p(\sigma_i(j)) \ge \sigma_1(j+1)$ at the previous time step.
  So, before all $n_j$ items from $\sigma(j)$ are inserted,
  \mysf A must pay at least $n_j\, \sigma_1(j+1) \ge k$
  (by the definition of $n_j$), and the input stops.
  It follows that $\inp$ is well-defined.

  \paragraph{Upper-bound on optimum cost.}
  Next we upper-bound the optimum cost for $\inp$.
  For each $j\in\{1,\ldots, k\}$, define  $\out{(j)}$ to be the solution for $\inp$
  that partitions the items inserted so far into the following $k$ components:
  one component containing items from $\sigma{(j)}$ and $\sigma{(j+1)}$,
  and, for each $h\in\{1,\ldots,k+1\}\setminus \{j,j+1\}$,  one containing items from $\sigma{(h)}$.

  To bound $\cost(\out{(j)})$, i.e., the total cost of new components in $\out(j)$,
  first consider the new components 
  such that the largest item in the new component is the just-inserted item, say, $x$.
  The cost of such a component is at most $x/(1-\eps)$.
  Each item $x$ is inserted at most once, so the total cost of all such
  components is at most $1/(1-\eps)$ times the sum of all defined items,
  and therefore at most $\sum_{i=1}^\infty \eps^i/(1-\eps) = \eps/(1-\eps)^2$.
  For every other new component, the just-inserted item $x$ must be from sequence $\sigma{(j+1)}$,
  so the largest item in the component is the parent $p(x)$ (in $\sigma{(j)}$)
  and the build cost is at most $p(x)/(1-\eps)$.
  Defining $m_j \le n_j$ to be the number of items inserted from $\sigma{(j)}$,
  the total cost of building all such components is
  at most $\sum_{i=1}^{m_j} p(\sigma_i{(j)})/(1-\eps)$.
  So $\cost(\out(j))$ is at most ${\eps}/{(1-\eps)^2} + \sum_{i=1}^{m_j} {p(\sigma_i{(j)})}/{(1-\eps)}$.

  The cost of $\OPT$ is at most $\min_j \cost(\out(j))$.
  The minimum is at most the average, so
  \begin{align*}
    (1-\eps)^2\cost(\OPT)
      \,\le\,
      \min_{j=1,\ldots,k} \eps + \sum_{i=1}^{m_j} p(\sigma_i{(j)})
      \,\leq\, 
      \eps + \frac{1}{k}\sum_{j=1}^k \sum_{i=1}^{m_j}p(\sigma_i{(j)}).
  \end{align*}
  \paragraph{Lower bound on algorithm cost.}
  The right-hand side of the above inequality is at most $(\eps/k +1/k) \cost(\mysf A)$,
  because
  $\cost(\mysf A) \ge k$ (by the stopping condition)
  and
  $\sum_{j=1}^k \sum_{i=1}^{m_j} p(\sigma_i{(j)})\le \cost(\mysf A)$.
  (Indeed, for each $j\in\{1,\ldots, k\}$ and $i\in\{1,\ldots, m_j\}$,
  the item $\sigma_{i}{(j)}$ was inserted at some time $t\ge 2$,
  and $\mysf A$ paid at least $p(\sigma_{i}{(j)})$ at the previous time $t-1$.)
  So the competitive ratio is at least $(1-\eps)^2/(\eps/k + 1/k) \ge (1-3\eps)k$.
  This holds for all $\eps>0$, so the ratio is at least $k$.
\end{labeledProof}



\subsection{Upper bound for $k$-Component Dynamization with decreasing weights}\label{sec: k-component decreasing}

\begin{figure*}[t]
  \centering\framebox{\parbox{0.97\textwidth}{\sf\smaller[0]\setlength{\parskip}{1.5pt}

      \begin{steps}\small
      \item[]\underline{algorithm Greedy-Dual($\inp_1, \inp_2, \ldots, \inp_n$)}
        \mycomment{--- for $k$-Component Dynamization with decreasing weights}
        
        \step maintain a cover (collection of components), initially empty

        \step for each time $t=1,2,\ldots, n$ such that $\inp_t \ne \emptyset$:

        \begin{steps}

          \step if there are $k$ current components:

          \begin{steps}\label{block: if k}

            \step\label{line: increase credit}
            increase all components' credits continuously until some component $S$ has
            $\credit[S] \ge \wt_t(S)$

            \step  let $S_0$ be the oldest component such that $\credit[S_0] \ge \wt_t(S_0)$

            \step\label{line: merge}
            merge $I_t$, $S_0$ and all components newer than $S_0$ into one new component $S'$

            \step\label{line: initialize credit}
            initialize $\credit[S']$ to 0

          \end{steps}  

          \step\label{block: else} else:

          \begin{steps}
            \step create a new component from $\inp_t$, with zero credit
          \end{steps}
        \end{steps}
      \end{steps}
    }}
  \caption{A newest-first $k$-competitive algorithm for $k$-Component Dynamization with decreasing weights
    (Theorem~\ref{thm: k-component decreasing}).
  }~\label{fig: k-component decreasing}
\end{figure*}

Recall that, in $k$-Component Dynamization with decreasing weights,
each item $x\in\inp_t$ has weights $\wt_{t}(x) \ge \wt_{t+1}(x) \ge \cdots \ge \wt_n(x)$.
The cost of building a component $S\subseteq \univ t$ at time $t$ is redefined as $\wt_{t}(S) = \sum_{x\in S} \wt_t(x)$.
This variant is technically useful, as a stepping stone to the LSM variant.

\begin{restateableTheorem}{thm: k-component decreasing}
  For $k$-Component Dynamization with decreasing weights 
  (and plain $k$-Component Dynamization)
  the deterministic online algorithm
  in Figure~\ref{fig: k-component decreasing} is $k$-competitive.
\end{restateableTheorem}
\begin{proof}
  %
  %
  Consider any execution of the algorithm
  on any input $\inp_1, \inp_2, \ldots, \inp_n$.
  Let $\delta_t$ be such that each component's credit increases by $\delta_t$ at time $t$.
  (If Block~\ref{block: else} is executed, $\delta_t=0$.)
  To prove the theorem we show the following lemmas.

  \begin{lemma}\label{lemma: alg decreasing}
    The cost incurred by the algorithm is at most
    %
    $k \sum_{t=1}^n \big(\wt_t(\inp_t) + \delta_t\big)$.
    %
  \end{lemma}
  
\begin{lemma}\label{lemma: opt decreasing}
    The cost incurred by the optimal solution is at least
    %
    $\sum_{t=1}^n \big(\wt_t(\inp_t) + \delta_t\big)$.
    %
  \end{lemma}
  
  \begin{labeledProof}[Proof of Lemma~\ref{lemma: alg decreasing}.]
    As the algorithm executes, keep the components ordered by age, oldest first.
    Assign each component a \emph{rank} equal to its rank in this ordering.
    Say that the rank of any \emph{item} is the rank of its current component,
    or $k+1$ if the item is not yet in any component.
    At each time $t$,
    when a new component is created in Line~\ref{line: merge},
    the ranks of the items in $S_0$ stay the same,
    but the ranks of all other items decrease by at least 1.
    Divide the cost of the new component into two parts:
    the contribution from the items that decrease in rank,
    and the remaining cost.
    
    Throughout the execution of the algorithm,
    each item's rank can decrease at most $k$ times,
    so the total contribution from items as their ranks decrease
    is at most $k\sum_{t=1}^n \wt_t(\inp_t)$
    (using here that the weights are non-increasing with time).
    To complete the proof of the lemma,
    observe that the remaining cost
    is the sum, over times $t$
    when Line~\ref{line: merge} is executed,
    of the weight $\wt_t(S_0)$ of the component $S_0$ at time $t$.
    This sum is at most the total credit created,
    because, when a component $S_0$ is destroyed in Line~\ref{line: merge},
    at least the same amount of credit (on $S_0$) is also destroyed.
    But the total credit created is $k\sum_{t=1}^n \delta_t$,
    because when Line~\ref{line: increase credit} executes
    it increases the total component credit by $k\delta_t$.
  \end{labeledProof}

  
  \begin{labeledProof}[Proof of Lemma~\ref{lemma: opt decreasing}.]
    Let $\out^*$ be an optimal solution. 
    Let $\out$ denote the algorithm's solution.
    At each time $t$, when the algorithm executes Line~\ref{line: increase credit},
    it increases the credit of each of its $k$ components in $\out_{t-1}$ by $\delta_t$.
    So \emph{the total credit the algorithm gives is $k\sum_t \delta_t$.}

    For each component $S\in\out_{t-1}$, think of the credit given to $S$
    as being distributed over the component's items $x\in S$
    in proportion to their weights, $\wt_t(x)$:
    at time $t$, each item $x\in S$
    receives credit $\delta_t \wt_t(x)/\wt_t(S)$.
    Have each $x$, in turn,
    charge this amount to one component in \OPT's current cover $\out^*_t$ that contains $x$.
    In this way, \emph{the entire credit $k\sum_{t=1}^n \delta_t$ is charged
      to components in $\out^*$.}

      Recall that $\interval t {t'}$ denotes $\{t, t+1, \ldots, t'\}$.
    \begin{sublemma}\label{sublemma: item charge}
      Let $x$ be any item.
      Let $\interval t {t'}$ be any time interval throughout which $x$ remains in the same component in $\out$.
      The cumulative credit given to $x$ during $\interval {t} {t'}$ is at most $\wt_t(x)$.
    \end{sublemma}
    \begin{proof}
      Let $S$ be the component in $\out$ that contains $x$ throughout $\interval t {t'}$.
      Assume that $\delta_{t'} > 0$ (otherwise reduce $t'$ by one).
      Let $\credit_{t'}[S]$ denote $\credit[S]$ at the end of iteration $t'$.
      Weights are non-increasing with time,
      so the credit that $x$ receives during $\interval t {t'}$ is
      \begin{align*}
        \sum_{i=t}^{t'} \frac{\wt_i(x)}{\wt_i(S)} \delta_i
        \,\le\,
        \frac{\wt_t(x)}{\wt_{t'}(S)} \sum_{i=t}^{t'} \delta_i
        \,\le\,
        \frac{\wt_t(x)}{\wt_{t'}(S)} \credit_{t'}[S].
      \end{align*}
      %
      The right-hand side is at most $\wt_t(x)$.
      %
        (Indeed, 
        in iteration $t'$
        Line~\ref{line: increase credit} increased all components' credits by $\delta_{t'}>0$,
        while maintaining the invariant that $\credit[S] \le \wt_{t'}(S)$,
        so $\credit_{t'}[S] \le \wt_{t'}(S)$.)
    \end{proof}
    Next we bound how much charge \OPT's components (in $\out^*$) receive.
    For any time $t$,
    let $\calN_t^*=\out_t^*\setminus\out_{t-1}^*$
    contain the components that \OPT creates at time $t$,
    and let $N_t^* = \bigcup_{S \in \calN^*_t} S$
    contain the items in these components.
    Call the charges received by components in $\calN^*_t$
    from components created by the algorithm before time $t$
    \emph{forward charges}.
    Call the remaining charges (from components created by the algorithm at time $t$ or after)
    \emph{backward charges}.

    Consider first the \emph{backward} charges to components in $\calN^*_t$.
    These charges come from components in $\out_{t-1}$,
    via items $x$ in $N_t^*\cap \univ {t-1}$,
    from time $t$ until the algorithm destroys the component in $\out_{t-1}$ that contains $x$.
    By Sublemma~\ref{sublemma: item charge},
    the total charge via a given $x$ from time $t$ until its component is destroyed is at most $\wt_t(x)$,
    so the cumulative charge to components in $\calN^*_t$
    from older components
    is at most
    %
      \(
      \wt_t(N_t^*\cap\univ {t-1})
      = \wt_t(N_t^*) - \wt_t(\inp_t) 
      \)
    (using that $N_t^*\setminus \univ {t-1} = \inp_t$).
    Using that \OPT pays at least $\wt_t(N_t^*)$ at time $t$,
    and summing over $t$, 
    \emph{the sum of all backward charges is at most} $\cost(\OPT) - \sum_t \wt_t(\inp_t)$.

    Next consider the \emph{forward} charges, from components created at time $t$ or later,
    to any component $S^*$ in $\calN^*_t$.
    Component $S^*$ receives no forward charges at time $t$,
    because components created by the algorithm at time $t$ receive no credit at time $t$.
    Consider the forward charges $S^*$ receives at any time $t' \ge t+1$.
    At most one component (in $\out_{t'-1}$) can contain items in $N_t^*$,
    namely, the component in $\out_{t'-1}$ that contains $\inp_t$.
    (Indeed, the algorithm merges components ``newest first'',
    so any other component in $\out_{t'-1}$ created after time $t$ only contains items inserted after time $t$,
    none of which are in $N_t^*$.)
    At time $t'$, the credit given to that component is $\delta_{t'}$,
    so the components created by the algorithm at time $t'$
    charge a total of at most $\delta_{t'}$ to $S^*$.
    Let $m(t, t') = |\calN^*_t \cap \out^*_{t'}|$ be the number of components $S^*$
    that \OPT created at time $t$ that remain at time $t'$.
    Summing over $t' \ge t+1$ and $S^*\in\calN^*_t$,
    the forward charges to components in $\calN^*_t$ total at most
    \(\sum_{t'=t+1}^n m(t, t') \delta_{t'}\).
    Summing over $t$, \emph{the sum of all forward charges is at most}
    \begin{align*}
      \sum_{t=1}^n \sum_{t'=t+1}^n m(t, t') \delta_{t'}
        ~=~
      \sum_{t'=2}^n \delta_{t'} \sum_{t=1}^{t'-1} m(t, t')  
          ~\le~
      \sum_{t'=1}^n \delta_{t'} (k-1)
    \end{align*}
    (using that $\sum_{t=1}^{t'-1} m(t, t') \le k-1$ for all $t$,
    because \OPT has at most $k$ components at time $t'$,
    at least one of which is created at time $t'$).

    Recall that the entire credit $k\sum_{t=1}^n \delta_t$ is charged to components in $\out^*$.
    Summing the bounds from the two previous paragraphs
    on the (forward and backward) charges,
    this implies that
    \begin{align*}
      \textstyle 
      k\sum_{t=1}^n \delta_t
      \le
      \cost(\OPT) - \sum_{t=1}^n \wt_t(\inp_t) + (k-1)\sum_{t=1}^n \delta_t.
    \end{align*}
    This proves the lemma, as it is equivalent to the desired bound
    $\cost(\OPT) \ge \sum_{t=1}^n \wt_t(\inp_t) + \delta_t$.
  \end{labeledProof}

  This proves Theorem~\ref{thm: k-component decreasing}.
\end{proof}



\subsection{Bootstrapping newest-first algorithms}\label{sec: k-component LSM}

\begin{restateableTheorem}{thm: meta}
  Any newest-first online algorithm $\mysf A$ for $k$-Component (or Min-Sum) dynamization
  with decreasing weights
  can be converted into an equally competitive online algorithm $\mysf A'$ for the LSM variant.
\end{restateableTheorem}
\begin{proof}
  \newcommand{\nr}{\mymathname{nr}}
  
  Fix an instance $(\inp, \wt)$ of LSM $k$-Component (or Min-Sum) Dynamization.
  For any solution $\out$ to this instance, let $\wt(\out)$ denote its build cost
  using build-cost function $\wt$.
  For any set $S$ of items and any item $x\in S$,
  let $\nr(x, S)$ be 0 if $x$ is redundant in $S$
  (that is, there exists a newer item in $S$ with the same key)
  and 1 otherwise.
  Then $\wt_t(S) = \sum_{x\in S} \nr(x, S) \wt_t(x)$,
  where $\wt_t(x)$ is $\wt(x)$ unless $x$ is expired, in which case $\wt_t(x)$ is the tombstone weight of $x$.
  The tombstone weight of $x$ must be at most $\wt(x)$,
  so $\wt_t(x)$ is non-increasing with $t$.

  For any time $t$ and item $x\in\univ t$, define $\wt'_t(x) = \nr(x, \univ t) \wt_t(x)$.
  For any item $x$, $\wt'_t(x)$ is non-increasing with $t$,
  so $(\inp, \wt')$ is an instance of $k$-Component Dynamization with decreasing weights.
  For any solution $\out$ for this instance, let $\wt'(\out)$ denote its build cost
  using build-cost function $\wt'$.
  
  \begin{lemma}\label{lemma: redundant 1}
    For any time $t$ and set $S\subseteq\univ t$,
    we have $\wt'_t(S) \le \wt_t(S)$.
  \end{lemma}
  \begin{proof}
    Redundant items in $S$ are redundant in $\univ t$, so
    \begin{align}
      \wt'_t(S)
      = \sum_{x\in S} \wt'_t(x)
        = \sum_{x\in S} \nr(x, \univ t) \wt_t(x)
        \le \sum_{x\in S} \nr(x, S) \wt_t(x)
      = \wt_t(S). \label{eqn: redundant}
    \end{align}
  \end{proof}
  
  \begin{lemma}\label{lemma: redundant 2}
    Let $\out$ be any newest-first solution for $(\inp, \wt')$ and $(\inp, \wt)$.
    Then $\wt'(\out) = \wt(\out)$.
  \end{lemma}
  \begin{proof}
    Consider any time $t$ with $\inp_t \ne \emptyset$.
    Let $S$ be $\out$'s new component at time $t$ (so $\out_t\setminus\out_{t-1} = \{S\}$).
    Consider any item $x\in S$.
    Because $\out$ is newest-first, $S$ includes all items inserted with or after $x$.
    So $x$ is redundant in $\univ t$ iff $x$ is redundant in $S$,
    that is, $\nr(x, \univ t) = \nr(x, S)$,
    so $\wt'_t(S) = \wt_t(S)$
    (because Bound~\eqref{eqn: redundant} above holds with equality).
    Summing over all $t$ gives $\wt'(\out) = \wt(\out)$.
  \end{proof}

  Given an instance $(I, \wt)$ of LSM $k$-Component Dynamization, 
  %
  the algorithm $\mysf A'$ simulates $\mysf A$
  %
  on the instance $(I, \wt')$ defined above.
  Using Lemma~\ref{lemma: redundant 2},
  that $\mysf A$ is $c$-competitive,
  and $\wt'(\OPT(I, \wt')) \le \wt(\OPT(I, \wt))$
  (by Lemma~\ref{lemma: redundant 1}), we get
  \begin{align*}
    \wt(\mysf A'(I, \wt)) 
      =
      \wt'(\mysf A(I, \wt'))
      \le
    c \wt'(\OPT(I, \wt'))
    \le
    c \wt(\OPT(I, \wt)).
  \end{align*}
  %
  So $\mysf A'$ is $c$-competitive.
\end{proof}


Combined with the observation that the Greedy-Dual algorithm (Figure~\ref{fig: k-component decreasing}) is newest-first, Theorems~\ref{thm: k-component decreasing} and~\ref{thm: meta}
yield a $k$-competitive algorithm for LSM $k$-Component Dynamization:

\begin{restateableCorollary}{cor: k-component LSM}
  There is a deterministic online algorithm for LSM $k$-Component Dynamization
  with competitive ratio $k$.
\end{restateableCorollary}



\subsection{Upper bound for general variant}\label{sec: k-component general}

\begin{restateableTheorem}{thm: k-component general}
  For general $k$-Component Dynamization,
  the deterministic online algorithm $\mysf B_k$ in Figure~\ref{fig: k-component general}
  is $k$-competitive.
\end{restateableTheorem}

\begin{figure*}[t!]
  \!\!\framebox{\parbox{0.98\textwidth}{\sf\smaller[0]\setlength{\parskip}{1.5pt}

      \begin{steps}
        
      \item[] \underline{algorithm $\mysf B_1(\inp_1, \inp_2, \ldots, \inp_n)$}  \mycomment{--- for $k=1$}

        \step for $t=1,2,\ldots, n$: use cover $\out_t= \{\univ t\}$ where $\univ t = \bigcup_{i=1}^t \inp_i$
        \mycomment{---  one component holding all items}
      \end{steps}
      \smallskip 
      \begin{steps}
        
      \item[] \underline{algorithm $\mysf B_k(\inp_1, \inp_2, \ldots, \inp_n)$}  \mycomment{--- for $k\ge 2$}
        
        \step initialize $t'=1$
        \mycomment{--- variable $t'$ holds the start time of the current phase}

        \step for $t=1,2,\ldots, n$:

        \begin{steps}

          \step let $\out' = \mysf B_{k-1}(\inp_{t'}, \inp_{t'+1}, \ldots, \inp_t)$
          \mycomment{--- the solution generated by $\mysf B_{k-1}$ for the current phase so far}
          
          \step if the total cost of $\out'$ exceeds $(k-1)\wt_t(\univ t)$:
          take $\out_t = \{\univ t\}$ and let $t'=t+1$
          \mycomment{---  end the phase}
          
          \step else: use cover $\out_t = \{\univ {t'}\} \cup \out'_t$,
          where $\out'_t$ is the last cover in $\out'$
          \mycomment{--- $\out'_t$ has at most $k-1$ components}
        \end{steps}
      \end{steps}
    }}
  \caption{Recursive algorithm for general $k$-Component Dynamization
    (Theorem~\ref{thm: k-component general}).}\label{fig: k-component general}
\end{figure*}

\begin{proof}
  The proof is by induction on $k$.
  For $k=1$, Algorithm $\mysf B_1$ is 1-competitive (optimal) because there is only one solution for any instance.
  Consider any $k\ge 2$, and assume inductively that $\mysf B_{k-1}$ is $(k-1)$-competitive.
  Fix any input $(\inp, w)$ with $\inp=(\inp_1, \ldots, \inp_n)$.
  Let $\OPT_k$ denote the optimal (offline) algorithm,
  and let $\out^* = \OPT_k(\inp_1, \ldots, \inp_n)$ be an optimal solution for $\inp$.

  Let $\mathcal N^*_t = \out^*_t \setminus \out^*_{t-1}$ denote \OPT's new components at time $t$.
  %
    For $a, b\in \mathbb [n]$,
    let $\Delta_a^b(\OPT_k)$
  denote the cost incurred by $\OPT_k$ during time interval $\interval a b$,
  that is, $\sum_{i=a}^{b} \sum_{S\in \mathcal N^*_i} w_i(S)$.
  (Recall $\interval a b$ denotes $\{a, a+1, \ldots, b\}$.)
  Likewise,
  %
  let $\Delta_a^b(\mysf B_k)$
  denote the cost incurred by $\mysf B_k$ during $[a, b]$.
  Let $\inputs a {b}=(\inp_a, \inp_{
    a+1}, \ldots, \inp_{b})$ denote the subproblem
  formed by the insertions during $[a,b]$,
  with build-costs inherited from $w$.

  Recall that $\mysf B_k$ partitions the input sequence into \emph{phases},
  each of which (except possibly the last) ends with $\mysf B_k$ doing a \emph{full merge}
  (i.e., at a time $t$ with $|\out_t|=1$).
  Assume without loss of generality that $\mysf B_k$ ends the last phase with a full merge.
  (Otherwise, append a final empty insertion at time $n+1$ and define $w_{n+1}(\univ {n+1}) = 0$.
  This does not increase the optimal cost, and causes the algorithm to do a full merge at time $n+1$
  unless its total cost in the phase is zero.)
  Consider any phase.
  Now fix $a$ and $b$ to be the first and last time steps during the phase.
  To prove the theorem, we show $\Delta_a^{b}(\mysf B_k) \le k\, \Delta_a^{b}(\OPT_k)$.
  The theorem follows by summing over the phases.
  
  The proof is via a series of lemmas.
  Recall that $\univ t$ denotes $\bigcup_{i=1}^t \inp_i$. 

  \begin{lemma}\label{lemma: induction}
    %
    For any integer $j\in [a,b]$,~\(
    \cost(\mysf B_{k-1}(\inputs a {j}))
    \,\le\,
    (k-1)  \cost(\OPT_{k-1}(\inputs a {j})).
    \)
  \end{lemma}
  \begin{proof}
      The instance $(\inp, w)$ obeys Restrictions \textbf{(R1)}--\textbf{(R3)}.
      So, by inspection of those restrictions, $\inputs a {j}$ also obeys them.
      That is, $\inputs a {j}$ is a valid instance of general $(k-1)$-component Dynamization.
      So, by the inductive assumption, $\mysf B_{k-1}$ is $(k-1)$-competitive for $\inputs a j$. 
  \end{proof}


  For $j\in [a, b]$, say that \emph{\OPT rebuilds by time $j $}
  if
  %
  $\univ {a-1} \subseteq \bigcup_{i=a}^j \bigcup_{S\in \mathcal N_i} S$.
  %
  That is, every element inserted before time $a$ is in some new component during $[a, j]$.
  %
  (Equivalently, $\bigcup_{i=a}^j \bigcup_{S\in \mathcal N_i} S = \univ j$.)
  %
  \begin{lemma}\label{lemma: new cover}
    Suppose \OPT rebuilds by time $j$.
    Then $\Delta_a^j(\OPT_k) \ge w_{j}(\univ j)$.
  \end{lemma}
  \begin{proof}
    \begin{align*}
      w_{j}(\univ j)
      & \textstyle {} = w_{j}\big(\bigcup_{i=a}^j\bigcup_{S\in \mathcal N_i}S \big)
      && (\textit{\OPT rebuilds by time } j) \\
      & \textstyle {} \le  \sum_{i=a}^j \sum_{S\in \mathcal N_i} w_j(S)
      && (\textit{by sub-additivity }\textbf{(R1)}) \\
      & \textstyle {} \le  \sum_{i=a}^j \sum_{S\in \mathcal N_i} w_i(S)
      && (\textit{by temporal monotonicity } \textbf{(R3)}) \\
      & {} = \Delta_a^j(\OPT_k).
      && (\textit{by definition})
    \end{align*}
  \end{proof}

  \begin{lemma}\label{lemma: not new}
    Suppose \OPT does not rebuild by time $j\in [a, b]$.
    Then 
    %
    $\cost(\OPT_{k-1}(\inputs a j)) \le \Delta_a^j(\OPT_k)$.
    %
  \end{lemma}
  \begin{proof}
    Because \OPT does not rebuild by time $j$,
    some element $x$ in $\univ {a-1}$ is not in any new component during $[a, j]$.
    %
    Let $S$ be the component in $\out^*_j$ containing $x$.
    Since $S^*$ is not new during $[a, j]$,
    it must be that $S^*$ is in $\out^*_i$ for every $i\in [a-1, j]$,
    and $S^*\subseteq \univ {a-1}$.

    %
      For the subproblem $\inputs a j$,
      let $\out'$ be the solution
      defined by
      $\out_i' = \{S\setminus \univ {a-1} : S \in \out^*_i\}\setminus\{\emptyset\}$
      for $i\in [a,j]$.
    %
    Because each $\out^*_i$ has at most $k$ components,
    one of which is  $S^*$, and $S^*\subseteq \univ {a-1}$,
    it follows that each $\out_i$ has at most $k-1$ components.
    So $\cost(\OPT_{k-1}(\inputs a j)) \le \cost(\out')$.
    
    If a given component $S \setminus \univ {a-1}$ is new in $\out'$ at time $i\in [a, j]$,
    then the corresponding component $S$ is new in $\out^*$ at time $i$.
    Further,
    %
    by suffix monotonicity \textbf{(R2)},
    %
    the cost $w_i\big(S \setminus\univ {a-1}\big)$ paid by $\out'$ for $S\setminus\univ {a-1}$
    is at most the cost $w_i(S)$ paid by $\out^*$ for $S$.
    (Inspecting the definition of \textbf{(R2)},
    we require that $S\ne \univ i$, which holds because \OPT has not rebuilt by time $j$.)
    So $\cost(\out') \le \Delta_a^j(\OPT_k)$.  
  \end{proof}

  \begin{lemma}\label{lemma: no merge}
    $\cost(\mysf B_{k-1}( \inputs a {b-1})) \le (k-1)\Delta_a^{b-1}(\OPT_k)$
  \end{lemma}
  \begin{proof}
    If $a=b$ then $\cost(\mysf B_{k-1}( \inputs a {b-1})) = 0$, so assume $a<b$.
    If \OPT rebuilds by time $b-1$, then
    \begin{align*}
      \cost(\mysf B_{k-1}( \inputs a {b-1})) 
      & {} \le (k-1) w_{b-1}(\univ {b-1})
      && (\textit{\(B_k\) does not end the phase at time } b-1) \\
      & {} \le (k-1) \Delta_a^{b-1}(\OPT_k)
      && (\textit{Lemma~\ref{lemma: new cover} with } j=b-1).
      \\
      \intertext{Otherwise \OPT does not rebuild by time $b-1$, so}
      \cost(\mysf B_{k-1}( \inputs a {b-1}))
      & {} \le (k-1)\cost(\OPT_{k-1}( \inputs a {b-1}))
      && (\textit{Lemma~\ref{lemma: induction} with } j=b-1) \\
      & {} \le (k-1)\Delta_a^{b-1}(\OPT_k)
      && (\textit{Lemma~\ref{lemma: not new} with } j=b-1). \\[-2em]
      \rule{1.1in}{0pt}&\rule{2.1in}{0pt}&& \rule{2.8in}{0pt}
    \end{align*}
  \end{proof}
  
  \begin{lemma}\label{lemma: merge}
     $w_b(\univ b) \le \Delta_a^b(\OPT_k)$
  \end{lemma}
  \begin{proof}
    If \OPT rebuilds by time $b$,
    then
    \begin{align*}
      w_{b}(\univ {b})
      & {} \le \Delta_a^b(\OPT_k)
      && (\textit{Lemma~\ref{lemma: new cover} with \(j=b\))}.
      \\
      \intertext{Otherwise \OPT does not rebuild by time $b$, so}
      w_{b}(\univ {b})
      & {} < \cost(\mysf B_{k-1} (\inputs a b))/(k-1)
      && (\textit{because \(B_k\) ends the phase at time } b) \\
      & {} \le \cost(\OPT_{k-1} (\inputs a b))
      && (\textit{Lemma~\ref{lemma: induction} with } j=b) \\
      & {} \le \Delta_a^b(\OPT_k)
      && (\textit{Lemma~\ref{lemma: not new} with } j=b).\\[-2em]
      \rule{1.1in}{0pt}&\rule{2.1in}{0pt}&& \rule{2.8in}{0pt}
    \end{align*}
  \end{proof}

  The lemmas imply that the algorithm is $k$-competitive for the phase:
  \begin{align*}
    \Delta_a^b(\mysf B_k)
    &{} = ~\cost(\mysf B_{k-1}( \inputs a {b-1})) \,+ 
     w_{b}(\univ {b})
    &&(\textit{by definition of } \mysf B_k)
    \\
    & {} \le ~\cost(\mysf B_{k-1}( \inputs a {b-1})) \, + 
    \Delta_a^b(\OPT_k)
    && (\textit{Lemma~\ref{lemma: merge}})
    \\
    &{} \le (k-1)\Delta_a^{b-1}( \OPT_k)+\,
    \Delta_a^b(\OPT_k)
    &&(\textit{Lemma~\ref{lemma: no merge}})
    \\
    &{} \le k\, (\Delta_a^b \OPT_k)
    &&(\textit{as } \Delta_a^{b-1} (\OPT_k)  \le \Delta_a^b(\OPT_k) )
  \end{align*}
  Theorem~\ref{thm: k-component general} follows by summing over the phases.
\end{proof}



\section{Properties of optimal offline solutions}\label{sec: offline}

\begin{restateableTheorem}{thm: offline}
  Every instance of $k$-Component or Min-Sum Dynamization
  has an optimal solution that is newest-first and lightest-first.
\end{restateableTheorem}


  \begin{figure*}
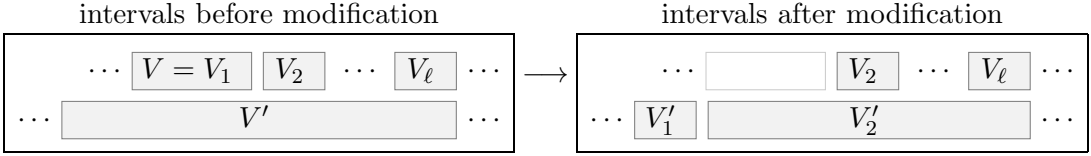

    \begin{displaymath}
      \begin{array}{|r|}
        \multicolumn{1}{c}{\text{intervals before modification}} \\ \hline
        \rule{0ex}{3.5ex}
        \cdots \iBox{1.8}{V=V_1} \iBox{1}{V_2} \iFix{0.8}{\cdots\!\!} \iBox{1}{V_\ell} \cdots
        \\ \cdots \iBox{5.6}{V'} \cdots \\[5pt] \hline
      \end{array}
      \longrightarrow
      \begin{array}{|r|}
        \multicolumn{1}{c}{\text{intervals after modification}} \\ \hline
        \rule{0ex}{3.5ex}
        \cdots \iGhost{1.8}{} \iBox{1}{V_2} \iFix{0.8}{\cdots\!\!} \iBox{1}{V_\ell} \cdots
        \\ \cdots \iBox{1}{V_1'}\iBox{4.6}{V_2'}\cdots \\[5pt] \hline
      \end{array}
    \end{displaymath}
    \caption{Replacing intervals $V$ and $V'$ by $V'_1$ and $V'_2$ (proof of Theorem~\ref{thm: offline}).}\label{fig: lightest-first}
  \end{figure*}


\begin{proof}
  Fix an instance $\inp=(\inp_1, \ldots, \inp_n)$.
  %
  Recall that $\interval t {t'}$ denotes $\{t,t+1,\ldots, t'\}$.
  For any component $S$ that is new at some time $t$ of a given solution $\out$,
  we say that $S$
  \emph{uses (time) interval $\interval t {t'}$},
  where $t' = \max\{j \in \interval t n : (\forall i\in \interval t j)~S\in \out_i\}$
  is the time that (this occurrence of) $S$ is destroyed.
  We refer to $[t,t']$ as the \emph{interval} of (this occurrence of) $S$.
  For the proof we think of any solution $\out$ as being constructed in two steps:
  (i) choose the set $T$ of time intervals 
  that the components of $\out$ will use, then
  (ii) given $T$, for each interval $\interval t {t'}\in T$,
  choose a set $S$ of items for $\interval {t} {t'}$, then form a component $S$ in $\out$
  with interval $\interval t {t'}$
  (that is, add $S$ to $\out_i$ for $i\in \interval t {t'}$).
  We shall see that the second step (ii) decomposes by item:
  an optimal solution can be found by greedily choosing the intervals for each item $x\in \univ n$ independently.
  The resulting solution has the desired properties.
  Here are the details.

  Fix an optimal solution $\out^*$ for the given instance,
  breaking ties by choosing $\out^*$ to minimize the total query cost
  $\sum_{\interval t {t'} \in T^*} t' - t + 1$
  where $T^*$ is the set of intervals of components in $\out^*$.
  Assume without loss of generality that, for each $t\in \interval 1 n$, if $\inp_t=\emptyset$,
  then $\out^*_t = \out^*_{t-1}$ (interpreting $\out^*_0$ as $\emptyset$).
  (If not, replace $\out^*_{t}$ by $\out^*_{t-1}$.)
  For each item $x\in\univ n$, let
  $\alpha^*(x)$ denote the set of intervals in $T^*$ of components that contain $x$.
  The build cost of $\out^*$
  equals $\sum_{x\in\univ n} \wt(x)\, |\alpha^*(x)|$.
  For each time $t$ and item $x\in\inp_t$, the intervals $\alpha^*(x)$ of $x$
  %
  \emph{cover} $\interval t n$,
  %
  meaning that the union of the intervals in $\alpha^*(x)$ is $\interval t n$.

  Next construct the desired solution $\out'$ from $T^*$.
  For each time $t$ and item $x\in \inp_t$,
  let $\alpha(x) = \{ V_1, \ldots, V_\ell \}$ be a sequence
  of intervals chosen greedily from $T^*$ as follows.
  Interval $V_1$ is the latest-ending interval starting at time $t$.
  For $i\ge 2$, interval $V_i$ is the latest-ending interval starting at time $t'_{i-1}+1$ or earlier,
  where $t'_{i-1}$ is the end-time of $V_{i-1}$.
  The final interval has end-time $t'_{\ell} = n$.
  By a standard argument, this greedy algorithm chooses from $T^*$ a minimum-size 
  interval cover of $\interval t n$, so $|\alpha(x)| \le |\alpha^*(x)|$.

  Obtain $\out'$  as follows:
  for each interval $\interval i j \in T^*$,
  add a component in $\out'$ with time interval $\interval i j$
  containing the items $x$ such that $\interval i {j} \in \alpha(x)$.
  This is a valid solution because, for each time $t$ and $x\in\inp_t$,
  $\alpha(x)$ covers $\interval t n$.
  Its build cost is at most the build cost of $\out^*$,
  because $\sum_{x\in\univ n} \wt(x) |\alpha(x)| \le \sum_{x\in\univ n} \wt(x) |\alpha^*(x)|$.
  At each time $t$,
  its query cost is at most the query cost of $\out^*$,
  because it uses the same set $T^*$ of intervals.
  So $\out'$ is an optimal solution.
  
  \paragraph{$\out'$ is newest-first.}
  The following properties hold:

  \begin{enumerate}
  \item
    \emph{$\alpha$ uses (assigns at least one item to) each interval $V\in T^*$.}
    Otherwise removing $V$ from $T^*$ (and using the same $\alpha$) would give a solution with the same build cost
    but lower query cost, contradicting the definition of $\out^*$.

  \item
    \emph{For all $t\in \interval 1 n$, 
      the number of intervals in $T^*$ starting at time $t$ is 1 if $\inp_t\ne\emptyset$ and 0 otherwise.}
    Among intervals in $T^*$ that start at $t$,
    only one --- the latest ending --- can be used in any $\alpha(x)$.
    So by Property 1 above, $T^*$ has at most interval starting at $t$.
    If $\inp_t\ne\emptyset$, $\out^*$ must have a new component at time $t$,
    so there is such an interval.
    If $\inp_t=\emptyset$ there is not
    (by the initial choice of $\out^*$ it has no new component at time $t$).

  \item
    \emph{
      For every two consecutive intervals $V_i, V_{i+1}$ in any $\alpha(x)$,
      $V_{i+1}$ is the interval in $T^*$ that starts just after $V_i$ ends.
    }
    Fix any such $V_i, V_{i+1}$.
    For every other item $y$ with $V_i\in \alpha(y)$,
    the interval following $V_i$ in $\alpha(y)$ must also (by the greedy choice) be $V_{i+1}$.
    That is, \emph{every} item assigned to $V_i$ is also assigned to $V_{i+1}$.
    If $V_{i+1}$ were to overlap $V_i$,
    replacing $V_i$ by the interval $V_i \setminus V_{i+1}$
    (within $T^*$ and every $\alpha(x)$)
    would give a valid solution with the same build cost but smaller total query cost, 
    contradicting the choice of $\out^*$.
    So $V_{i+1}$ starts just after $V_i$ ends.
    By Property 2 above, $V_{i+1}$ is the only interval starting then.

  \item
    \emph{
      For every pair of intervals $V$ and $V'$ in $T^*$,
      either $V\cap V'=\emptyset$, or one contains the other.
    }
    Assume otherwise for contradiction, that is, two intervals \emph{cross}:
    $V\cap V' \ne \emptyset$ and neither contains the other.
    Let $\interval a {a'}$ and $\interval b {b'}$ be a rightmost crossing pair in $T^*$,
    that is, such that $a < b < a' < b'$
    and no crossing pair lies in $\interval {a+1} n$.
    By Property 1 above, $\interval a {a'}$ is in some $\alpha(x)$.
    Also $a' < n$.
    Let $\interval {a'+1} c$ be the interval added greedily to $\alpha(x)$ following $\interval a {a'}$.
    (It starts at time $a'+1$ by Property 3 above.)
    The start-time of $ \interval b {b'}$ is in $ \interval a {a'+1}$ (as $a < b < a'$),
    so by the greedy choice (for $\interval a {a'}$)
    $\interval b {b'}$ ends no later than $\interval {a'+1} c$.
    Further, by the tie-breaking in the greedy choice, $c>b'$.
    So $ \interval {a'+1} c$ crosses $ \interval b {b'}$,
    contradicting that no crossing pair lies in $ \interval {a+1} n$.
  \end{enumerate}

  By inspection of the definition of newest-first,
  Properties 2 and 4 imply that $\out'$ is newest-first.

  \paragraph{$\out'$ is lightest-first.}
  To finish we show that $\out'$ is lightest-first.
  For any time $t\in \interval 1 n$, consider any intervals $V, V'\in T^*$ where $V$ ends at time $t$
  while $V'$ includes $t$ but does not end then.  To prove that $\out'$ is lightest-first,
  we show $\wt(V) < \wt(V')$.

  The intervals of $\out'$ are nested (Property 4 above),
  so $V \subset V'$ and the items assigned to $V=V_1$ are subsequently assigned (by Property 3 above)
  to intervals $V_2, \ldots, V_\ell$ within $V'$ as shown in Figure~\ref{fig: lightest-first},
  with $V_\ell$ and $V'$ ending at the same time.
  Since $V'$ does not end when $V$ does, $\ell\ge 2$.
  Consider modifying the solution $\out'$ as follows:
  remove intervals $V$ and $V'$ from $T^*$,
  and replace them by intervals $V_1'$ and $V_2'$
  obtained by splitting $V'$ so that $V'_2$ starts when $V$ started.
  (See the right side of Figure~\ref{fig: lightest-first}.)

  Reassign all of $V'$'s items to $V'_1$ and $V'_2$.
  Reassign all of $V$'s items to $V'_2$
  and unassign those items from each interval $V_i$.
  This gives another valid solution.
  It has lower query cost (as $V$ is gone),
  so by the choice of $\out^*$ (including the tie-breaking) the new solution must have strictly larger build cost.
  That is, the change in the build cost,
  $\wt(V)(1-\ell) + \wt(V')$, must be positive,
  implying that $\wt(V') > \wt(V)(\ell-1) \ge \wt(V)$ (using $\ell\ge 2$).
  Hence $\wt(V') > \wt(V)$.
\end{proof}


\section{Concluding remarks}\label{sec: conclusion}

  This paper brings competitive analysis to bear on data-structure dynamization for non-uniform inputs,
  via two new online covering
  problems---Min-Sum Dynamization and $k$-Component Dynamization---for
  which it gives deterministic online algorithms
  with competitive ratios $\Theta(\log^* m)$ and $k$, respectively.
  The algorithms extend to handle lazy updates and deletions
  as they occur in industrial LSM systems.

  The paper also shows the existence of optimal offline solutions
  that are newest-first and lightest-first.
  As mentioned in the introduction,
  one consequence is that Bentley and Saxe's transforms
  give optimal solutions (up to lower-order terms)  for uniform inputs.
  Another is that, for Min-Sum and $k$-Component Dynamization,
  optimal solutions can be computed
  in time $O(n^3)$ and $O(k n^3)$, respectively,
  because optimal newest-first solutions can be computed in these time bounds
  via natural dynamic programs.

\subsection{Open problems}
Here are a few of many interesting problems that remain open.
\noindent  
For $k$-Component Dynamization:
\begin{itemize}[label=--]
  \item
    Is there an online algorithm with competitive ratio $O(\min(k, \log^* m))$?
  \item
    Is there an algorithm with competitive ratio $O(k/(k-h+1))$ versus $\OPT_h$ (the optimal solution with maximum query cost $h\le k$)?
  \item
    Is there a randomized algorithm with competitive ratio $o(k)$? 
\item
   A \emph{memoryless} randomized algorithm with competitive $O(k)$?
 \end{itemize}

\smallskip \noindent 
For Min-Sum Dynamization:

\begin{itemize}[label=--]
\item
  Is there an $O(1)$-competitive algorithm?

\item
Some LSM architectures only support newest-first algorithms.
Is there a newest-first algorithm with competitive ratio $O(\log^* m)$?

\item
What are the best ratios for the LSM and general variants? 

\end{itemize}

\smallskip \noindent
For both problems:

\begin{itemize}[label=--]
\item
  For instances $\inp$ that occur in practice,
  the ratio $\max_{t, t'} \wt(\inp_t)/\wt(\inp_{t'})$
  (for $t'$ such that $\wt(\inp_{t'}) > 0$) is often bounded.
  Does restricting to such instances allow smaller competitive ratios? 
\item
  For the decreasing-weights and LSM variants,
  is there always an optimal newest-first solution?
\end{itemize}

\subsection{Variations on the model}\label{sec: variations}

\paragraph{Tombstones deleted during major compactions.}
Times when the cover $\out_t$ has just one component (containing all inserted items)
are called \emph{full merges} or~\emph{major compactions}.
At these times, LSM systems delete all tombstone items (even non-redundant tombstones).
Our definition of LSM $k$-Component Dynamization does not capture this,
but our definition of General $k$-Component Dynamization does,
so the algorithm $\mysf B_k$ in Figure~\ref{fig: k-component general}
is $k$-competitive in this case.

\paragraph{Monolithic builds.}  
Our model underestimates query costs
because it assumes that new components can be built in response to each \dsop{query},
before responding to the \dsop{query}.
In reality, builds take time.
Can this be modelled cleanly,
perhaps via a problem that constrains the build cost at each time $t$ (and $\wt(\inp_t)$)
to be at most 1, with the objective of minimizing the total query cost?

\paragraph{Splitting the key space.}
To avoid monolithic builds, when the data size reaches some threshold
(e.g., when the available RAM can hold 1\% of the stored data)
some LSM systems ``split'':
they divide the workload into two parts---the keys above and below some threshold---then
restart, handling each part on separate servers.
This requires a mechanism for routing insertions and queries
by key to the appropriate server.
Can this (including a routing layer supporting multiple splits) be cleanly modeled?

Other LSM systems (LevelDB and its derivatives) instead use many small (disk-block size) components,
storing in the (cached) indices each component's key interval (its minimum and maximum key).
A \dsop{query} for a given key accesses only the components whose intervals contain the key.
This suggests a natural modification of our model:
redefine the query cost at time $t$ to be the maximum
number of such components for any key.

\paragraph{Bloom filters.}
Most practical LSM systems are configurable to use a Bloom filter for each component,
so as to avoid (with some probability) accessing component that do not hold the queried key.
However, Bloom filters are only cost-effective when they are small enough to be cached.
They require about a byte per key, so are effective only for the smallest components
(with a total number of keys no more than the bytes available in RAM).
Used effectively, they can save a few disk accesses per \dsop{query}
(see~\cite{dayan_monkey_2017}).
They do not speed up range queries
(that is, efficient searches for all keys in a given interval, 
which LSM systems support but hash-based external-memory dictionaries do not).

\paragraph{External-memory.}
More generally,
to what extent can we apply competitive analysis to the standard I/O (external-memory) model?
Given an input sequence (rather than being constrained to maintain a cover)
the algorithm would be free to use the cache and disk as it pleases, subject only to the constraints
of the I/O model, with the objective of minimizing the number of disk I/O's,
divided by the minimum possible number of disk I/O's for that particular input.
This setting may be too general to work with.
Is there a clean compromise?

The results below do not address this per se,
but they do analyze external-memory algorithms
using metrics other than standard worst-case analysis,
with a somewhat similar flavor:

\begin{description}
\item[\cite{barve_applicationcontrolled_2000}]
  Studies competitive algorithms for allocating cache space to competing processes.
  
\item[\cite{bender_cacheadaptive_2014}]
  Analyzes external-memory algorithms while available RAM varies with time,
  seeking an algorithm such that, no matter how RAM availability varies,
  the worst-case performance is as good as that of any other algorithm.

\item[\cite{brodal_cacheaware_2005}]
  Presents external-memory sorting algorithms
  that have per-input guarantees --- they use fewer I/O's
  for inputs that are ``close'' to sorted.

\item[\cite{ciriani_static_2002,ko_optimal_2007}]
  Present external-memory dictionaries with a kind of \emph{static-optimality} property:
  for any sequence of queries, they incur cost bounded
  in terms of the minimum achievable by any static tree of a certain kind.
  (This is analogous to the static optimality of splay trees~\cite
  {sleator_selfadjusting_1983, levy_new_2019}.)
  
\end{description}

\subsection{Practical considerations}

\paragraph{Heuristics for newest-first solutions.}
Some LSM systems \emph{require} newest-first solutions.
The Min-Sum Dynamization algorithm Adaptive-Binary (Figure~\ref{fig: min-sum alg})
can produce solutions that are not newest-first.
Here is one naive heuristic to make it newest-first: at time $t$, do the minimal newest-first merge
that includes all the components that the algorithm would otherwise have selected to merge.
This might result in only a small cost increase on some workloads.

\paragraph{Major compactions.} 
For various reasons, it can be useful to force major compactions at specified times.
An easy way to model this is to treat each interval between forced major compactions
as a separate problem instance, starting each instance by inserting all items from the major compaction.

\paragraph{Estimating the build cost $\wt_t(S)$.}
Our algorithms for the decreasing-weights, LSM, and general variants
depend on the build costs $\wt_t(S)$ of components $S$ that are not yet built.
The model assumes these become known at time $t$,
but in practice they can be hard to compute.
However, the algorithms only depend on the build costs of components $S$
that are unions of the current components.
For the LSM variant, it may be possible to construct, along with each component $S$,
a small signature that can be used to estimate 
the build costs of unions of such components (at later times $t$),
using techniques for estimating intersections
of large sets~(e.g.~\cite{cohen_minimal_2017,pagh_minwise_2014}).
It would be desirable to show that dynamization algorithms are robust in this
context---that their competitive ratios are approximately preserved
if they use approximate build costs.

\paragraph{Exploiting slack in the \textsf{Greedy-Dual} algorithm.}
For paging, \textsc{Least-Recently-Used} (LRU)
is preferred in practice to \textsc{Flush-When-Full} (FWF),
although their competitive ratios are equal.
In practice, it can be useful to tune an algorithm while preserving its theoretical performance guarantee.
In this spirit, consider the following variant of the \textsf{Greedy-Dual} algorithm
in Figure~\ref{fig: k-component decreasing}.
As the algorithm runs, maintain a ``spare credit'' $\phi$.  Initially  $\phi=0$.
When the algorithm does a merge in Line~\ref{line: merge},
increase $\phi$ by the total credit of the components newer than $S_0$, which the algorithm destroys.
Then, at any time, optionally, reduce $\phi$ by some amount $\delta \le \phi$,
and increase the credit of any component in the cover by $\phi$.
The proof of Theorem~\ref{thm: k-component decreasing}, essentially unchanged,
shows that the modified algorithm is still $k$-competitive.
This kind of additional flexibility may be useful in tuning the algorithm. 
As an example, consider classifying the spare credit by the rank of the component that contributes it,
and, when a new component $S'$ of some rank $r$ is created, 
transferring all spare credit associated with rank $r$ to $\credit[S']$
(after Line~\ref{line: initialize credit} initializes $\credit[S']$ to 0).
This natural \textsc{Balance} algorithm balances the work done for each of the $k$ ranks.

\begin{acks}
  Thanks to Carl Staelin for bringing the problem to our attention
  and for informative discussions about Bigtable.
\end{acks}

\bibliographystyle{ACM-Reference-Format}
\bibliography{zotero-auto-generated,additional-bib}


\begin{thebibliography}{54}


\ifx \showCODEN    \undefined \def \showCODEN     #1{\unskip}     \fi
\ifx \showDOI      \undefined \def \showDOI       #1{#1}\fi
\ifx \showISBNx    \undefined \def \showISBNx     #1{\unskip}     \fi
\ifx \showISBNxiii \undefined \def \showISBNxiii  #1{\unskip}     \fi
\ifx \showISSN     \undefined \def \showISSN      #1{\unskip}     \fi
\ifx \showLCCN     \undefined \def \showLCCN      #1{\unskip}     \fi
\ifx \shownote     \undefined \def \shownote      #1{#1}          \fi
\ifx \showarticletitle \undefined \def \showarticletitle #1{#1}   \fi
\ifx \showURL      \undefined \def \showURL       {\relax}        \fi
\providecommand\bibfield[2]{#2}
\providecommand\bibinfo[2]{#2}
\providecommand\natexlab[1]{#1}
\providecommand\showeprint[2][]{arXiv:#2}

\bibitem[Agarwal et~al\mbox{.}(2001)]%
        {agarwal_framework_2001}
\bibfield{author}{\bibinfo{person}{Pankaj~K. Agarwal}, \bibinfo{person}{Lars
  Arge}, \bibinfo{person}{Octavian Procopiuc}, {and}
  \bibinfo{person}{Jeffrey~Scott Vitter}.} \bibinfo{year}{2001}\natexlab{}.
\newblock \showarticletitle{A Framework for Index Bulk Loading and
  Dynamization}. In \bibinfo{booktitle}{\emph{Automata, {{Languages}} and
  {{Programming}}}} \emph{(\bibinfo{series}{Lecture {{Notes}} in {{Computer
  Science}}})}, \bibfield{editor}{\bibinfo{person}{Fernando Orejas},
  \bibinfo{person}{Paul~G. Spirakis}, {and} \bibinfo{person}{Jan {van
  Leeuwen}}} (Eds.). \bibinfo{publisher}{{Springer Berlin Heidelberg}},
  \bibinfo{pages}{115--127}.
\newblock
\showISBNx{978-3-540-48224-6}


\bibitem[Agarwal et~al\mbox{.}(2004)]%
        {agarwal_approximating_2004}
\bibfield{author}{\bibinfo{person}{Pankaj~K. Agarwal}, \bibinfo{person}{Sariel
  {Har-Peled}}, {and} \bibinfo{person}{Kasturi~R. Varadarajan}.}
  \bibinfo{year}{2004}\natexlab{}.
\newblock \showarticletitle{Approximating Extent Measures of Points}.
\newblock \bibinfo{journal}{\emph{J. ACM}} \bibinfo{volume}{51},
  \bibinfo{number}{4} (\bibinfo{year}{2004}), \bibinfo{pages}{606--635}.
\newblock


\bibitem[Aggarwal et~al\mbox{.}(1987)]%
        {aggarwal_hierarchical_1987}
\bibfield{author}{\bibinfo{person}{Alok Aggarwal}, \bibinfo{person}{Ashok~K.
  Chandra}, {and} \bibinfo{person}{Marc Snir}.}
  \bibinfo{year}{1987}\natexlab{}.
\newblock \showarticletitle{Hierarchical Memory with Block Transfer}. In
  \bibinfo{booktitle}{\emph{28th {{Annual Symposium}} on {{Foundations}} of
  {{Computer Science}}}}. \bibinfo{publisher}{{IEEE}},
  \bibinfo{pages}{204--216}.
\newblock
\urldef\tempurl%
\url{https://doi.org/10.1109/SFCS.1987.31}
\showDOI{\tempurl}


\bibitem[Alsubaiee et~al\mbox{.}(2014)]%
        {alsubaiee_asterixdb_2014}
\bibfield{author}{\bibinfo{person}{Sattam Alsubaiee}, \bibinfo{person}{Yasser
  Altowim}, \bibinfo{person}{Hotham Altwaijry}, \bibinfo{person}{Alexander
  Behm}, \bibinfo{person}{Vinayak Borkar}, \bibinfo{person}{Yingyi Bu},
  \bibinfo{person}{Michael Carey}, \bibinfo{person}{Inci Cetindil},
  \bibinfo{person}{Madhusudan Cheelangi}, {and} \bibinfo{person}{Khurram
  Faraaz}.} \bibinfo{year}{2014}\natexlab{}.
\newblock \showarticletitle{{{AsterixDB}}: {{A}} Scalable, Open Source
  {{BDMS}}}.
\newblock \bibinfo{journal}{\emph{Proceedings of the VLDB Endowment}}
  \bibinfo{volume}{7}, \bibinfo{number}{14} (\bibinfo{year}{2014}),
  \bibinfo{pages}{1905--1916}.
\newblock


\bibitem[Arge(2002)]%
        {arge_external_2002}
\bibfield{author}{\bibinfo{person}{Lars Arge}.}
  \bibinfo{year}{2002}\natexlab{}.
\newblock \showarticletitle{External {{Memory Data Structures}}}.
\newblock In \bibinfo{booktitle}{\emph{Handbook of {{Massive Data Sets}}}},
  \bibfield{editor}{\bibinfo{person}{James Abello}, \bibinfo{person}{Panos~M.
  Pardalos}, {and} \bibinfo{person}{Mauricio G.~C. Resende}} (Eds.).
  \bibinfo{publisher}{{Springer US}}, \bibinfo{address}{{Boston, MA}},
  \bibinfo{pages}{313--357}.
\newblock
\showISBNx{978-1-4615-0005-6}
\urldef\tempurl%
\url{https://doi.org/10.1007/978-1-4615-0005-6_9}
\showDOI{\tempurl}


\bibitem[Arge and Vahrenhold(2004)]%
        {arge_oefficient_2004}
\bibfield{author}{\bibinfo{person}{Lars Arge} {and} \bibinfo{person}{Jan
  Vahrenhold}.} \bibinfo{year}{2004}\natexlab{}.
\newblock \showarticletitle{I/{{O}}-Efficient Dynamic Planar Point Location}.
\newblock \bibinfo{journal}{\emph{Computational Geometry}}
  \bibinfo{volume}{29}, \bibinfo{number}{2} (\bibinfo{date}{Oct.}
  \bibinfo{year}{2004}), \bibinfo{pages}{147--162}.
\newblock
\showISSN{0925-7721}
\urldef\tempurl%
\url{https://doi.org/10.1016/j.comgeo.2003.04.001}
\showDOI{\tempurl}


\bibitem[Bagchi et~al\mbox{.}(2007)]%
        {bagchi_deterministic_2007}
\bibfield{author}{\bibinfo{person}{Amitabha Bagchi}, \bibinfo{person}{Amitabh
  Chaudhary}, \bibinfo{person}{David Eppstein}, {and}
  \bibinfo{person}{Michael~T. Goodrich}.} \bibinfo{year}{2007}\natexlab{}.
\newblock \showarticletitle{Deterministic Sampling and Range Counting in
  Geometric Data Streams}.
\newblock \bibinfo{journal}{\emph{ACM Transactions on Algorithms}}
  \bibinfo{volume}{3}, \bibinfo{number}{2} (\bibinfo{date}{May}
  \bibinfo{year}{2007}).
\newblock
\showISSN{1549-6325}
\urldef\tempurl%
\url{https://doi.org/10.1145/1240233.1240239}
\showDOI{\tempurl}


\bibitem[Barve et~al\mbox{.}(2000)]%
        {barve_applicationcontrolled_2000}
\bibfield{author}{\bibinfo{person}{Rakesh~D. Barve}, \bibinfo{person}{Edward~F.
  Grove}, {and} \bibinfo{person}{Jeffrey~Scott Vitter}.}
  \bibinfo{year}{2000}\natexlab{}.
\newblock \showarticletitle{Application-Controlled Paging for a Shared Cache}.
\newblock \bibinfo{journal}{\emph{SIAM J. Comput.}} \bibinfo{volume}{29},
  \bibinfo{number}{4} (\bibinfo{date}{Jan.} \bibinfo{year}{2000}),
  \bibinfo{pages}{1290--1303}.
\newblock
\showISSN{0097-5397}
\urldef\tempurl%
\url{https://doi.org/10.1137/S0097539797324278}
\showDOI{\tempurl}


\bibitem[Bender et~al\mbox{.}(2020)]%
        {bender_closing_2020}
\bibfield{author}{\bibinfo{person}{Michael~A. Bender},
  \bibinfo{person}{Rezaul~A. Chowdhury}, \bibinfo{person}{Rathish Das},
  \bibinfo{person}{Rob Johnson}, \bibinfo{person}{William Kuszmaul},
  \bibinfo{person}{Andrea Lincoln}, \bibinfo{person}{Quanquan~C. Liu},
  \bibinfo{person}{Jayson Lynch}, {and} \bibinfo{person}{Helen Xu}.}
  \bibinfo{year}{2020}\natexlab{}.
\newblock \showarticletitle{Closing the {{Gap Between Cache}}-Oblivious and
  {{Cache}}-Adaptive {{Analysis}}}. In \bibinfo{booktitle}{\emph{Proceedings of
  the {{ACM Symposium}} on {{Parallelism}} in {{Algorithms}} and
  {{Architectures}}}}. \bibinfo{publisher}{{ACM}}, \bibinfo{address}{{Virtual
  Event USA}}, \bibinfo{pages}{63--73}.
\newblock
\showISBNx{978-1-4503-6935-0}
\urldef\tempurl%
\url{https://doi.org/10.1145/3350755.3400274}
\showDOI{\tempurl}


\bibitem[Bender et~al\mbox{.}(2014)]%
        {bender_cacheadaptive_2014}
\bibfield{author}{\bibinfo{person}{Michael~A. Bender}, \bibinfo{person}{Roozbeh
  Ebrahimi}, \bibinfo{person}{Jeremy~T. Fineman}, \bibinfo{person}{Golnaz
  Ghasemiesfeh}, \bibinfo{person}{Rob Johnson}, {and} \bibinfo{person}{Samuel
  McCauley}.} \bibinfo{year}{2014}\natexlab{}.
\newblock \showarticletitle{Cache-Adaptive {{Algorithms}}}. In
  \bibinfo{booktitle}{\emph{Proceedings of the {{ACM}}-{{SIAM Symposium}} on
  {{Discrete Algorithms}}}}. \bibinfo{publisher}{{Society for Industrial and
  Applied Mathematics}}, \bibinfo{address}{{Philadelphia, PA, USA}},
  \bibinfo{pages}{958--971}.
\newblock
\showISBNx{978-1-61197-338-9}


\bibitem[Bender et~al\mbox{.}(2007)]%
        {bender_cacheoblivious_2007}
\bibfield{author}{\bibinfo{person}{Michael~A. Bender}, \bibinfo{person}{Martin
  {Farach-Colton}}, \bibinfo{person}{Jeremy~T. Fineman},
  \bibinfo{person}{Yonatan~R. Fogel}, \bibinfo{person}{Bradley~C. Kuszmaul},
  {and} \bibinfo{person}{Jelani Nelson}.} \bibinfo{year}{2007}\natexlab{}.
\newblock \showarticletitle{Cache-Oblivious Streaming b-Trees}. In
  \bibinfo{booktitle}{\emph{Proceedings of the {{ACM Symposium}} on {{Parallel
  Algorithms}} and {{Architectures}}}}. \bibinfo{publisher}{{ACM}},
  \bibinfo{address}{{New York, NY, USA}}, \bibinfo{pages}{81--92}.
\newblock
\showISBNx{978-1-59593-667-7}
\urldef\tempurl%
\url{https://doi.org/10.1145/1248377.1248393}
\showDOI{\tempurl}


\bibitem[Bentley(1979)]%
        {bentley_decomposable_1979}
\bibfield{author}{\bibinfo{person}{Jon~Louis Bentley}.}
  \bibinfo{year}{1979}\natexlab{}.
\newblock \showarticletitle{Decomposable Searching Problems}.
\newblock \bibinfo{journal}{\emph{Inform. Process. Lett.}} \bibinfo{volume}{8},
  \bibinfo{number}{5} (\bibinfo{date}{June} \bibinfo{year}{1979}),
  \bibinfo{pages}{244--251}.
\newblock
\showISSN{0020-0190}
\urldef\tempurl%
\url{https://doi.org/10.1016/0020-0190(79)90117-0}
\showDOI{\tempurl}


\bibitem[Bentley and Saxe(1980)]%
        {bentley_decomposable_1980}
\bibfield{author}{\bibinfo{person}{Jon~Louis Bentley} {and}
  \bibinfo{person}{James~B Saxe}.} \bibinfo{year}{1980}\natexlab{}.
\newblock \showarticletitle{Decomposable Searching Problems {{I}}.
  {{Static}}-to-Dynamic Transformation}.
\newblock \bibinfo{journal}{\emph{Journal of Algorithms}} \bibinfo{volume}{1},
  \bibinfo{number}{4} (\bibinfo{date}{Dec.} \bibinfo{year}{1980}),
  \bibinfo{pages}{301--358}.
\newblock
\showISSN{0196-6774}
\urldef\tempurl%
\url{https://doi.org/10.1016/0196-6774(80)90015-2}
\showDOI{\tempurl}


\bibitem[Borodin and El-Yaniv(1998)]%
        {borodin1998online}
\bibfield{author}{\bibinfo{person}{Allan Borodin} {and} \bibinfo{person}{Ran
  El-Yaniv}.} \bibinfo{year}{1998}\natexlab{}.
\newblock \bibinfo{booktitle}{\emph{Online Computation and Competitive
  Analysis}}.
\newblock \bibinfo{publisher}{Cambridge University Press},
  \bibinfo{address}{USA}.
\newblock
\showISBNx{0521563925}


\bibitem[Bortnikov et~al\mbox{.}(2018)]%
        {bortnikov_accordion_2018}
\bibfield{author}{\bibinfo{person}{Edward Bortnikov},
  \bibinfo{person}{Anastasia Braginsky}, \bibinfo{person}{Eshcar Hillel},
  \bibinfo{person}{Idit Keidar}, {and} \bibinfo{person}{Gali Sheffi}.}
  \bibinfo{year}{2018}\natexlab{}.
\newblock \showarticletitle{Accordion: Better Memory Organization for {{LSM}}
  Key-Value Stores}.
\newblock \bibinfo{journal}{\emph{Proceedings of the VLDB Endowment}}
  \bibinfo{volume}{11}, \bibinfo{number}{12} (\bibinfo{date}{Aug.}
  \bibinfo{year}{2018}), \bibinfo{pages}{1863--1875}.
\newblock
\showISSN{21508097}
\urldef\tempurl%
\url{https://doi.org/10.14778/3229863.3229873}
\showDOI{\tempurl}


\bibitem[Brodal and Fagerberg(2003)]%
        {brodal_lower_2003}
\bibfield{author}{\bibinfo{person}{Gerth~Stolting Brodal} {and}
  \bibinfo{person}{Rolf Fagerberg}.} \bibinfo{year}{2003}\natexlab{}.
\newblock \showarticletitle{Lower Bounds for External Memory Dictionaries}. In
  \bibinfo{booktitle}{\emph{Proceedings of the {{ACM}}-{{SIAM Symposium}} on
  {{Discrete Algorithms}}}}. \bibinfo{publisher}{{Society for Industrial and
  Applied Mathematics}}, \bibinfo{address}{{Philadelphia, PA, USA}},
  \bibinfo{pages}{546--554}.
\newblock
\showISBNx{978-0-89871-538-5}


\bibitem[Brodal et~al\mbox{.}(2005)]%
        {brodal_cacheaware_2005}
\bibfield{author}{\bibinfo{person}{Gerth St\o~lting Brodal},
  \bibinfo{person}{Rolf Fagerberg}, {and} \bibinfo{person}{Gabriel Moruz}.}
  \bibinfo{year}{2005}\natexlab{}.
\newblock \showarticletitle{Cache-Aware and Cache-Oblivious Adaptive Sorting}.
  In \bibinfo{booktitle}{\emph{Automata, {{Languages}} and {{Programming}}}}
  \emph{(\bibinfo{series}{Lecture {{Notes}} in {{Computer Science}}})}.
  \bibinfo{publisher}{{Springer, Berlin, Heidelberg}},
  \bibinfo{pages}{576--588}.
\newblock
\showISBNx{978-3-540-27580-0 978-3-540-31691-6}
\urldef\tempurl%
\url{https://doi.org/10.1007/11523468_47}
\showDOI{\tempurl}


\bibitem[Br{\"o}nnimann et~al\mbox{.}(2004)]%
        {bronnimann_inplace_2004}
\bibfield{author}{\bibinfo{person}{Herv{\'e} Br{\"o}nnimann},
  \bibinfo{person}{Timothy~M. Chan}, {and} \bibinfo{person}{Eric~Y. Chen}.}
  \bibinfo{year}{2004}\natexlab{}.
\newblock \showarticletitle{Towards {{In}}-Place {{Geometric Algorithms}} and
  {{Data Structures}}}. In \bibinfo{booktitle}{\emph{Proceedings of the
  {{Symposium}} on {{Computational Geometry}}}}. \bibinfo{publisher}{{ACM}},
  \bibinfo{address}{{New York, NY, USA}}, \bibinfo{pages}{239--246}.
\newblock
\showISBNx{978-1-58113-885-6}
\urldef\tempurl%
\url{https://doi.org/10.1145/997817.997854}
\showDOI{\tempurl}


\bibitem[Buchbinder et~al\mbox{.}(2014)]%
        {buchbinder2014competitive}
\bibfield{author}{\bibinfo{person}{Niv Buchbinder}, \bibinfo{person}{Shahar
  Chen}, {and} \bibinfo{person}{Joseph Naor}.} \bibinfo{year}{2014}\natexlab{}.
\newblock \showarticletitle{Competitive analysis via regularization}. In
  \bibinfo{booktitle}{\emph{{Proceedings of the ACM-SIAM Symposium on Discrete
  Algorithms}}}. SIAM, \bibinfo{pages}{436--444}.
\newblock


\bibitem[Buchbinder and Naor(2009)]%
        {buchbinder_design_2009}
\bibfield{author}{\bibinfo{person}{Niv Buchbinder} {and}
  \bibinfo{person}{Joseph Naor}.} \bibinfo{year}{2009}\natexlab{}.
\newblock \showarticletitle{The Design of Competitive Online Algorithms via a
  Primal\textemdash Dual Approach}.
\newblock \bibinfo{journal}{\emph{Foundations and Trends\textregistered{} in
  Theoretical Computer Science}} \bibinfo{volume}{3},
  \bibinfo{number}{2\textendash 3} (\bibinfo{year}{2009}),
  \bibinfo{pages}{93--263}.
\newblock
\showISSN{1551-305X, 1551-3068}
\urldef\tempurl%
\url{https://doi.org/10.1561/0400000024}
\showDOI{\tempurl}


\bibitem[Chang et~al\mbox{.}(2008)]%
        {chang_bigtable_2008}
\bibfield{author}{\bibinfo{person}{Fay Chang}, \bibinfo{person}{Jeffrey Dean},
  \bibinfo{person}{Sanjay Ghemawat}, \bibinfo{person}{Wilson~C. Hsieh},
  \bibinfo{person}{Deborah~A. Wallach}, \bibinfo{person}{Mike Burrows},
  \bibinfo{person}{Tushar Chandra}, \bibinfo{person}{Andrew Fikes}, {and}
  \bibinfo{person}{Robert~E. Gruber}.} \bibinfo{year}{2008}\natexlab{}.
\newblock \showarticletitle{{{BigTable}}: A Distributed Storage System for
  Structured Data}.
\newblock \bibinfo{journal}{\emph{ACM Transactions on Computing Systems}}
  \bibinfo{volume}{26}, \bibinfo{number}{2} (\bibinfo{date}{June}
  \bibinfo{year}{2008}), \bibinfo{pages}{4:1--4:26}.
\newblock
\showISSN{0734-2071}
\urldef\tempurl%
\url{https://doi.org/10.1145/1365815.1365816}
\showDOI{\tempurl}


\bibitem[Chiang and Tamassia(1992)]%
        {chiang_dynamic_1992}
\bibfield{author}{\bibinfo{person}{Y. Chiang} {and} \bibinfo{person}{R.
  Tamassia}.} \bibinfo{year}{1992}\natexlab{}.
\newblock \showarticletitle{Dynamic Algorithms in Computational Geometry}.
\newblock \bibinfo{journal}{\emph{Proc. IEEE}} \bibinfo{volume}{80},
  \bibinfo{number}{9} (\bibinfo{date}{Sept.} \bibinfo{year}{1992}),
  \bibinfo{pages}{1412--1434}.
\newblock
\showISSN{0018-9219}
\urldef\tempurl%
\url{https://doi.org/10.1109/5.163409}
\showDOI{\tempurl}


\bibitem[Ciriani et~al\mbox{.}(2002)]%
        {ciriani_static_2002}
\bibfield{author}{\bibinfo{person}{V. Ciriani}, \bibinfo{person}{P. Ferragina},
  \bibinfo{person}{F. Luccio}, {and} \bibinfo{person}{S. Muthukrishnan}.}
  \bibinfo{year}{2002}\natexlab{}.
\newblock \showarticletitle{Static Optimality Theorem for External Memory
  String Access}. In \bibinfo{booktitle}{\emph{The 43rd {{Annual IEEE
  Symposium}} on {{Foundations}} of {{Computer Science}}, 2002.
  {{Proceedings}}.}} \bibinfo{pages}{219--227}.
\newblock
\urldef\tempurl%
\url{https://doi.org/10.1109/SFCS.2002.1181945}
\showDOI{\tempurl}


\bibitem[Cohen et~al\mbox{.}(2017)]%
        {cohen_minimal_2017}
\bibfield{author}{\bibinfo{person}{Reuven Cohen}, \bibinfo{person}{Liran
  Katzir}, {and} \bibinfo{person}{Aviv Yehezkel}.}
  \bibinfo{year}{2017}\natexlab{}.
\newblock \showarticletitle{A Minimal Variance Estimator for the Cardinality of
  Big Data Set Intersection}. In \bibinfo{booktitle}{\emph{Proceedings of the
  {{ACM SIGKDD International Conference}} on {{Knowledge Discovery}} and {{Data
  Mining}}}}. \bibinfo{publisher}{{ACM}}, \bibinfo{address}{{Halifax NS
  Canada}}, \bibinfo{pages}{95--103}.
\newblock
\showISBNx{978-1-4503-4887-4}
\urldef\tempurl%
\url{https://doi.org/10.1145/3097983.3097999}
\showDOI{\tempurl}


\bibitem[Corbett et~al\mbox{.}(2013)]%
        {corbett_spanner_2013}
\bibfield{author}{\bibinfo{person}{James~C. Corbett}, \bibinfo{person}{Jeffrey
  Dean}, \bibinfo{person}{Michael Epstein}, \bibinfo{person}{Andrew Fikes},
  \bibinfo{person}{Christopher Frost}, \bibinfo{person}{J.~J. Furman},
  \bibinfo{person}{Sanjay Ghemawat}, \bibinfo{person}{Andrey Gubarev},
  \bibinfo{person}{Christopher Heiser}, \bibinfo{person}{Peter Hochschild},
  \bibinfo{person}{Wilson Hsieh}, \bibinfo{person}{Sebastian Kanthak},
  \bibinfo{person}{Eugene Kogan}, \bibinfo{person}{Hongyi Li},
  \bibinfo{person}{Alexander Lloyd}, \bibinfo{person}{Sergey Melnik},
  \bibinfo{person}{David Mwaura}, \bibinfo{person}{David Nagle},
  \bibinfo{person}{Sean Quinlan}, \bibinfo{person}{Rajesh Rao},
  \bibinfo{person}{Lindsay Rolig}, \bibinfo{person}{Yasushi Saito},
  \bibinfo{person}{Michal Szymaniak}, \bibinfo{person}{Christopher Taylor},
  \bibinfo{person}{Ruth Wang}, {and} \bibinfo{person}{Dale Woodford}.}
  \bibinfo{year}{2013}\natexlab{}.
\newblock \showarticletitle{Spanner: {{Google}}'s Globally Distributed
  Database}.
\newblock \bibinfo{journal}{\emph{ACM Transactions on Computing Systems}}
  \bibinfo{volume}{31}, \bibinfo{number}{3} (\bibinfo{date}{Aug.}
  \bibinfo{year}{2013}), \bibinfo{pages}{8:1--8:22}.
\newblock
\showISSN{0734-2071}
\urldef\tempurl%
\url{https://doi.org/10.1145/2491245}
\showDOI{\tempurl}


\bibitem[Dayan et~al\mbox{.}(2017)]%
        {dayan_monkey_2017}
\bibfield{author}{\bibinfo{person}{Niv Dayan}, \bibinfo{person}{Manos
  Athanassoulis}, {and} \bibinfo{person}{Stratos Idreos}.}
  \bibinfo{year}{2017}\natexlab{}.
\newblock \showarticletitle{Monkey: Optimal Navigable Key-Value Store}. In
  \bibinfo{booktitle}{\emph{Proceedings of the {{ACM International Conference}}
  on {{Management}} of {{Data}}}}. \bibinfo{publisher}{{ACM}},
  \bibinfo{address}{{New York, NY, USA}}, \bibinfo{pages}{79--94}.
\newblock
\showISBNx{978-1-4503-4197-4}
\urldef\tempurl%
\url{https://doi.org/10.1145/3035918.3064054}
\showDOI{\tempurl}


\bibitem[DeCandia et~al\mbox{.}(2007)]%
        {decandia_dynamo_2007}
\bibfield{author}{\bibinfo{person}{Giuseppe DeCandia}, \bibinfo{person}{Deniz
  Hastorun}, \bibinfo{person}{Madan Jampani}, \bibinfo{person}{Gunavardhan
  Kakulapati}, \bibinfo{person}{Avinash Lakshman}, \bibinfo{person}{Alex
  Pilchin}, \bibinfo{person}{Swaminathan Sivasubramanian},
  \bibinfo{person}{Peter Vosshall}, {and} \bibinfo{person}{Werner Vogels}.}
  \bibinfo{year}{2007}\natexlab{}.
\newblock \showarticletitle{Dynamo: {{Amazon}}'s Highly Available Key-Value
  Store}. In \bibinfo{booktitle}{\emph{Proceedings of {{Twenty}}-First {{ACM
  SIGOPS Symposium}} on {{Operating Systems Principles}}}}.
  \bibinfo{publisher}{{ACM}}, \bibinfo{address}{{New York, NY, USA}},
  \bibinfo{pages}{205--220}.
\newblock
\showISBNx{978-1-59593-591-5}
\urldef\tempurl%
\url{https://doi.org/10.1145/1294261.1294281}
\showDOI{\tempurl}


\bibitem[Dent(2013)]%
        {dent_getting_2013}
\bibfield{author}{\bibinfo{person}{Andy Dent}.}
  \bibinfo{year}{2013}\natexlab{}.
\newblock \bibinfo{booktitle}{\emph{Getting {{Started}} with {{LevelDB}}}}.
\newblock \bibinfo{publisher}{{Packt Publishing Ltd}}.
\newblock
\showISBNx{978-1-78328-102-2}


\bibitem[Dong et~al\mbox{.}(2017)]%
        {dong_optimizing_2017}
\bibfield{author}{\bibinfo{person}{Siying Dong}, \bibinfo{person}{Mark
  Callaghan}, \bibinfo{person}{Leonidas Galanis}, \bibinfo{person}{Dhruba
  Borthakur}, \bibinfo{person}{Tony Savor}, {and} \bibinfo{person}{Michael
  Stumm}.} \bibinfo{year}{2017}\natexlab{}.
\newblock \showarticletitle{Optimizing Space Amplification in {{RocksDB}}}. In
  \bibinfo{booktitle}{\emph{{Proceedings of the Biennial Conference on
  Innovative Data Systems Research}}}. \bibinfo{pages}{3--12}.
\newblock


\bibitem[{D.T. Lee} and {F.P. Preparata}(1984)]%
        {lee_computational_1984}
\bibfield{author}{\bibinfo{person}{{D.T. Lee}} {and} \bibinfo{person}{{F.P.
  Preparata}}.} \bibinfo{year}{1984}\natexlab{}.
\newblock \showarticletitle{Computational {{Geometry}}\textemdash{{A Survey}}}.
\newblock \bibinfo{journal}{\emph{IEEE Trans. Comput.}} \bibinfo{volume}{C-33},
  \bibinfo{number}{12} (\bibinfo{date}{Dec.} \bibinfo{year}{1984}),
  \bibinfo{pages}{1072--1101}.
\newblock
\showISSN{0018-9340}
\urldef\tempurl%
\url{https://doi.org/10.1109/TC.1984.1676388}
\showDOI{\tempurl}


\bibitem[Feldman et~al\mbox{.}(2013)]%
        {feldman_turning_2013}
\bibfield{author}{\bibinfo{person}{Dan Feldman}, \bibinfo{person}{Melanie
  Schmidt}, {and} \bibinfo{person}{Christian Sohler}.}
  \bibinfo{year}{2013}\natexlab{}.
\newblock \showarticletitle{Turning Big Data into Tiny Data: Constant-Size
  Coresets for k-Means, {{PCA}} and Projective Clustering}. In
  \bibinfo{booktitle}{\emph{Proceedings of the {{ACM}}-{{SIAM Symposium}} on
  {{Discrete Algorithms}}}}. \bibinfo{publisher}{{Society for Industrial and
  Applied Mathematics}}, \bibinfo{address}{{Philadelphia, PA, USA}},
  \bibinfo{pages}{1434--1453}.
\newblock
\showISBNx{978-1-61197-251-1}


\bibitem[George(2011)]%
        {george_hbase_2011}
\bibfield{author}{\bibinfo{person}{Lars George}.}
  \bibinfo{year}{2011}\natexlab{}.
\newblock \bibinfo{booktitle}{\emph{{{HBase}}: {{The Definitive Guide}}:
  {{Random Access}} to {{Your Planet}}-{{Size Data}}}}.
\newblock \bibinfo{publisher}{{O'Reilly Media, Inc.}}
\newblock
\showISBNx{978-1-4493-1522-1}


\bibitem[Graefe(2010)]%
        {graefe_modern_2010}
\bibfield{author}{\bibinfo{person}{Goetz Graefe}.}
  \bibinfo{year}{2010}\natexlab{}.
\newblock \showarticletitle{Modern {{B}}-Tree Techniques}.
\newblock \bibinfo{journal}{\emph{Foundations and Trends in Databases}}
  \bibinfo{volume}{3}, \bibinfo{number}{4} (\bibinfo{year}{2010}),
  \bibinfo{pages}{203--402}.
\newblock
\showISSN{1931-7883, 1931-7891}
\urldef\tempurl%
\url{https://doi.org/10.1561/1900000028}
\showDOI{\tempurl}


\bibitem[{Har-Peled} and Mazumdar(2004)]%
        {har-peled_coresets_2004}
\bibfield{author}{\bibinfo{person}{Sariel {Har-Peled}} {and}
  \bibinfo{person}{Soham Mazumdar}.} \bibinfo{year}{2004}\natexlab{}.
\newblock \showarticletitle{On Coresets for K-Means and k-Median Clustering}.
  In \bibinfo{booktitle}{\emph{Proceedings of the {{ACM Symposium}} on
  {{Theory}} of {{Computing}}}}. \bibinfo{publisher}{{ACM}},
  \bibinfo{address}{{New York, NY, USA}}, \bibinfo{pages}{291--300}.
\newblock
\showISBNx{978-1-58113-852-8}
\urldef\tempurl%
\url{https://doi.org/10.1145/1007352.1007400}
\showDOI{\tempurl}
\newblock
\shownote{See also https://doi.org/10.48550/arXiv.1810.12826}.


\bibitem[Karlin et~al\mbox{.}(2003)]%
        {karlin_dynamic_2003}
\bibfield{author}{\bibinfo{person}{Anna~R. Karlin}, \bibinfo{person}{Claire
  Kenyon}, {and} \bibinfo{person}{Dana Randall}.}
  \bibinfo{year}{2003}\natexlab{}.
\newblock \showarticletitle{Dynamic {{TCP}} Acknowledgment and Other Stories
  about e/(e - 1)}.
\newblock \bibinfo{journal}{\emph{Algorithmica}} \bibinfo{volume}{36},
  \bibinfo{number}{3} (\bibinfo{date}{July} \bibinfo{year}{2003}),
  \bibinfo{pages}{209--224}.
\newblock
\showISSN{1432-0541}
\urldef\tempurl%
\url{https://doi.org/10.1007/s00453-003-1013-x}
\showDOI{\tempurl}


\bibitem[Kepner et~al\mbox{.}(2014)]%
        {kepner_achieving_2014}
\bibfield{author}{\bibinfo{person}{Jeremy Kepner}, \bibinfo{person}{William
  Arcand}, \bibinfo{person}{David Bestor}, \bibinfo{person}{Bill Bergeron},
  \bibinfo{person}{Chansup Byun}, \bibinfo{person}{Vijay Gadepally},
  \bibinfo{person}{Matthew Hubbell}, \bibinfo{person}{Peter. Michaleas},
  \bibinfo{person}{Julie Mullen}, \bibinfo{person}{Andrew Prout},
  \bibinfo{person}{Albert Reuther}, \bibinfo{person}{Antonio Rosa}, {and}
  \bibinfo{person}{Charles Yee}.} \bibinfo{year}{2014}\natexlab{}.
\newblock \showarticletitle{Achieving 100,000,000 Database Inserts per Second
  Using {{Accumulo}} and {{D4M}}}. In \bibinfo{booktitle}{\emph{2014 {{IEEE
  High Performance Extreme Computing Conference}} ({{HPEC}})}}.
  \bibinfo{pages}{1--6}.
\newblock
\urldef\tempurl%
\url{https://doi.org/10.1109/HPEC.2014.7040945}
\showDOI{\tempurl}


\bibitem[Ko and Aluru(2007)]%
        {ko_optimal_2007}
\bibfield{author}{\bibinfo{person}{Pang Ko} {and} \bibinfo{person}{Srinivas
  Aluru}.} \bibinfo{year}{2007}\natexlab{}.
\newblock \showarticletitle{Optimal Self-Adjusting Trees for Dynamic String
  Data in Secondary Storage}.
\newblock In \bibinfo{booktitle}{\emph{String {{Processing}} and {{Information
  Retrieval}}}}, \bibfield{editor}{\bibinfo{person}{Nivio Ziviani} {and}
  \bibinfo{person}{Ricardo {Baeza-Yates}}} (Eds.). Vol.~\bibinfo{volume}{4726}.
  \bibinfo{publisher}{{Springer Berlin Heidelberg}}, \bibinfo{address}{{Berlin,
  Heidelberg}}, \bibinfo{pages}{184--194}.
\newblock
\showISBNx{978-3-540-75529-6}
\urldef\tempurl%
\url{https://doi.org/10.1007/978-3-540-75530-2_17}
\showDOI{\tempurl}


\bibitem[Lakshman and Malik(2010)]%
        {lakshman_cassandra_2010}
\bibfield{author}{\bibinfo{person}{Avinash Lakshman} {and}
  \bibinfo{person}{Prashant Malik}.} \bibinfo{year}{2010}\natexlab{}.
\newblock \showarticletitle{Cassandra: A Decentralized Structured Storage
  System}.
\newblock \bibinfo{journal}{\emph{SIGOPS Operating Systems Review}}
  \bibinfo{volume}{44}, \bibinfo{number}{2} (\bibinfo{date}{April}
  \bibinfo{year}{2010}), \bibinfo{pages}{35--40}.
\newblock
\showISSN{0163-5980}
\urldef\tempurl%
\url{https://doi.org/10.1145/1773912.1773922}
\showDOI{\tempurl}


\bibitem[Levy and Tarjan(2019)]%
        {levy_new_2019}
\bibfield{author}{\bibinfo{person}{Caleb Levy} {and} \bibinfo{person}{Robert
  Tarjan}.} \bibinfo{year}{2019}\natexlab{}.
\newblock \showarticletitle{A New Path from Splay to Dynamic Optimality}. In
  \bibinfo{booktitle}{\emph{Proceedings of the {{ACM}}-{{SIAM Symposium}} on
  {{Discrete Algorithms}}}}. \bibinfo{publisher}{{Society for Industrial and
  Applied Mathematics}}, \bibinfo{address}{{Philadelphia, PA, USA}},
  \bibinfo{pages}{1311--1330}.
\newblock


\bibitem[Lim et~al\mbox{.}(2016)]%
        {lim_accurate_2016}
\bibfield{author}{\bibinfo{person}{Hyeontaek Lim}, \bibinfo{person}{David~G.
  Andersen}, {and} \bibinfo{person}{Michael Kaminsky}.}
  \bibinfo{year}{2016}\natexlab{}.
\newblock \showarticletitle{Towards Accurate and Fast Evaluation of Multi-Stage
  Log-Structured Designs}. In \bibinfo{booktitle}{\emph{Proceedings of the
  {{Usenix Conference}} on {{File}} and {{Storage Technologies}}}}.
  \bibinfo{publisher}{{USENIX Association}}, \bibinfo{address}{{Berkeley, CA,
  USA}}, \bibinfo{pages}{149--166}.
\newblock
\showISBNx{978-1-931971-28-7}


\bibitem[Luo and Carey(2020)]%
        {luo_lsmbased_2018}
\bibfield{author}{\bibinfo{person}{Chen Luo} {and} \bibinfo{person}{Michael~J.
  Carey}.} \bibinfo{year}{2020}\natexlab{}.
\newblock \showarticletitle{{LSM}-Based Storage Techniques: a survey}.
\newblock \bibinfo{journal}{\emph{{VLDB} J.}} \bibinfo{volume}{29},
  \bibinfo{number}{1} (\bibinfo{year}{2020}), \bibinfo{pages}{393--418}.
\newblock
\urldef\tempurl%
\url{https://doi.org/10.1007/s00778-019-00555-y}
\showDOI{\tempurl}


\bibitem[Mathieu et~al\mbox{.}(2021)]%
        {mathieu2021competitive}
\bibfield{author}{\bibinfo{person}{Claire Mathieu}, \bibinfo{person}{Rajmohan
  Rajaraman}, \bibinfo{person}{Neal~E Young}, {and} \bibinfo{person}{Arman
  Yousefi}.} \bibinfo{year}{2021}\natexlab{}.
\newblock \showarticletitle{Competitive data-structure dynamization}. In
  \bibinfo{booktitle}{\emph{Proceedings of the ACM-SIAM Symposium on Discrete
  Algorithms}}. SIAM, \bibinfo{pages}{2269--2287}.
\newblock


\bibitem[Mehlhorn(1981)]%
        {mehlhorn_lower_1981}
\bibfield{author}{\bibinfo{person}{Kurt Mehlhorn}.}
  \bibinfo{year}{1981}\natexlab{}.
\newblock \showarticletitle{Lower Bounds on the Efficiency of Transforming
  Static Data Structures into Dynamic Structures}.
\newblock \bibinfo{journal}{\emph{Mathematical systems theory}}
  \bibinfo{volume}{15}, \bibinfo{number}{1} (\bibinfo{date}{Dec.}
  \bibinfo{year}{1981}), \bibinfo{pages}{1--16}.
\newblock
\showISSN{1433-0490}
\urldef\tempurl%
\url{https://doi.org/10.1007/BF01786969}
\showDOI{\tempurl}


\bibitem[O'Neil et~al\mbox{.}(1996)]%
        {oneil_logstructured_1996}
\bibfield{author}{\bibinfo{person}{Patrick O'Neil}, \bibinfo{person}{Edward
  Cheng}, \bibinfo{person}{Dieter Gawlick}, {and} \bibinfo{person}{Elizabeth
  O'Neil}.} \bibinfo{year}{1996}\natexlab{}.
\newblock \showarticletitle{The Log-Structured Merge-Tree ({{LSM}}-Tree)}.
\newblock \bibinfo{journal}{\emph{Acta Informatica}} \bibinfo{volume}{33},
  \bibinfo{number}{4} (\bibinfo{date}{June} \bibinfo{year}{1996}),
  \bibinfo{pages}{351--385}.
\newblock
\showISSN{0001-5903, 1432-0525}
\urldef\tempurl%
\url{https://doi.org/10.1007/s002360050048}
\showDOI{\tempurl}


\bibitem[Overmars(1987)]%
        {overmars_design_1987}
\bibfield{author}{\bibinfo{person}{Mark~H. Overmars}.}
  \bibinfo{year}{1987}\natexlab{}.
\newblock \bibinfo{booktitle}{\emph{The Design of Dynamic Data Structures}}.
\newblock Number 156 in \bibinfo{series}{Lecture Notes in Computer Science}.
  \bibinfo{publisher}{{Springer}}, \bibinfo{address}{{Berlin}}.
\newblock
\showISBNx{978-3-540-12330-9 978-0-387-12330-1}


\bibitem[Pagh et~al\mbox{.}(2014)]%
        {pagh_minwise_2014}
\bibfield{author}{\bibinfo{person}{Rasmus Pagh}, \bibinfo{person}{Morten
  St{\"o}ckel}, {and} \bibinfo{person}{David~P. Woodruff}.}
  \bibinfo{year}{2014}\natexlab{}.
\newblock \showarticletitle{Is Min-Wise Hashing Optimal for Summarizing Set
  Intersection?}. In \bibinfo{booktitle}{\emph{Proceedings of the {{ACM
  SIGMOD}}-{{SIGACT}}-{{SIGART}} Symposium on {{Principles}} of Database
  Systems}}. \bibinfo{publisher}{{ACM Press}}, \bibinfo{address}{{Snowbird,
  Utah, USA}}, \bibinfo{pages}{109--120}.
\newblock
\showISBNx{978-1-4503-2375-8}
\urldef\tempurl%
\url{https://doi.org/10.1145/2594538.2594554}
\showDOI{\tempurl}


\bibitem[Rosenblum and Ousterhout(1991)]%
        {rosenblum_design_1991}
\bibfield{author}{\bibinfo{person}{Mendel Rosenblum} {and}
  \bibinfo{person}{John~K. Ousterhout}.} \bibinfo{year}{1991}\natexlab{}.
\newblock \showarticletitle{The Design and Implementation of a Log-Structured
  File System}. In \bibinfo{booktitle}{\emph{Proceedings of the {{ACM
  Symposium}} on {{Operating Systems Principles}}}}.
  \bibinfo{publisher}{{ACM}}, \bibinfo{address}{{New York, NY, USA}},
  \bibinfo{pages}{1--15}.
\newblock
\showISBNx{978-0-89791-447-5}
\urldef\tempurl%
\url{https://doi.org/10.1145/121132.121137}
\showDOI{\tempurl}


\bibitem[Severance and Lohman(1976)]%
        {severance_differential_1976}
\bibfield{author}{\bibinfo{person}{Dennis~G. Severance} {and}
  \bibinfo{person}{Guy~M. Lohman}.} \bibinfo{year}{1976}\natexlab{}.
\newblock \showarticletitle{Differential Files: Their Application to the
  Maintenance of Large Databases}.
\newblock \bibinfo{journal}{\emph{ACM Transaction on Database Systems}}
  \bibinfo{volume}{1}, \bibinfo{number}{3} (\bibinfo{date}{Sept.}
  \bibinfo{year}{1976}), \bibinfo{pages}{256--267}.
\newblock
\showISSN{0362-5915}
\urldef\tempurl%
\url{https://doi.org/10.1145/320473.320484}
\showDOI{\tempurl}


\bibitem[Sleator and Tarjan(1983)]%
        {sleator_selfadjusting_1983}
\bibfield{author}{\bibinfo{person}{Daniel~Dominic Sleator} {and}
  \bibinfo{person}{Robert~Endre Tarjan}.} \bibinfo{year}{1983}\natexlab{}.
\newblock \showarticletitle{Self-Adjusting Binary Trees}. In
  \bibinfo{booktitle}{\emph{Proceedings of the {{ACM Symposium}} on {{Theory}}
  of {{Computing}}}}. \bibinfo{publisher}{{ACM}}, \bibinfo{address}{{New York,
  NY, USA}}, \bibinfo{pages}{235--245}.
\newblock
\showISBNx{978-0-89791-099-6}
\urldef\tempurl%
\url{https://doi.org/10.1145/800061.808752}
\showDOI{\tempurl}


\bibitem[Staelin(2013)]%
        {staelin}
\bibfield{author}{\bibinfo{person}{Carl Staelin}.}
  \bibinfo{year}{2013}\natexlab{}.
\newblock \bibinfo{howpublished}{Personal communication}.
\newblock


\bibitem[{van Leeuwen} and Overmars(1981)]%
        {vanleeuwen_art_1981}
\bibfield{author}{\bibinfo{person}{Jan {van Leeuwen}} {and}
  \bibinfo{person}{Mark~H. Overmars}.} \bibinfo{year}{1981}\natexlab{}.
\newblock \showarticletitle{The Art of Dynamizing}. In
  \bibinfo{booktitle}{\emph{Mathematical {{Foundations}} of {{Computer
  Science}}}} \emph{(\bibinfo{series}{Lecture {{Notes}} in {{Computer
  Science}}})}, \bibfield{editor}{\bibinfo{person}{Jozef Gruska} {and}
  \bibinfo{person}{Michal Chytil}} (Eds.). \bibinfo{publisher}{{Springer Berlin
  Heidelberg}}, \bibinfo{pages}{121--131}.
\newblock
\showISBNx{978-3-540-38769-5}


\bibitem[Vitter(2008)]%
        {vitter_algorithms_2008}
\bibfield{author}{\bibinfo{person}{Jeffrey~Scott Vitter}.}
  \bibinfo{year}{2008}\natexlab{}.
\newblock \bibinfo{booktitle}{\emph{Algorithms and Data Structures for External
  Memory}}.
\newblock Number 2:4 in \bibinfo{series}{Foundations and Trends in Theoretical
  Computer Science}. \bibinfo{publisher}{{Now Publishers}},
  \bibinfo{address}{{Boston}}.
\newblock
\showISBNx{978-1-60198-106-6}


\bibitem[Yi(2012)]%
        {yi_dynamic_2012}
\bibfield{author}{\bibinfo{person}{Ke Yi}.} \bibinfo{year}{2012}\natexlab{}.
\newblock \showarticletitle{Dynamic Indexability and the Optimality of
  {{B}}-Trees}.
\newblock \bibinfo{journal}{\emph{J. ACM}} \bibinfo{volume}{59},
  \bibinfo{number}{4} (\bibinfo{date}{Aug.} \bibinfo{year}{2012}),
  \bibinfo{pages}{1--19}.
\newblock
\showISSN{00045411}
\urldef\tempurl%
\url{https://doi.org/10.1145/2339123.2339129}
\showDOI{\tempurl}


\bibitem[Young(2000)]%
        {young2000kmedians}
\bibfield{author}{\bibinfo{person}{Neal~E Young}.}
  \bibinfo{year}{2000}\natexlab{}.
\newblock \showarticletitle{K-medians, facility location, and the
  {Chernoff-Wald} bound}. In \bibinfo{booktitle}{\emph{Proceedings of the
  ACM-SIAM Symposium on Discrete Algorithms}}. \bibinfo{pages}{86--95}.
\newblock


\end{thebibliography}

\appendix

\section{Deferred proofs}

\begin{lemma}\label{lemma: min-sum binary transform}
  For the min-sum dynamization problem, the competitive ratio of the naive adaptation of Bentley’s binary transform, which treats each insertion as a size-1 item and applies the transform, is $\Omega(\log n)$.
\end{lemma}
\begin{proof}
  Consider an input sequence in which an item of weight $n^2$ is inserted in step 1, and an item of weight $\varepsilon$ for infinitesimally small $\varepsilon > 0$ is inserted in each step $t$, for $2 \le t \le n$.
  The naive adaptation of Bentley's binary transform ignores the weights and treats each insertion as a size-1 item.
  Recall that the binary transform maintains at most one component of size $2^i$ for each integer $i$.
  Since the input sequence inserts one item each step, for each $t$ that is a power of two, the binary transform has exactly one component of size $t$ immediately after step $t$ (for instance, see Figure~\ref{fig: binary transform}).
  Thus, each step $t$ that is a power of two incurs build cost at least $n^2$ (owing to the item of weight $n^2$).
  This yields a total build cost of $\Omega(n^2 \log n)$.

  An alternative solution, such as the one computed by the algorithm of Figure~\ref{fig: min-sum alg}, maintains at most two components, one consisting of the weight $n^2$ item and the other consisting of any remaining items.
  The build cost for step 1 is $n^2$.  For step $i$, $2 \le i \le n$, the build cost is $(i-1)\varepsilon$ since $i-1$ items of weight $\varepsilon$ are merged into a component.
  This yields a total build cost of at most $n^2(1 + \varepsilon/2)$.
  Since there are at most two components, the query cost is at most $2n$.
  We thus have an $\Omega(\log n)$ bound on the competitive ratio of the naive adaptation of the binary transform.
\end{proof}

\begin{lemma}\label{lemma: naive bs non-competitive}
  The naive generalization of Bentley and Saxe's $k$-binomial transform to $k$-Component Dynamization
  has competitive ratio $\Omega(k n^{1/k})$ for any $k\ge 2$.
\end{lemma}
\begin{proof}
  Recall that the naive algorithm treats each insertion $\inp_t$ as one size-1 item, then applies the $k$-binomial transform.
  Consider inserting a single item of weight 1, then $n-1$ single items of weight 0.
  The naive algorithm merges its largest component $\Theta(d)$ times where ${d\choose k} \approx n$, so $d = \Theta(k n^{1/k})$.
  Each such merge costs 1.
  So the naive algorithm incurs total cost $\Omega(k n^{1/k})$.

  The optimum keeps the weight-1 item in one component,
  then does all remaining merges into the other (size-zero) component,
  for total cost of 1.
\end{proof}

\end{document}